\newcommand{\myblue}[1]{\color{black}{#1}}
\newcommand{\apnum}[1]{\color{black}{#1}}
\newcommand{\apnumb}[1]{\color{black}{#1}}
\newcommand{\zblue}[1]{\color{black}{#1}}
\newcommand{\myD}{\mathbb{D}}
\newcommand{\myDin}{\myD_{\myin}}
\newcommand{\myDinc}{\myD_{\myout}^{\myinf}}
\newcommand{\myout}{\scalebox{0.7}{\text{out}}}
\newcommand{\rhoh}{\hat{\rho}}
\newcommand{\mysum}{\sideset{}{^{\boldsymbol{*}}}\sum}
\newcommand{\subalign}[1]{%
  \vcenter{%
    \Let@ \restore@math@cr \default@tag
    \baselineskip\fontdimen10 \scriptfont\tw@
    \advance\baselineskip\fontdimen12 \scriptfont\tw@
    \lineskip\thr@@\fontdimen8 \scriptfont\thr@@
    \lineskiplimit\lineskip
    \ialign{\hfil$\m@th\scriptstyle##$&$\m@th\scriptstyle{}##$\crcr
      #1\crcr
    }%
  }
}
\newcommand{\EQ}{\begin{equation}}
\newcommand{\EN}{\end{equation}}
\newcommand{\EQS}{\begin{equation*}}
\newcommand{\ENS}{\end{equation*}}
\newcommand{\EQA}{\begin{eqnarray}}
\newcommand{\ENA}{\end{eqnarray}}
\newcommand{\EQAS}{\begin{eqnarray*}}
\newcommand{\ENAS}{\end{eqnarray*}}
\newcommand{\ds}{\displaystyle}
\newlength{\dhatheight}
\newcommand{\myinf}{\scalebox{0.6}{$\infty$}}
\newcommand{\errorf}{\scalebox{0.9}{$\mathcal{E}_{f}$}}
\newcommand{\errorb}{\scalebox{0.9}{$\mathcal{E}_{b}$}}
\newcommand{\errorc}{\scalebox{0.9}{$\mathcal{E}_{c}$}}
\newcommand{\myin}{\scalebox{0.7}{\text{in}}}
\newcommand{\mymin}{\scalebox{0.55}{$\min$}}
\newcommand{\mymax}{\scalebox{0.55}{$\max$}}
\newcommand{\myx}{\scalebox{0.6}{$x$}}
\newcommand{\myy}{\scalebox{0.6}{$y$}}
\newcommand{\dtau}{\Delta \tau}
\newcommand{\taus}{\tau_{m}}
\newcommand{\x}{{\bf{x}}}
\newcommand{\Lcal}{\mathcal{L}}
\newcommand{\Ocal}{\mathcal{O}}
\newcommand{\Jcal}{\mathcal{J}}
\newcommand{\U}{{\bf{u}}}
\newcommand{\V}{{\bf{v}}}
\newcommand{\gb}{{\bf{g}}}
\newcommand{\Rbb}{\mathbb{R}}
\newcommand{\Nbb}{\mathbb{N}}
 \newcommand{\Jbb}{\mathbb{J}}
\newcommand{\Nd}{\mathbb{N}^{\dagger}}
\newcommand{\Jd}{\mathbb{J}^{\dagger}}
\newcommand{\Ebb}{\mathbb{E}}
\newcommand{\vh}{\hat{v}}
\newcommand{\Oinf}{\Omega^{\scalebox{0.5}{$\infty$}}}
\numberwithin{equation}{section}
\numberwithin{table}{section}
\numberwithin{figure}{section}
\newtheorem{definition}{Definition}
\newtheorem{theorem}{Theorem}
\newtheorem{lemma}{Lemma}
\newtheorem{remark}{Remark}
\newtheorem{assumption}{Assumption}
\newtheorem{corollary}{Corollary}
\numberwithin{definition}{section}
\numberwithin{theorem}{section}
\numberwithin{lemma}{section}
\numberwithin{remark}{section}
\numberwithin{assumption}{section}
\numberwithin{condition}{section}
\numberwithin{property}{section}
\numberwithin{proposition}{section}
\numberwithin{corollary}{section}
\numberwithin{algorithm}{section}
\def\r{\right}
\def\l{\left}
\def\f{\frac}
\def\s{\sqrt}
\newcommand{\bfs}{{\boldsymbol{s}}}
\newcommand{\bfx}{{\bf{x}}}
\newcommand{\bfz}{{\boldsymbol{z}}}
\newcommand{\tmu}{{\widetilde{\mu}}}
\newcommand{\tsig}{{\widetilde{\sigma}}}
\newcommand{\bmeta}{{\boldsymbol{\eta}}}
\newcommand{\bmbeta}{{\boldsymbol{\beta}}}
\newcommand{\bmxi}{{\boldsymbol{\xi}}}
\newcommand{\bmS}{{\boldsymbol{S}}}
\newcommand{\bmtmu}{{\boldsymbol{\widetilde{\mu}}}}
\newcommand{\bfC}{\boldsymbol{C}}
\newcommand{\bfCI}{\bfC^{-1}}
\newcommand{\bfbz}{\bmbeta+\bfz}
\newcommand{\bfA}{\boldsymbol{A}}
\newcommand{\bfal}{\boldsymbol{\alpha}}
\newcommand{\md}{d}
\renewcommand*\env@matrix[1][c]{\hskip -\arraycolsep
  \let\@ifnextchar\new@ifnextchar
  \array{*\c@MaxMatrixCols #1}}
\def\thm@space@setup{\thm@preskip=3pt
\thm@postskip=3pt}
\begin{document}

\title{Numerical analysis of American option pricing in a two-asset jump-diffusion model
}

\author{
Hao Zhou \thanks{School of Mathematics and Physics, The University of Queensland, St Lucia, Brisbane 4072, Australia,
email: \texttt{h.zhou3@uq.net.au}
}
\and
Duy-Minh Dang\thanks{School of Mathematics and Physics, The University of Queensland, St Lucia, Brisbane 4072, Australia,
email: \texttt{duyminh.dang@uq.edu.au}
}
}
\date{\today}
\maketitle
\begin{abstract}
This paper addresses an {\apnum{important}} gap in rigorous numerical treatments for pricing American options under correlated two-asset jump-diffusion models using the viscosity solution {\apnum{framework}}, with a particular focus on the Merton model.  The pricing of these options is governed by complex two-dimensional (2-D) variational inequalities
that incorporate cross-derivative terms and nonlocal integro-differential terms due to the presence of jumps. Existing numerical methods, primarily based on finite differences, often struggle with preserving monotonicity in the approximation of cross-derivatives-a key requirement for ensuring convergence to the viscosity solution. In addition, these methods face challenges in accurately discretizing 2-D jump integrals.

We introduce a novel approach to effectively tackle the aforementioned variational inequalities {\apnum{while seamlessly handling}} cross-derivative terms and nonlocal integro-differential terms through an efficient and straightforward-to-implement monotone integration scheme. Within each timestep, our approach {\apnum{explicitly enforces the inequality constraint}}, resulting in a 2-D Partial Integro-Differential Equation (PIDE) to solve. Its solution is then expressed as a 2-D convolution integral involving the Green's function of the PIDE. We derive an infinite series representation of this Green's function, where each term is strictly positive and computable. This series facilitates the numerical approximation of the PIDE solution through a monotone integration method, such as the composite quadrature rule. To further enhance efficiency, we {\apnum{develop an efficient implementation}} of this monotone integration scheme via Fast Fourier Transforms, exploiting the Toeplitz matrix structure.

The proposed method is proved to be both $\ell_{\infty} $-stable and consistent in the viscosity sense, ensuring its convergence to the viscosity solution of the variational inequality. Extensive numerical results validate the effectiveness and robustness of our approach, highlighting its practical applicability and theoretical soundness.
\vspace{.1in}

\noindent
\noindent
{\bf{Keywords:}} American option pricing,  two-asset Merton jump-diffusion model, variational inequality, viscosity solution, monotone scheme, numerical integration
\vspace{.1in}

\noindent\noindent {\bf{AMS Subject Classification:}} 65D30, 65M12, 90C39, 49L25, 93E20, 91G20
\end{abstract}

\section{Introduction}
\label{sec:intro}
{\apnumb{In stochastic control problems, such as those arising in financial mathematics, value functions are often non-smooth, prompting the use of viscosity solutions \cite{crandall_ishii_lions1992, crandall1983viscosity, fleming2006controlled, pham1998optimal}. The framework for provable convergence numerical methods, established by Barles and Souganidis in \cite{barles-souganidis:1991}, requires
them  to be (i) $\ell_{\infty}$-stable, (ii) consistent, and (iii) monotone in the viscosity sense, assuming a strong comparison principle holds. Achieving monotonicity is often the most difficult criterion, and non-monotone schemes can fail to converge to viscosity solutions, leading to violations of the fundamental no-arbitrage principle in finance \cite{Oberman2006, pooley2003}.

While monotonicity is a well-established sufficient condition for convergence to the viscosity solution \cite{barles-souganidis:1991}, its necessity remains an open question. In some specialized settings (see, for example, \cite{warin2016, bokanowski2010convergence}), non-monotone schemes can still converge, yet no general result analogous to Barles–Souganidis \cite{barles-souganidis:1991} exists for such methods. Determining whether monotonicity is strictly required for convergence remains an important open problem in viscosity-solution-based numerical analysis. Nonetheless, monotone schemes remain the most reliable approach for ensuring convergence in complex settings.

Monotone finite difference schemes are typically constructed using positive coefficient discretization techniques \cite{wang08}, and rigorous convergence results exist for one-dimensional (1D) models, both with and without jumps \cite{yann04, forsyth_2002a}.
However, extending these results to multi-dimensional settings presents significant challenges, particularly when the underlying assets are correlated. In such cases, the local coordinate rotation of the computational stencil  improves stability and accuracy, but this technique is fairly complex and introduces significant computational overhead \cite{MaForsyth2015, clift2008numerical, bonnans2003consistency}. Moreover, accurate discretization of the nonlocal integro-differential terms arising from jumps remains a difficult task.

Fourier-based integration methods often rely on an analytical expression for the Fourier transform of the underlying transition density function or of an associated Green's function, as demonstrated in various studies \cite{Pavel2016, alonso-garcca_wood_ziveyi_2018, Huang2018, ForsythLabahn2017, lu2024semi, Fang2008, ruijter2012two, Ruijter2013}. Among these, the Fourier-cosine (COS) method \cite{Fang2008, Ruijter2013, ruijter2012two} is particularly notable for achieving high-order convergence in piecewise smooth problems. However, for general stochastic optimal control problems, such as asset allocation \cite{zhang2023monotone, DangForsyth2014}, which often involve complex and non-smooth structures, this high-order convergence is generally unattainable \cite{ForsythLabahn2017, lu2024semi}.

When applicable, Fourier-based integration methods offer advantages such as eliminating timestepping errors between intervention dates (e.g.\ rebalancing dates in asset allocation)  and naturally handling complex dynamics like jump-diffusion and regime-switching. However, a key drawback is their potential loss of monotonicity, which can lead to violations of the no-arbitrage principle in numerical value functions. This poses significant challenges in stochastic optimal control, where accuracy is crucial for optimal decision-making \cite{ForsythLabahn2017}. In the same vein of research, recent works on $\epsilon$-monotone Fourier methods for control problems in finance merit attention \cite{ForsythLabahn2017, online23, lu2024semi, LuDang2023}.

These challenges have motivated the development of strictly monotone numerical integration methods, such as the approach in \cite{zhang2023monotone} for asset allocation under jump-diffusion models within an impulse control framework with discrete rebalancing. Here, the stock index follows the 1D Merton \cite{merton1975} or Kou \cite{kou01} jump-diffusion dynamics, while the bond index evolves deterministically with a constant drift rate. Since the bond index dynamics lack a diffusion or jump operator, the asset allocation problem reduces to a weakly coupled system of one-dimensional (1D) equations linked through impulse control.

Akin to other Fourier-based  integration methods, the approach in \cite{zhang2023monotone} leverages a known closed-form expression for the Fourier transform of an associated 1D conditional density function. However, a key feature of this method is the derivation of an infinite series representation of this density function, in which each term is nonnegative and explicitly computable. This representation is used to advance each 1D equation in time, ensuring strict monotonicity. However, the method remains primarily suited for 1D problems within a discrete rebalancing setting where the other dimension is weakly coupled.

This work generalizes the strictly monotone integration methodology of \cite{zhang2023monotone} to a fully 2-D jump-diffusion setting, addressing the key challenge of cross-derivative terms and 2-D jumps. Furthermore, we establish its convergence to the viscosity solution in continuous time. We apply this framework to American option pricing under a correlated two-asset jump-diffusion model—an important yet computationally demanding problem in financial mathematics. While American options exhibit relatively weak nonlinearity, they provide a suitable setting to develop and assess strictly monotone numerical integration techniques in a fully 2-D framework, without the added complexity of more advanced financial applications, such as that of fully 2-D asset allocation.

\subsection{American options}
American options play a crucial role in financial markets and are widely traded for both hedging and speculative purposes. Unlike European options, which can only be exercised at expiration, American options allow exercise at any time before expiration. While this flexibility enhances their appeal across equities, commodities, and bonds, it also introduces substantial mathematical and computational challenges due to the absence of closed-form solutions in most cases.

From a mathematical perspective, American option pricing is formulated as a variational inequality due to its optimal stopping feature. The problem becomes particularly challenging when incorporating jumps, as the governing equations involve nonlocal integro-differential terms \cite{zhang1997numerical, pham1997optimal}. These complexities necessitate advanced numerical methods to achieve accurate valuations.

A common approach to tackling the variational inequality in American option pricing is to reformulate it as a partial (integro-)differential complementarity problem \cite{christara_2010, for98d, clift2008numerical, forsyth_2002a, boen2020operator}. This reformulation captures the early exercise feature through time-dependent complementarity conditions, effectively handling the boundary between the exercise and continuation regions. Finite difference techniques are widely used to solve these problems, resulting in nonlinear discretized equations that require iterative solvers such as the projected successive over-relaxation method (PSOR) \cite{cryer1971solution} and penalty methods \cite{for98d}. For models incorporating jumps, fixed-point iterations can be used to address the integral terms, as shown in \cite{clift2008numerical} for two-asset jump-diffusion models. Additionally, efficient operator splitting schemes—including implicit-explicit and alternating direction implicit types—have recently been proposed for American options under the two-asset Merton jump-diffusion model \cite{boen2020operator}, while the method of lines has been applied to the same problem under the two-asset Kou model \cite{hout2024efficient}.

An alternative approach to pricing American options is to approximate them using Bermudan options, a discrete-time counterpart, and analyze the convergence of the solution as the early exercise intervals shrink to zero. In this context, Fourier-based COS methods \cite{ruijter2012two} have been applied to  2\mbox{-}D Bermudan options.
However, for this application, the COS method requires particularly careful boundary tracking despite its high-order convergence for piecewise smooth problems.
Notably, as observed in \cite{ForsythLabahn2017, du2024fourier}, the COS method may produce negative option prices for short maturities, akin to cases where early exercise opportunities become more frequent. This phenomenon is closely tied to its potential loss of monotonicity, a limitation highlighted in previous discussions. Consequently, this approach may struggle to ensure convergence from Bermudan to American options.}}

\subsection{Main contributions}
This paper bridges the gap in existing numerical methods by introducing an efficient, straightforward-to-implement monotone integration scheme for the variational inequalities governing American options under the two-asset Merton jump-diffusion model. Our approach simultaneously handles cross-derivative terms and nonlocal integro-differential terms, simplifying the design of monotone schemes and ensuring convergence to the viscosity solution. In doing so, we address key computational challenges in current numerical techniques.

The main contributions of our paper are outlined below.
\begin{itemize}

\item[(i)]
We present the localized variational inequality for pricing American options under the two-asset Merton jump-diffusion model, formulated on an infinite domain {\apnum{in log-price variables}} with a finite interior and artificial boundary conditions. Using a probabilistic technique, we demonstrate that  the difference between the solutions of the localized and full-domain variational inequalities decreases exponentially {\apnum{with respect to the log-price domain size}}. In addition, we establish that the localized variational inequality satisfies a comparison result.

\item[(ii)]
We develop a monotone scheme for the variational inequality that explicitly enforces the inequality constraint. We solve a 2-D Partial Integro-Differential Equation (PIDE) at each timestep to approximate the continuation value, followed by an intervention action applied at the end of the timestep. Using the closed-form Fourier transform of the Green's function, we derive an infinite series representation where each term is non-negative. This enables the direct approximation of the PIDE's solutions via 2-D convolution integrals, using a monotone numerical integration method.

\item[(iii)]
We implement the monotone integration scheme efficiently by exploiting the Toeplitz matrix structure and using Fast Fourier Transforms (FFTs) combined with circulant convolution. The implementation process includes expanding the inner summation's convolution kernel into a circulant matrix, followed by transforming the double summation kernel into a circulant block structure. This allows the circulant matrix-vector product to be efficiently computed as a circulant convolution using 2-D FFTs.

\item[(iv)]
We prove that the proposed monotone scheme is both $\ell_{\infty}$-stable and consistent in the viscosity sense, ensuring pointwise convergence to the viscosity solution of the variational inequality as the discretization parameter approaches zero.

\item[(v)]
Extensive numerical results demonstrate strong agreement with benchmark solutions from published test cases, including those obtained via operator splitting methods,  establishing our method as a reference for validating numerical techniques.

\end{itemize}
{\apnum{While this work focuses on American option pricing under a correlated two-asset Merton jump-diffusion model, the core methodology—particularly the infinite series representation of the Green's function, where each term is non-negative—can be extended to other stochastic control problems. One such application is asset allocation with a stock index and a bond index. For discrete rebalancing, the 2-D extension is straightforward, as time advancement can be handled using a similar infinite series representation. For continuous rebalancing, a natural approach is to start from the discrete setting and leverage insights from \cite{zhang2023monotone} to analyze the limit as the rebalancing interval approaches zero.

The extension to the 2-D Kou model presents significant additional challenges due to its piecewise-exponential structure, leading to complex multi-region double integrals. We leave this extension for future work and refer the reader to Subsection~\ref{ssc:Koutwo}, where we discuss these difficulties and outline a potential neural network-based approach.
}}

The remainder of the paper is organized as follows. In Section~\ref{sc:VIs_VS}, we provide an overview of the two-asset Merton jump-diffusion model and present the corresponding variational inequality. We then define a localized version of this problem, incorporating boundary conditions for the sub-domains. Section~\ref{sc:green} introduces the associated Green's function and its infinite series representation. In Section~\ref{section:num}, we describe a simple, yet effective, monotone integration scheme based on a composite 2-D quadrature rule. Section~\ref{sc:conv} establishes the mathematical convergence of the proposed scheme to the viscosity solution of the localized variational inequality. Numerical results are discussed in Section~\ref{sec:num_test}, and finally, Section~\ref{sc:conclude} concludes the paper and outlines directions for future research.

\section{Variational inequalities and viscosity solution}
\label{sc:VIs_VS}
We consider a complete filtered probability space $(\mathfrak{S}, \mathfrak{F}, \mathfrak{F}_{0 \le t \le T}, \mathfrak{Q})$, which includes a sample space $\mathfrak{S}$, a sigma-algebra $\mathfrak{F}$, a filtration $\mathfrak{F}_{0 \le t \le T}$ for a finite time horizon $T > 0$, and a risk-neutral measure $\mathfrak{Q}$. For each $t \in [0, T]$,  $X_t$ and $Y_t$ represent the prices of two distinct underlying assets.
These price processes are modeled under the risk-neutral measure to follow jump-diffusion dynamics given by
\begin{linenomath}
 \begin{subequations}
 \label{eq:dynamics}
\begin{empheq}[left={\empheqlbrace}]{alignat=3}
&\frac{dX_t}{X_t} = \left(r - \lambda \kappa_{\myx} \right) dt + \sigma_{\myx} dW^{\myx}_t + d\l(\sum_{\iota=1}^{\pi_t} (\xi^{(\iota)}_{\myx}-1)\r), &\qquad ~X_0 = x_0>0,
\\
&\frac{dY_t}{Y_t} = \left(r - \lambda \kappa_{\myy} \right) dt + \sigma_{\myy} dW^{\myy}_t + d\l(\sum_{\iota=1}^{\pi_t} (\xi^{(\iota)}_{\myy}-1)\r), &\qquad Y_0 = y_0>0.
\label{eq:Z_dynamics*}
\end{empheq}
\end{subequations}
\end{linenomath}
Here, $r > 0$ denotes the risk-free interest rate, and $\sigma_x>0$ and $\sigma_y > 0$  represent the instantaneous volatility of the respective underlying asset. The processes $\{W_{t}^{x}\}_{t \in [0, T]}$ and $\{W_{t}^{y}\}_{t \in [0, T]}$ are two correlated Brownian motions, with $dW^{x}_t dW^{y}_t=\rho dt$, where $-1< \rho< 1$ is the correlation parameter. The process $\{\pi_t\}_{0\le t \le T}$ is a Poisson process with a constant finite intensity rate $\lambda\geq 0$. The random variables $\xi_{\myx}$ and $\xi_{\myy}$, representing the jump multipliers, are two correlated positive random variables with correlation coefficient $\hat{\rho}\in(-1,1)$.
 In  \eqref{eq:dynamics},  $\{\xi_{\myx}^{(\iota)}\}_{\iota = 1}^{\infty}$ and  $\{\xi_{\myy}^{(\iota)}\}_{\iota = 1}^{\infty}$ are independent and identically distributed (i.i.d.) random variables having the same distribution as $\xi_{\myx}$  and $\xi_{\myy}$, respectively; the quantities $\kappa_{\myx}=\Ebb\left[\xi_{\myx}-1\right]$ and
 $\kappa_{\myy}=\Ebb\left[\xi_{\myy}-1\right]$, where $\mathbb{E}[\cdot]$ is the expectation operator taken under
 the risk-neutral measure $\mathfrak{Q}$.

In this paper, we focus our attention on the Merton jump-diffusion model \cite{merton1975}, where the jump multiplier $\xi_{\myx}$ and $\xi_{\myy}$ subject to log-normal distribution, respectively. Specifically, we denote by $f(s_x,s_y)$ the joint density function of the random variable $\ln(\xi_{\myx})  \sim \text{Normal}\l(\tmu_{\myx}, \tsig_{\myx}^2\r)$ and $\ln(\xi_{\myy}) \sim \text{Normal}\l(\tmu_{\myy}, \tsig_{\myy}^2\r)$ with correlation $\rhoh$. Consequently, the joint probability density function (PDF) is given by
\begin{small}
\EQA
f(s_x,s_y)\!=\!\frac{1}{2 \pi  \tsig_{\myx} \tsig_{\myy}   \sqrt{1-\rhoh^2}}\exp
        \bigg(\frac{-1}{2(1 - \rhoh^2)}\bigg[
          \bigg(\frac{s_x-\tmu_{\myx}}{\tsig_{\myx}}\bigg)^2\!\! \!\!-\!
          2\rhoh\bigg(\frac{(s_x - \tmu_{\myx})(s_y - \tmu_{\myy})}{\tsig_{\myx}\tsig_{\myy}}\bigg)\!\!+\!\!
          \bigg(\frac{s_y - \tmu_{\myy}}{\tsig_{\myy}}\bigg)^2
        \bigg]
       \bigg).\label{eq: PDF p(psi) Merton}
\ENA
\end{small}

\subsection{Formulation}
\label{sc:formulation}
For the underlying process $(X_t,Y_t),~t\in [0,T]$, let $(a,b)$ be the state of the system.
We denoted by $v''(a, b,t)$ the time-$t$ no-arbitrage price of a two-asset American option contract
with maturity $T$ and payoff $\vh(a,b)$. It is established that $ v''(\cdot)$ is given by the optimal stopping problem \cite{jacka1991optimal,karatzas1988pricing,martini2000american,pham1998optimal, jaillet1990variational, zhang1997numerical}
\EQA
\label{eq:vf_opt_stop}
v''(a, b, t)=
\sup_{t \le \gamma \le T}\Ebb^{a, b}_{t}
\l[e^{-r(\gamma-t)}\vh'(X_{\gamma},Y_{\gamma})\r], \qquad (a,b,t)\in\Rbb_{+}^2\times [0,T].
\ENA
Here, $\gamma$ represents a stopping time; 
$\mathbb{E}^{x, y}_t$ denotes the conditional expectation under the risk-neutral measure $\mathfrak{Q}$, conditioned on $(X_t, Y_t) = (a, b)$. We focus on the put option case, where the payoff function $\vh'(\cdot)$ is bounded and continuous.

The methods of variational inequalities, originally developed in \cite{bensoussan1982applications}, are widely used for pricing American options, as evidenced by \cite{zhang1997numerical, jaillet1990variational, pham1997optimal, pham1998optimal}, among many other publications. The value function $v''(\cdot)$, defined in \eqref{eq:vf_opt_stop}, is known to be non-smooth, which prompts the use of the notion of viscosity solutions. This approach provides a powerful means for characterizing the value functions in stochastic control problems \cite{crandall_ishii_lions1992, crandall1983viscosity, crandall1984some, fleming2006controlled, pham1998optimal}.

It is well-established that the value function $v''(\cdot)$, defined in \eqref{eq:vf_opt_stop}, is the unique viscosity solution of a variational inequality as noted in \cite{oksendal1997viscosity,pham1997optimal, pham1998optimal}.
While the original references describe the variational inequality using the spatial variables $(a, b)$, our approach employs a logarithmic transformation for theoretical analysis and numerical method development.
Specifically, with $\tau = T - t$, and given positive values for $a$ and $b$, we apply the transformation $x = \ln(a)\in(-\infty,\infty)$ and $y = \ln(b)\in(-\infty,\infty)$.
With $\x = (x,y,\tau)$, we define $v'(\x) \equiv v'(x,y,\tau) = v''(e^x, e^y, T -t)$
and $\vh(x,y) = \vh'(e^x, e^y)$.
Consequently, $v'(\cdot)$ is the unique viscosity solution of the variational inequality given by
\begin{linenomath}
 \begin{subequations}
 \label{eq:VIs_log_full}
\begin{empheq} [left={\empheqlbrace}]{alignat=3}
\min\l\{ \partial v'/ \partial \tau-\Lcal v'-\Jcal v', v'-\hat{v}\r\}=0,&\qquad \x \in \Rbb^2  \times (0, T],
\label{eq:VI_log_full}
\\
\label{eq:boundary_log_full}
v'-\hat{v}=0,&\qquad \x \in \Rbb^{2}\times\{0\}.
\end{empheq}
\end{subequations}
\end{linenomath}
Here, the differential and jump operators $\Lcal(\cdot)$ and $\Jcal(\cdot)$ are defined as follows
\EQA
\Lcal \psi
&=&\frac{\sigma_{\myx}^2}{2} \frac{\partial^2 \psi}{\partial x^2}  + \bigg(r-\lambda\kappa_{\myx}-\frac{\sigma_{\myx}^2}{2}\bigg)\frac{\partial \psi}{\partial x} +\frac{\sigma_{\myy}^2}{2} \frac{\partial^2 \psi}{\partial y^2} + \bigg(r-\lambda\kappa_{\myy}-\frac{\sigma_{\myy}^2}{2}\bigg) \frac{\partial \psi}{\partial y} +\rho \sigma_{\myx}\sigma_{\myy} \frac{\partial^2 \psi}{\partial x\partial y}  - (r+\lambda)\psi,
\nonumber
\\
\Jcal \psi
&=&\lambda \iint_{\Rbb^2} \psi(x+s_x,y+s_y, \tau)~ f(s_x,s_y)~\md s_x\md s_y,
\label{eq:Operator_LJ_log}
\ENA
where $f(s_x,s_y)$ is the joint probability density function of $\l(\ln(\xi_{\myx}),\ln(\xi_{\myy})\r)$.

\subsection{Localization}
Under the log transformation, the formulation \eqref{eq:VIs_log_full} is  posed on an infinite spatial domain $\Rbb^2$.
For problem statement and convergence analysis of numerical schemes, we define a localized pricing problem with a finitem, open, spatial interior sub-domain, denoted by $\myD_{\myin} \subset \Rbb^2$. More specifically, with $x_{\mymin}< 0 < x_{\mymax}$ and $y_{\mymin}< 0 < y_{\mymax}$, where
$x_{\mymin}$,  $x_{\mymax}$, $|y_{\mymin}|$, and $y_{\mymax}$ are sufficiently large,
$\myD_{\myin}$ and its complement $\myDinc$ are respectively defined as follows
\EQ
\label{eq:regG}
\myDin \equiv (x_{\mymin},~ x_{\mymax}) \times (y_{\mymin}, y_{\mymax}), \quad \text{ and }
\quad
\myDinc = \Rbb^2\setminus \myDin.
\EN
Since the jump operator $\Jcal (\cdot)$ is non-local, computing the integral  \eqref{eq:Operator_LJ_log} for $(x, y) \in \myD_{\myin}$  typically requires knowledge of $v(\cdot)$ within the infinite outer boundary sub-domain $\myDinc$. Therefor, appropriate boundary conditions must be established for $\myDinc$. In the following, we define the definition domain and its sub-domains, discuss boundary conditions, and
investigate the impact of artificial boundary conditions on $v(\cdot)$.

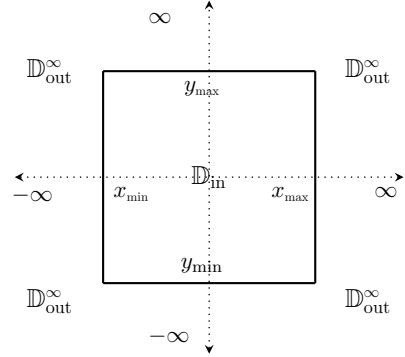
\begin{figure}[H]
\begin{minipage}{0.5\linewidth}
The definition domain comprises a finite sub-domain and an infinite boundary
sub-domain, defined as follows.
\begin{linenomath}
\begin{align}
\label{eq:sub_domain_whole}
\Oinf& = (-\infty, \infty) \times (-\infty, \infty)   \times [0, T],
\nonumber
\\
\Oinf_{\tau_0}
& = (-\infty, \infty) \times(-\infty, \infty)\times \{0\},
\\
\Omega_{\myin}  &=(x_{\mymin}, x_{\mymax}) \times(y_{\mymin}, y_{\mymax}) \times (0,T]
\equiv \myD_{\myin} \times (0,T],
\nonumber
\\
\Oinf_{\myout}
&=  \Oinf\setminus\Oinf_{\tau_0}
\setminus \Omega_{\myin} \equiv \myDinc  \times (0,T].
\nonumber
\end{align}
\end{linenomath}
For subsequent use, we also define the following region: $\Omega_{\tau_0}^{\myin}:= [x_{\mymin}, x_{\mymax}] \times [y_{\mymin}, y_{\mymax}] \times \{0\}$.
An illustration of the sub-domains for the localized problem corresponding to a fixed $\tau \in (0, T]$
is given in Figure~\ref{fig:domain}.
\end{minipage}
\hspace*{-0.5cm}
\begin{minipage}{0.48\linewidth}
\begin{center}
\begin{tikzpicture}[scale=0.47]
   \draw [dotted, line width=0.5pt] [stealth-stealth] (-1.5,2) -- (9.5,2);
    \draw [thick]  (1,5) -- (7,5);
   \draw [dotted, line width=0.5pt][stealth-stealth](4,-3) --(4,7);
   \draw [thick](7,-1) --(7,5);
   \draw [thick]  (1,-1) -- (1,5);
   \draw [thick]  (1,-1) -- (7,-1);
   \node at (4,2) {\scalebox{0.9}{$\myD_{\myin}$}};

%
    \node at (-0.5,5){\scalebox{0.9} {$\myDinc$}};
    \node at (-0.5,-1.5){\scalebox{0.9} {$\myDinc$}};

    \node at (8.5,-1.5) {\scalebox{0.9}{$\myDinc$}};
    \node at (8.5,5) {\scalebox{0.9}{$\myDinc$}};

   \node [below] at (-1,2) {\scalebox{0.8}{$-\infty$}};
   \node [below] at (9,2) {\scalebox{0.8}{$\infty$}};
   \node [right] at (2,-2.5) {\scalebox{0.8}{$-\infty$}};
   \node [right] at (2,6.5) {\scalebox{0.8}{$\infty$}};
   %
    \node [below] at (1.8,2) {\scalebox{0.8}{$x_{\mymin}$}};
    \node [below] at (6.3,2) {\scalebox{0.8} {$x_{\mymax}$}};
    \node [above] at (3.8,-1) {\scalebox{0.8}{$y_{\min}$}};
    \node [below] at (3.8,5) {\scalebox{0.8}{$y_{\mymax}$}};
\end{tikzpicture}
\end{center}
\vspace*{-0.5cm}
\caption{Spatial definition sub-domain at each $\tau\in [0, T]$.}
\label{fig:domain}
\end{minipage}
\end{figure}
\vspace*{-0.25cm}
\noindent For the outer boundary sub-domain $\Oinf_{\myout}$, boundary conditions are generally informed by financial reasonings or derived from the asymptotic behavior of the solution. In this study, we implement a straightforward Dirichlet boundary condition using a known bounded function $\hat{p}(\x)$ for $\x \in \Oinf_{\myout}$. Specifically, $\hat{p}(\x)$ belongs to the space of bounded functions $\mathcal{B}(\Oinf)$, which is defined as follows \cite{Barles2008,Seydel2009}
\EQ
\label{eq:G}
\begin{aligned}
\mathcal{B}(\Oinf) &= \big\{
\psi: \Oinf \to \mathbb{R},
~~\|\psi(\cdot)\|_{\infty} < \infty \big\}.
\end{aligned}
\EN
We denote by $v(\cdot)$ the function that solves the localized problem on $\Oinf$ with the initial and boundary condition given below
\begin{linenomath}
 \begin{subequations}
 \label{eq:VIs_log}
\begin{empheq} [left={\empheqlbrace}]{alignat=3}
\min\l\{ \partial v/ \partial \tau-\Lcal v-\Jcal v, v-\hat{v}\r\}=0,&\qquad \x \in \Omega_{\myin},
\label{eq:VI_log}
\\
v - \hat{p}  = 0, &\qquad \x \in \Oinf_{\myout},
\label{eq:inf_boundary_log}
\\
\label{eq:boundary_log}
v-\hat{v}=0,&\qquad \x \in \Oinf_{\tau_0}.
\end{empheq}
\end{subequations}
\end{linenomath}
The impact of the artificial boundary condition in $\Oinf_{\myout}$, as specified in \eqref{eq:inf_boundary_log},
on the solution within $\Omega_{\myin}$ is established in Lemma~\ref{lemma:bd_error} below.
For simplicity, in the lemma,
we assume that $x_{\mymax} = |x_{\mymin}| =  y_{\mymax} = |y_{\mymin}|$.
The proof can be generalized in a straightforward manner to accommodate
different values  for $x_{\mymin}$, $x_{\mymax}$, $y_{\mymin}$, and  $y_{\mymax}$.
 \begin{lemma}
\label{lemma:bd_error}
Assume that $\vh(\cdot)$ and $\hat{p}(\cdot)$ belong to $\mathcal{B}(\Oinf)$,
and that $A := x_{\mymax} = |x_{\mymin}| = y_{\mymax} = |y_{\mymin}|$.
Then, for $\x = (x, y, \tau) \in \Omega_{\myin}$, the difference between the solution
$v'(\cdot)$ and $v(\cdot)$ to their respective non-localized and localized variational inequalities
\eqref{eq:VIs_log_full} and \eqref{eq:VIs_log} is bounded by
\[
|v'(\x) - v(\x)| \le C(\tau) \l(\|\vh(\cdot)\|_{\infty} + \|\hat{p}(\cdot)\|_{\infty}\r) \l(e^{-(A - |x|)} + e^{-(A - |y|)}\r).
\]
Here, the constant $C(\tau)>0$ is bounded independently of $x_{\mymin}$, $x_{\mymax}$,
$y_{\mymin}$, and $y_{\mymax}$.
\end{lemma}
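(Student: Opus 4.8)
The plan is to use the optimal-stopping representations underlying both variational inequalities and to reduce the localization error to the probability that the log-price process exits the interior box $\myDin$ before the remaining time $\tau$ elapses. By \eqref{eq:vf_opt_stop}, $v'(\x)$ is the value of an optimal stopping problem for the correlated jump-diffusion log-price process $(X_s, Y_s)$ started at $(x, y)$, with bounded continuous payoff $\vh$. The localized solution $v$ of \eqref{eq:VIs_log} admits the analogous representation in which the process is stopped, and assigned the Dirichlet datum $\hat{p} \in \Bcal(\Oinf)$, at the first exit time
\[
\theta = \inf\{ s > 0 : (X_s, Y_s) \notin \myDin \}
\]
from the interior domain. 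Here exit may happen either continuously, when the diffusion reaches the boundary of $\myDin$, or discontinuously, when a jump of $(X_s, Y_s)$ lands in $\myDinc$; the latter is precisely the mechanism by which the nonlocal operator $\Jcal$ in \eqref{eq:VI_log} reads the exterior data $\hat{p}$. I would obtain this representation from a standard verification argument for variational inequalities with Dirichlet exterior conditions and nonlocal terms.

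Next, I would couple the two problems by driving both value functions with the \emph{same} process started at $(x,y)$. On the event $\{\theta > \tau\}$ the two optimal stopping problems are indistinguishable: up to time $\theta$ the dynamics, the discount factor, and the obstacle $\vh$ coincide, so any admissible stopping rule produces identical payoffs along paths that remain in $\myDin$ throughout $[0, \tau]$. The discrepancy is therefore confined to the event $\{\theta \le \tau\}$, on which one functional continues past the boundary while the other collects $\hat{p}$. Bounding these contributions by $\|\vh(\cdot)\|_{\infty}$ and $\|\hat{p}(\cdot)\|_{\infty}$ respectively (all discount factors are at most $1$), and taking suprema over $\epsilon$-optimal stopping rules in both directions, yields the stability estimate
\[
|v'(\x) - v(\x)| \le \l( \|\vh(\cdot)\|_{\infty} + \|\hat{p}(\cdot)\|_{\infty} \r)\, \Pbb^{x,y}\!\l[\, \theta \le \tau \,\r].
\]

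It then remains to bound the exit probability by the asserted exponential factor. Since leaving $\myDin = (-A, A)^2$ forces either $\sup_{s \le \tau} |X_s| \ge A$ or $\sup_{s \le \tau}|Y_s| \ge A$, a union bound splits the estimate into two one-dimensional tail probabilities. For the $x$-coordinate, $X_s - x$ is a Lévy process whose Laplace exponent $\varphi_{\myx}(\eta) = \log \Ebb[e^{\eta(X_1 - x)}]$ is finite for \emph{every} real $\eta$, because the Merton log-jumps $\ln(\xi_{\myx}) \sim \text{Normal}(\tmu_{\myx}, \tsig_{\myx}^2)$ possess all exponential moments. Consequently $M_s^{\pm} = \exp\!\l( \pm (X_s - x) - s\,\varphi_{\myx}(\pm 1) \r)$ are mean-one martingales, and Doob's maximal inequality applied with $\eta = +1$ and $\eta = -1$, together with $A - x \ge A - |x|$ and $A + x \ge A - |x|$, gives
\[
\Pbb^{x}\!\l[\, \sup_{s \le \tau} |X_s| \ge A \,\r] \le C_{\myx}(\tau)\, e^{-(A - |x|)},
\]
where $C_{\myx}(\tau)$ depends only on $\tau$ and on $\varphi_{\myx}(\pm 1)$. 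The identical argument in the $y$-coordinate produces the factor $C_{\myy}(\tau)\, e^{-(A - |y|)}$. Combining these with the stability estimate and setting $C(\tau) = \max\{C_{\myx}(\tau), C_{\myy}(\tau)\}$—which involves only the model parameters and $\tau$, and in particular is independent of $x_{\mymin}, x_{\mymax}, y_{\mymin}, y_{\mymax}$—delivers the stated bound.

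The hard part will be the rigorous justification of the coupling and the stability estimate in the presence of the nonlocal jump term. Unlike the pure-diffusion case, the process can leave $\myDin$ by jumping directly into the interior of $\myDinc$, so the verification argument must account for exit through $\myDinc$ rather than only through the boundary of $\myDin$, and must confirm that the localized variational inequality assigns value exactly $\hat{p}$ on such events. This, combined with the need to compare two optimal stopping problems whose free boundaries (and hence optimal rules) differ—which I would handle through $\epsilon$-optimal strategies and one-sided inequalities—is where essentially all the technical care lies; the exit-probability estimate itself is then a routine exponential-martingale computation.
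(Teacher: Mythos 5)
Your proposal is correct and follows essentially the same route as the paper's proof: both reduce the localization error to $(\|\hat v\|_\infty + \|\hat p\|_\infty)$ times the probability that the log-price process exits the box $(-A,A)^2$ before time $\tau$, split that event by a union bound into two one-dimensional supremum tails, and bound each tail exponentially using the finiteness of all exponential moments of the Merton log-jumps (the paper invokes Theorem 25.18 of Sato together with Markov's inequality, which is the same computation as your exponential-martingale/Doob argument). The only cosmetic difference is that the paper writes out the localized value function explicitly as a four-term indicator decomposition over the events $\{M_\tau^x \gtrless A\}\cap\{M_\tau^y \gtrless A\}$ rather than phrasing the comparison through $\epsilon$-optimal stopping rules.
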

A proof of Lemma~\ref{lemma:bd_error} is provided in  Appendix~\ref{app:bd_error}.
{\apnum{This lemma establishes that, in log-price coordinates, the error between the localized and full-domain variational inequality solutions decays exponentially as the spatial interior domain size increases, i.e.\ the truncation error is of order $\mathcal{O}(e^{-A})$
as $A \to \infty$.\footnote{{\apnum{Equivalently, if $A'$ represents the spatial domain size in the original price scale for both assets, the error is $\mathcal{O}(1/A')$.}}}}} The result is a local pointwise estimate, indicating that the localization error is more pronounced near the boundary. This rapid decay implies that smaller computational domains can be used, thereby significantly reducing computational costs.

The boundedness conditions $\|\vh(\cdot)\|_{\infty}<\infty$ and $\|\hat{p}(\cdot)\|_{\infty}<\infty$ are satisfied for standard put options and Dirichlet boundary conditions.
{\apnum{However, for payoffs that grow unbounded as $x, y \to +\infty$ (e.g.\ standard calls), a more refined analysis is required. Large-deviation techniques have been employed in pure-diffusion models (see \cite{barles1995convergence}) to estimate the probability of large excursions beyond the truncated domain. These techniques can also be adapted to certain (finite-activity) jump processes (see, for example, \cite{jakubowski2005levy}).
A rigorous extension of these ideas to jump-diffusion models with unbounded payoffs, while theoretically possible, is considerably more involved and lies beyond the scope of this work.
We therefore concentrate on bounded-payoff options—such as puts—where $\|\hat{v}(\cdot)\|_{\infty}$ remains finite.
}}

For the remainder of the analysis,  we choose the Dirichlet condition based on discounted payoff as follows
\EQA
\label{eq:vxymax}
\hat{p}(x, y, \tau) = \vh(x, y) e^{-r\tau}, \quad (x, y, \tau) \in \Omega_{\myout}.
\ENA
While more sophisticated boundary conditions might involve the asymptotic properties of the variational inequality \eqref{eq:VI_log_full} as $x, y \to -\infty$ or $x, y \to \infty$, our observations indicate that these sophisticated boundary conditions do not significantly impact the accuracy of the numerical solution within  $\Omega_{\myin}$.
This will be  illustrated through numerical experiments in Subsection~\ref{sec:constant_pad}.

\subsection{Viscosity solution and a comparison result}
We now write \eqref{eq:sub_domain_whole} in a compact form, which includes the terminal and boundary
conditions in a single equation. We let $Dv({\mathbf{x}})$ and
$D^2 v( {\mathbf{x}} )$ represent the first-order and second-order partial derivatives of $v\left( {\mathbf{x}} \right)$. The variational inequality \eqref{eq:VIs_log} can be expressed compactly as
\EQA
\label{eq:F}
0 = F\left({ \mathbf{x}},
                    v({\mathbf{x}}),
                    Dv({\mathbf{x}}),
                    D^2 v({\mathbf{x}}),
                    \mathcal{J} v({\mathbf{x}})
                \right) \equiv F\left({ \mathbf{x}}, v\right),
\ENA
where
\begin{linenomath}
 \begin{subequations}
 \label{eq:fall}
\begin{empheq}[left={F\left({ \mathbf{x}}, v\right)=\empheqlbrace}]{alignat=3}
&~F_{\myin} \left({ \mathbf{x}}, v\right) &&= \min\l\{ \partial v/ \partial \tau-\Lcal v-\Jcal v, v-\hat{v}\r\},&&\qquad\x \in \Omega_{\myin},
\label{eq:Finn}
\\
&~F_{\myout}\left({ \mathbf{x}}, v\right) &&=v - e^{-r\tau} \hat{v}, && \qquad\x \in \Oinf_{\myout},
\label{eq:Fout}
\\
&~F_{\tau_0}\left({ \mathbf{x}}, v\right) &&= v-\hat{v},&&\qquad \x \in \Oinf_{\tau_0}.
\label{eq:ftau0}
\end{empheq}
\end{subequations}
\end{linenomath}
For a locally bounded function $\psi: \mathbb{D} \rightarrow \mathbb{R}$, where $\mathbb{D}$ is a closed
subset of $\mathbb{R}^n$, we recall its upper semicontinuous (u.s.c.\ in short) and the lower semicontinuous (l.s.c.\ in short) envelopes given by
\EQA
\label{eq:envelop}
\psi^*({\bf{\hat{x}}}) = \limsup_{
    \subalign{{\bf{x}} &\to {\bf{\hat{x}}}
\\
{\bf{x}}, {\bf{\hat{x}}} &\in\mathbb{X}
}}
\psi({\bf{x}})
\quad
(\text{resp.}
\quad
\psi_*({\bf{\hat{x}}}) = \liminf_{
    \subalign{{\bf{x}} &\to {\bf{\hat{x}}}
\\
{\bf{x}}, {\bf{\hat{x}}} &\in\mathbb{X}
}}
\psi({\bf{x}})
).
\ENA
\begin{definition}[Viscosity solution of \eqref{eq:F}]
\label{def:vis}
(i) A locally bounded function  $v\in \mathcal{B}(\Oinf)$ is a viscosity supersolution of \eqref{eq:F} in $\Oinf$ if and only if for all test function $\phi \in  \mathcal{B}(\Oinf)\cap\mathcal{C}^{\infty}(\Oinf)$
and for all points ${\bf{\hat{x}}} \in \Oinf$ such that $(v_*-\phi)$ has a \emph{global} minimum on $\Oinf$ at ${\bf{\hat{x}}}$
and $v_*({\bf{\hat{x}}}) = \phi({\bf{\hat{x}}})$, we have
\EQA
\label{eq:Def1}
F^* \left({\bf{\hat{x}}}, \phi({\bf{\hat{x}}}), D\phi({\bf{\hat{x}}}), D^2 \phi({\bf{\hat{x}}}),
            \mathcal{J} \phi({\bf{\hat{x}}})
             \right) \geq   0. 
\ENA
Viscosity subsolutions are defined symmetrically.

(ii) A locally bounded function $v\in \mathcal{B}(\Oinf)$ is a viscosity solution of \eqref{eq:F} in  $\Omega_{\myin} \cup \Omega_{\tau_0}^{\myin}$ if $v$ is a viscosity subsolution and a viscosity supersolution in $\Omega_{\myin} \cup \Omega_{\tau_0}^{\myin}$.
\label{Def:viscosity_VIs}
\end{definition}
In the context of numerical solutions to degenerate parabolic equations in finance, the convergence to viscosity solutions is ensured when the scheme is stable, consistent, and monotone, provided that a comparison result holds \cite{jakobsen2006maximum, barles-rouy:1998,  Barles2008, barles-souganidis:1991, barles-burdeau:1995, barles95a, barles:1997}. Specifically,  stability, consistency and monotonicity facilitate the identification of u.s.c.\ subsolutions and l.s.c.\ supersolutions  through the respective use of $\limsup$  and $\liminf$ of the numerical solutions as the discretization parameter approaches zero.

Suppose $\underline{v}(\cdot)$ and $\overline{v}(\cdot)$ respectively denote such subsolution  and supersolution within a region, referred to as $\mathcal{S}$, where $\mathcal{S} = \mathbb{S} \times [0, T]$ for an open set $\mathbb{S} \subseteq \Rbb^2$. By construction using
 $\limsup$ for  $\underline{v}(\cdot)$ and $\liminf$ for $\overline{v}(\cdot)$,
and the nature of $\limsup \ge  \liminf $, we have $\underline{v}({\mathbf{x}}) \ge \overline{v}({\mathbf{x}})$ for all ${\mathbf{x}} \in \mathcal{S}$.
If a comparison result holds in $\mathcal{S}$, it means that $\underline{v}({\mathbf{x}}) \le \overline{v}({\mathbf{x}})$ for all ${\mathbf{x}} \in \mathcal{S}$. Therefore,
$v({\mathbf{x}}) = \underline{v}({\mathbf{x}}) =  \overline{v}({\mathbf{x}})$
is the unique, continuous viscosity solution within the region  $\mathcal{S}$.

It is established  that the full-domain variational inequality defined in \eqref{eq:VIs_log_full} satisfies a comparison result in 
\cite{Seydel2009,pham1998optimal,awatif1991equqtions,ishii1989uniqueness}.
Similarly, the localized variational inequality~\eqref{eq:F} also satisfies a comparison result,
as detailed in the lemma below.
We recall $\Oinf_{\myout}$ defined in \eqref{eq:sub_domain_whole}.
\begin{lemma}
\label{theorem:comparison}
Suppose that  a locally bounded and u.s.c.\ function $\underline{v}:\Oinf \to \mathbb{R}$ and a locally bounded
l.s.c.\ function $\overline{v}:\Oinf \to \mathbb{R}$ are, respectively, a viscosity subsolution and supersolution of \eqref{eq:F} in the sense of Definition~\ref{def:vis}.
If $\underline{v}({\bf{x}}) \leq \overline{v}({\bf{x}})$ for all ${\bf{x}} \in \Oinf_{\tau_0}$,
and similarly for all ${\bf{x}} \in \Oinf_{\myout}$,
then it follows that $\underline{v}(\x) \leq \overline{v}(\x)$ for all $\x \in \Omega_{\myin}$.
\end{lemma}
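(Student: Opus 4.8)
The plan is to argue by contradiction using the doubling-of-variables technique, adapted to the parabolic, nonlocal, obstacle structure of \eqref{eq:F}. Suppose the claim fails, so that
\[
M := \sup_{\Omega_{\myin}} \l(\underline{v}-\overline{v}\r) > 0.
\]
Since $\underline v$ is u.s.c.\ and $\overline v$ is l.s.c., the difference $\underline v-\overline v$ is u.s.c., and as $\overline{\myDin}$ is compact and $\underline v,\overline v\in\mathcal{B}(\Oinf)$ are bounded, this supremum is attained at some $\hat\x=(\hat x,\hat y,\hat\tau)$. The hypotheses $\underline v\le\overline v$ on $\Oinf_{\tau_0}$ and on $\Oinf_{\myout}$ (the latter containing the spatial boundary $\partial\myDin\times(0,T]$) force $\hat\x$ into the relatively open set $\Omega_{\myin}$, i.e.\ $(\hat x,\hat y)\in\myDin$ and $\hat\tau\in(0,T]$. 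The structural observation that drives everything is that the zeroth-order parts of $\Lcal$ and $\Jcal$ combine into an effective discount: since $\iint_{\Rbb^2}f=1$, the net coefficient is $-(r+\lambda)+\lambda=-r<0$, and this strictly negative constant is what will eventually yield a contradiction.

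Next I would double the variables. For $\x=(x,y,\tau)$ and $\mathbf w=(x',y',\tau')$, set
\[
\Phi_\varepsilon(\x,\mathbf w)=\underline v(\x)-\overline v(\mathbf w)-\tfrac{1}{2\varepsilon}\l(|x-x'|^2+|y-y'|^2+|\tau-\tau'|^2\r),
\]
and let $(\x_\varepsilon,\mathbf w_\varepsilon)$ maximize $\Phi_\varepsilon$ over $\Oinf\times\Oinf$. By the standard penalization lemma, $\tfrac1\varepsilon|\x_\varepsilon-\mathbf w_\varepsilon|^2\to0$, both points converge to the diagonal maximizer $\hat\x$, and $\underline v(\x_\varepsilon)-\overline v(\mathbf w_\varepsilon)\to M$; hence for $\varepsilon$ small both points lie in $\Omega_{\myin}$ with time-coordinates bounded away from $0$. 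I then split on the two branches of the obstacle problem. Because $\overline v$ is a supersolution, the min in \eqref{eq:Finn} being $\ge0$ gives in particular $\overline v\ge\hat v$ throughout $\Omega_{\myin}$. For the subsolution $\underline v$ the min is $\le0$, so either (Case~1) the obstacle is active, $\underline v(\x_\varepsilon)\le\hat v(x_\varepsilon,y_\varepsilon)$, whence $\underline v(\x_\varepsilon)-\overline v(\mathbf w_\varepsilon)\le\hat v(x_\varepsilon,y_\varepsilon)-\hat v(x'_\varepsilon,y'_\varepsilon)\to0$ by continuity of $\hat v$, contradicting $M>0$; or (Case~2) the PIDE branch holds in the viscosity sense, which is the substantive case.

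In Case~2 I would invoke the parabolic, nonlocal version of the Crandall--Ishii ``theorem on sums'' (see e.g.\ \cite{pham1998optimal,jakobsen2006maximum,barles-souganidis:1991}) to obtain time-slopes with $b_1-b_2=0$, a common spatial gradient $p_\varepsilon=\tfrac1\varepsilon(\x_\varepsilon-\mathbf w_\varepsilon)$, and symmetric matrices $X,Y$ whose block inequality, paired with the constant, positive semidefinite diffusion matrix $\Sigma$ (with $\Sigma_{11}=\sigma_{\myx}^2$, $\Sigma_{22}=\sigma_{\myy}^2$, $\Sigma_{12}=\Sigma_{21}=\rho\,\sigma_{\myx}\sigma_{\myy}$, semidefinite since $|\rho|<1$), yields $\Tr\l(\Sigma(X-Y)\r)\le0$; this is precisely where the cross-derivative term is controlled. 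The nonlocal term is handled by translation invariance of the penalty: for any jump $\mathbf s=(s_x,s_y)$, maximality of $\Phi_\varepsilon$ tested against the spatially shifted pair gives $\underline v(x_\varepsilon+s_x,y_\varepsilon+s_y,\tau_\varepsilon)-\underline v(\x_\varepsilon)\le\overline v(x'_\varepsilon+s_x,y'_\varepsilon+s_y,\tau'_\varepsilon)-\overline v(\mathbf w_\varepsilon)$; integrating against $\lambda f\,\md s_x\,\md s_y\ge0$ and using $\iint f=1$ produces
\[
\Jcal\underline v(\x_\varepsilon)-\Jcal\overline v(\mathbf w_\varepsilon)\le\lambda\l(\underline v(\x_\varepsilon)-\overline v(\mathbf w_\varepsilon)\r).
\]
Here the finite jump activity ($\lambda<\infty$) and integrability of $f$ make $\Jcal$ a bounded operator on $\mathcal{B}(\Oinf)$, so no small-jump truncation is needed and the viscosity inequalities may be taken with $\Jcal$ acting on the solutions themselves (the standard equivalence of the two nonlocal formulations).

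Finally, subtracting the supersolution PIDE inequality at $\mathbf w_\varepsilon$ from the subsolution PIDE inequality at $\x_\varepsilon$: the first-order terms cancel (constant drift and common $p_\varepsilon$), the second-order terms contribute $-\tfrac12\Tr\l(\Sigma(X-Y)\r)\ge0$, the nonlocal estimate turns the $(r+\lambda)$ zeroth-order term into the effective discount $r\,(\underline v(\x_\varepsilon)-\overline v(\mathbf w_\varepsilon))$, and $b_1-b_2=0$. Letting $\varepsilon\to0$ leaves $rM\le0$, contradicting $r>0$ and $M>0$; hence $M\le0$, i.e.\ $\underline v\le\overline v$ on $\Omega_{\myin}$. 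I expect the main obstacle to be the rigorous deployment of the nonlocal operator inside the doubling argument---namely securing the integro-differential theorem on sums together with the equivalence of the $\Jcal\phi$- and $\Jcal v$-formulations, and ensuring that the doubled maximizers localize in $\Omega_{\myin}$ despite the unbounded domain and the nonlocal coupling to $\Oinf_{\myout}$; by contrast, the cross-derivative is dispatched routinely through the positive semidefiniteness of $\Sigma$.
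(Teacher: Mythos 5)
The paper does not actually carry out a proof of this lemma: it states that ``the proof of the comparison result follows a similar approach to that in \cite{lu2024semi}'' and omits it. So there is no in-paper argument to match yours against; what you have written is the standard doubling-of-variables comparison proof for a nonlocal obstacle problem, and it is essentially the argument that such a citation is standing in for. Your overall structure is sound: the localization of the maximizer to $\Omega_{\myin}$ via the boundary ordering on $\Oinf_{\tau_0}$ and $\Oinf_{\myout}$ (using that $\partial\myD_{\myin}\times(0,T]\subset\Oinf_{\myout}$ and upper semicontinuity of $\underline v-\overline v$), the case split on the obstacle with $\overline v\ge\hat v$ from the supersolution property and continuity of $\hat v$ killing the active-obstacle case, the translation-invariance estimate $\Jcal\underline v(\x_\varepsilon)-\Jcal\overline v(\mathbf{w}_\varepsilon)\le\lambda\,(\underline v(\x_\varepsilon)-\overline v(\mathbf{w}_\varepsilon))$, the cross-derivative dispatched by positive semidefiniteness of the diffusion matrix for $|\rho|<1$, and the net discount $-(r+\lambda)+\lambda=-r<0$ producing $rM\le 0$ are all correct.

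Two technical points deserve more than the passing mention you give them. First, your jump estimate requires $(\x_\varepsilon,\mathbf{w}_\varepsilon)$ to be a \emph{global} maximizer of $\Phi_\varepsilon$ over $\Oinf\times\Oinf$, since the shifts $\mathbf{s}$ range over all of $\Rbb^2$ and in particular land in $\Oinf_{\myout}$; on the unbounded domain $\Rbb^2\times[0,T]$ a bounded u.s.c.\ function need not attain its supremum, so you must add a vanishing coercive penalization (e.g.\ $-\delta(|x|^2+|x'|^2)$, legitimate because $\underline v,\overline v\in\mathcal{B}(\Oinf)$) and track the extra $O(\delta)$ terms before letting $\delta\to0$. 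Once that is in place the translation-invariance inequality holds for every $\mathbf{s}$, including jumps into $\Oinf_{\myout}$, with only an $O(\delta)$ correction, and since $\lambda<\infty$ and $f\in L^1$ no small-jump splitting of $\Jcal$ is needed. Second, the replacement of $\Jcal\phi$ by $\Jcal$ acting on the semicontinuous envelopes of the solutions themselves (the equivalence of the two nonlocal viscosity formulations) and the nonlocal parabolic theorem on sums are exactly the content of \cite{Barles2008} and \cite{jakobsen2006maximum}, which you cite; invoking them explicitly, rather than the purely local Crandall--Ishii lemma, is what makes Case~2 rigorous. With those two repairs your proof is complete and is, as far as one can tell, the same route the deferred reference takes.
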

The proof of the comparison result follows a similar approach to that in \cite{lu2024semi}, and is therefore omitted here for brevity.

\section{An associated Green's function}
\label{sc:green}
Central to our numerical scheme for the variational inequality~\eqref{eq:F} is the Green's function of
an associated PIDE in the variables  $(x, y)$, analyzed independently of the constraints dictated by the variational inequality.
To facilitate this analysis, for a fixed $\Delta \tau>0$,
let $\tau'\ge 0$ be such that $\tau' +\Delta \tau < T$, and proceed to consider the 2-D PIDE:
\EQ
\label{eq:2dPIDEs*}
\partial u/\partial \tau -\Lcal u-\Jcal u = 0, \qquad (x,y,\tau) \in  \Rbb^2 \times (\tau', \tau'+\Delta \tau],
\EN
subject to the time-$\tau'$ initial condition  specified by a generic function $\tilde{u}(\cdot, \tau')$.
We denote by the function $g(\cdot, \Delta \tau) \equiv g(x, x',y,y', \Delta \tau)$ the Green's function associated with the PIDE \eqref{eq:2dPIDEs*}. The stochastic system described in \eqref{eq:dynamics} exhibits spatial homogeneity, which  leads to the spatial translation-invariance of both the differential operator $\Lcal(\cdot)$ and the jump operator $\Jcal(\cdot)$. As a result, the Green's function $g(x, x',y,y', \Delta \tau)$ depends only on the relative displacement between starting and ending spatial points,
thereby simplifying to $g(x- x',y- y',   \Delta \tau)$.

\subsection{An infinite series representation of $\boldsymbol{g\l(\cdot\r)}$ }
\label{sec:inf_s_g}
We let $G(\eta_x,\eta_y,\Delta \tau)$ be the Fourier transform of $g(x, y, \Delta \tau)$ with respect to
the spatial variables, i.e.\
\EQ
\label{eq:ft_pair}
 \left\{
\begin{array}{lll}
\mathfrak{F}|g(x,y,\cdot)| (\eta_x,\eta_y)&=G(\eta_x,\eta_y,\cdot)&=\displaystyle\iint_{\Rbb^2}e^{-i(\eta_x x+\eta_y y)}g(x,y,\cdot)\md x\md y,
\\
\mathfrak{F}^{-1}|G(\eta_x,\eta_y,\cdot)|(x,y)&=g(x,y,\cdot)&=\frac{1}{(2\pi)^2}\displaystyle\iint_{\Rbb^2}e^{i(\eta_x x+\eta_y y)}G(\eta_x,\eta_y,\cdot)\md\eta_x \md\eta_y.
\end{array}
\right.
\EN
 A closed-form expression for $G(\eta_x,\eta_y,\cdot)$ is given as follows \cite{ruijter2012two}
\begin{align}
    \label{eq:G_closed}
    &G(\eta_x,\eta_y,\cdot)=\exp\big(\Psi\big(\eta_x,\eta_y\big) \Delta \tau\big), \text{ where }
\\
&\Psi(\eta_x,\eta_y)= -\frac{\sigma_{\myx}^{2}\eta_x^2}{2}-\frac{\sigma_{\myy}^{2}\eta_y^2}{2} \!+\!
\big(r-\lambda\kappa_x-\frac{\sigma_{\myx}^2}{2}\big)i\eta_x\!+\!\big(r-\lambda\kappa_y\!-\!\frac{\sigma_{\myy}^2}{2}\big)i\eta_y-\rho\sigma_{\myx}\sigma_{\myy}\eta_x\eta_y-(r+\lambda)+\lambda\Gamma(\eta_x,\eta_y)
\nonumber.
\end{align}
Here, $\Gamma(\eta_x,\eta_y)=\iint_{\Rbb^2}f(s_x,s_y)~e^{i(\eta_x s_x+\eta_y s_y)}~\md s_x\md s_y$, where $f(s_x,s_y)$ is the joint probability density function of
random variables $\xi_x$ and $\xi_y$ given in \eqref{eq: PDF p(psi) Merton}.

For convenience, we define $\bfz=[x,y]$, $\bmeta=[\eta_x,\eta_y]$ and $\bfs=[s_x,s_y]$ are the column vectors, $\bfz\cdot\bmeta$ is the dot product of vectors $\bfz$ and $\bmeta$, $\bfz^{\top}$ is the transpose of a vector $\bfz$, and $\tilde{\bfC}$ is the covariance matrix of $x$ and $y$. The covariance matrix $\tilde{\bfC}$ and its inverse $\tilde{\bfC}^{-1}$ are respectively given as follows
\EQ
\label{eq:cov_m}
\tilde{\bfC} = \begin{bmatrix}
    \sigma_{\myx}^2 & \rho\sigma_{\myx}\sigma_{\myy} \\
    \rho\sigma_{\myx}\sigma_{\myy} & \sigma_{\myy}^2
\end{bmatrix},\quad
\tilde{\bfC}^{-1} = \frac{1}{\det(\tilde{\bfC})}\begin{bmatrix}
    \sigma_{\myy}^2 & -\rho\sigma_{\myx}\sigma_{\myy} \\
    -\rho\sigma_{\myx}\sigma_{\myy} & \sigma_{\myx}^2
    \end{bmatrix},\quad \text{where } \det(\tilde{\bfC})=\sigma_{\myx}^2\sigma_{\myy}^2(1-\rho^2).
\EN
For subsequent use, we express the function $G(\eta_x,\eta_y,\cdot)$ given in \eqref{eq:G_closed} in a compact matrix-vector form as follows
\begin{align}
    G(\bmeta,\cdot)=\exp(\Psi(\bmeta) \Delta \tau), \quad \text{ with } {\Psi(\bmeta)= \bigg(-\frac{1}{2}\bmeta^{\top}\tilde{\bfC}\bmeta+i\tilde{\bmbeta}\cdot\bmeta-(r+\lambda)+\lambda\Gamma(\bmeta)\bigg)},
     \label{eq:vmf_G_closed}
\end{align}
where $\Gamma(\bmeta)=\int_{\Rbb^2}f(\bfs)~e^{i\bfs\cdot\bmeta}~\md \bfs$, and  $\tilde{\bmbeta}=\l[(r-\lambda\kappa_x-\frac{\sigma_{\myx}^2}{2}),~(r-\lambda\kappa_y-\frac{\sigma_{\myy}^2}{2})\r]$ is the column vector. For brevity, we use $\int_{\bmeta \in \Rbb^2}(\cdot)~ d\bmeta$ to represent the 2-D integral $\iint_{\Rbb^2}(\cdot)~d\eta_xd\eta_y$.
\begin{lemma}
\label{lemma:series_g}
Let $g(\bfz,\cdot)$ and $G(\bmeta,\cdot)$ be a Fourier transform pair defined in \eqref{eq:ft_pair} and $G(\bmeta,\cdot)$ is given in \eqref{eq:vmf_G_closed}. Then, the Green's function $g(\bfz, \Delta \tau)$ can be expressed as
\begin{align}
g(\bfz, \Delta \tau) &=  \f{1}{ 2\pi\s{\det(\bfC)} } \sum_{k=0}^{\infty} g_k(\bfz,\Delta \tau), \text{ where }
\label{eq:g_proof_sum}
\\
 g_k(\bfz,\Delta \tau) &=  \f{\l(\lambda \Delta \tau\r)^{k}}{k!}
\int_{\Rbb^2} \ldots \int_{\Rbb^2}
            	\exp\l(\theta  -
	\f{\l(\bmbeta + \bfz +  \bmS_k\r)^{\top}\bfC^{-1}(\bmbeta + \bfz +  \bmS_k)}{2}\r)
    \l(\prod_{\ell = 1}^{k} f(\bfs_\ell)\r) \md \bfs_1 \ldots \md \bfs_k.
\nonumber
\end{align}
{\apnum{Here, $\bmS_k = \sum_{\ell=1}^{k} \bfs_{\ell} =\sum_{\ell=1}^{k}[s_x, s_y]_{\ell}$,
with $\bmS_{0} = [0, 0]$.}}
\end{lemma}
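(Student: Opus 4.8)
The plan is to compute $g(\bfz,\Delta \tau)$ directly from the inversion formula in \eqref{eq:ft_pair}, namely $g(\bfz,\Delta \tau) = \frac{1}{(2\pi)^2}\int_{\Rbb^2} e^{i\bmeta\cdot\bfz}\,G(\bmeta,\Delta \tau)\,d\bmeta$, and to extract the claimed series by expanding the jump part of $G$ as a Poisson-type power series. Using the closed form \eqref{eq:vmf_G_closed}, I would first split the exponent $\Psi(\bmeta)\Delta \tau$ into three factors:
$$
G(\bmeta,\Delta \tau) = \underbrace{e^{-(r+\lambda)\Delta \tau}}_{=\,e^{\theta}}\;\exp\!\Big(\!-\tfrac{\Delta \tau}{2}\bmeta^\top\tilde{\bfC}\bmeta + i\,\Delta \tau\,\tilde{\bmbeta}\cdot\bmeta\Big)\;\exp\!\big(\lambda\Delta \tau\,\Gamma(\bmeta)\big),
$$
where $\theta := -(r+\lambda)\Delta \tau$, the second factor is a Gaussian characteristic function with covariance $\bfC := \Delta \tau\,\tilde{\bfC}$ and shift $\bmbeta := \Delta \tau\,\tilde{\bmbeta}$, and the third factor carries all the jump dependence.

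The second step expands the jump factor as $\exp(\lambda\Delta \tau\,\Gamma(\bmeta)) = \sum_{k=0}^{\infty}\frac{(\lambda\Delta \tau)^k}{k!}\,\Gamma(\bmeta)^k$. Since $\Gamma(\bmeta) = \iint_{\Rbb^2} f(\bfs)\,e^{i\bfs\cdot\bmeta}\,d\bfs$ is the characteristic function of the jump law, multiplicativity of characteristic functions under convolution (i.e.\ $\Gamma^k$ is the transform of the $k$-fold convolution of $f$) gives $\Gamma(\bmeta)^k = \int_{\Rbb^2}\!\!\cdots\!\int_{\Rbb^2}\big(\prod_{\ell=1}^k f(\bfs_\ell)\big)\,e^{i\,\bmS_k\cdot\bmeta}\,d\bfs_1\cdots d\bfs_k$ with $\bmS_k = \sum_{\ell=1}^k\bfs_\ell$. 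Substituting this into the inversion integral and interchanging the sum, the $\bfs$-integrals, and the $\bmeta$-integral leaves, for each $k$ and each fixed $\bfs_1,\ldots,\bfs_k$, the Gaussian Fourier integral
$$
\frac{1}{(2\pi)^2}\int_{\Rbb^2} \exp\!\Big(i\,\bmeta\cdot(\bfz+\bmS_k+\bmbeta) - \tfrac12\bmeta^\top\bfC\,\bmeta\Big)\,d\bmeta = \frac{1}{2\pi\sqrt{\det(\bfC)}}\exp\!\Big(-\tfrac12(\bmbeta+\bfz+\bmS_k)^\top\bfC^{-1}(\bmbeta+\bfz+\bmS_k)\Big),
$$
by the standard multivariate Gaussian transform formula. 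Collecting the factor $e^{\theta}$, the prefactor $\frac{1}{2\pi\sqrt{\det(\bfC)}}$, and the summand then yields exactly $g_k(\bfz,\Delta \tau)$ as stated in \eqref{eq:g_proof_sum}.

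The only real technical point — and the step I would spend the most care on — is justifying the interchange of the infinite summation with the $(k+1)$-fold integration. The key facts I would use are: (i) $f$ is a probability density, so $\iint_{\Rbb^2} f(\bfs_\ell)\,d\bfs_\ell = 1$ and each $f\ge 0$; and (ii) $\Gamma$ is a characteristic function, so $|\Gamma(\bmeta)|\le 1$ pointwise. Hence the modulus of the full integrand over $\Rbb^2_{\bmeta}\times(\Rbb^2)^k_{\bfs}$ is dominated, after the $\bfs$-integrations, by $\frac{(\lambda\Delta \tau)^k}{k!}\,e^{-\frac{\Delta \tau}{2}\bmeta^\top\tilde{\bfC}\bmeta}$, and summing over $k$ gives the finite bound $e^{\lambda\Delta \tau}\iint_{\Rbb^2} e^{-\frac{\Delta \tau}{2}\bmeta^\top\tilde{\bfC}\bmeta}\,d\bmeta < \infty$ (here $\tilde{\bfC}$ is positive definite since $\det(\tilde{\bfC})=\sigma_{\myx}^2\sigma_{\myy}^2(1-\rho^2)>0$ for $-1<\rho<1$). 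Absolute convergence of this majorant lets me invoke Fubini–Tonelli to justify every interchange, and in particular guarantees termwise convergence of the series. I would close by noting that each $g_k$ is manifestly non-negative, being an integral of a strictly positive Gaussian against the non-negative product $\prod_\ell f(\bfs_\ell)$ with positive weight $(\lambda\Delta \tau)^k/k!$ — the property the subsequent monotone integration scheme relies on.
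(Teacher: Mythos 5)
Your proposal is correct and follows essentially the same route as the paper's own proof in Appendix~B: inversion of \eqref{eq:vmf_G_closed}, Taylor expansion of $e^{\lambda\Gamma(\bmeta)\Delta\tau}$ with $\Gamma(\bmeta)^k$ written as the $k$-fold product integral, Fubini to reorder, and the standard multivariate Gaussian integral identity to evaluate the $\bmeta$-integral. Your explicit dominating-function argument for the Fubini--Tonelli interchange is a welcome bit of extra rigor that the paper only asserts implicitly, but it does not change the substance of the argument.
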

A proof of Lemma~\ref{lemma:series_g} is provided in  Appendix~\ref{app:series_g}.
We emphasize that the infinite series representation in Lemma~\ref{lemma:series_g} does not rely on the specific form of the joint probability density function $f(\cdot)$, and thus it applies broadly to general two-asset jump-diffusion model. In the specific case of the two-asset Merton jump-diffusion model, where the joint probability density function $f(\cdot)$ is given by \eqref{eq: PDF p(psi) Merton}, the terms of the series can be explicitly evaluated, as detailed in the corollary below.
\begin{corollary}
\label{cor:twodis}
Let $\bmxi = [\xi_1,\xi_2]$ and $\bmtmu = [\tmu_1,\tmu_2]$.
 For the case $\ln(\bmxi) \sim \text{Normal}\l(\bmtmu, \bfC_{\mathcal{M}}\r)$
{\myblue{whose}} joint
PDF is given by \eqref{eq: PDF p(psi) Merton},
the infinite series representation of the conditional density $g(\bfz,\Delta \tau)$ given in Lemma~\ref{lemma:series_g}
is evaluated to $\ds g(\bfz, \Delta \tau) = g_0(\bfz, \Delta \tau) + \sum_{k=1}^{\infty} g_k(\bfz, \Delta \tau)$, where
\begin{align*}
 g_0(\cdot) = \f{\exp\big(\theta -
	\f{\l(\bmbeta + \bfz\r)^{\top}\bfC^{-1}(\bmbeta + \bfz)}{2}\big)}
{2\pi\s{ \det(\bfC) }}, \text{ and }~
g_k(\cdot) = \f{\l(\lambda \Delta \tau\r)^{k}}{k!} \, \f{\exp\big(\theta - \f{\l(\bmbeta + \bfz + k \bmtmu \r)^{\top}(\bfC+k\bfC_{\mathcal{M}})^{-1}\big(\bmbeta + \bfz + k \bmtmu \big)}{2}\big)}
{2\pi\s{\det(\bfC+k\bfC_{\mathcal{M}})}},
\end{align*}
with $\bfC = \Delta \tau \,\tilde{\bfC}$, $\bmbeta =  \Delta \tau\,\tilde{\bmbeta}$,
and $\theta = -(r+\lambda) \Delta \tau$.
\end{corollary}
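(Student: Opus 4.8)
The plan is to evaluate the $k$-fold two-dimensional integral appearing in Lemma~\ref{lemma:series_g} by recognizing it as an expectation over the law of the sum $\bmS_k = \sum_{\ell=1}^{k} \bfs_\ell$. The integrand $\exp\!\big(-\tfrac12(\bmbeta+\bfz+\bmS_k)^{\top}\bfC^{-1}(\bmbeta+\bfz+\bmS_k)\big)$ depends on $\bfs_1,\dots,\bfs_k$ only through their sum, while the product $\prod_{\ell=1}^{k} f(\bfs_\ell)$ is the joint density of $k$ independent copies of the bivariate normal $\mathcal{N}(\bmtmu,\bfCM)$ with PDF~\eqref{eq: PDF p(psi) Merton}. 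Hence the multiple integral collapses to a single integral of the Gaussian kernel against the density of $\bmS_k$. By the stability (convolution) property of Gaussians, $\bmS_k\sim\mathcal{N}(k\bmtmu,\,k\bfCM)$, so that density is bivariate normal with mean $k\bmtmu$ and covariance $k\bfCM$.

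First I would rewrite the remaining single integral over $\boldsymbol{w}\in\Rbb^2$ as a product of two Gaussians: the kernel $\exp\!\big(-\tfrac12(\bmbeta+\bfz+\boldsymbol{w})^{\top}\bfC^{-1}(\bmbeta+\bfz+\boldsymbol{w})\big)$ is, as a function of $\boldsymbol{w}$, an unnormalized Gaussian with mean $-(\bmbeta+\bfz)$ and covariance $\bfC$, equal to $2\pi\sqrt{\det\bfC}$ times the density $\mathcal{N}(\boldsymbol{w};-(\bmbeta+\bfz),\bfC)$. The integrand is then the product $\mathcal{N}(\boldsymbol{w};-(\bmbeta+\bfz),\bfC)\,\mathcal{N}(\boldsymbol{w};k\bmtmu,k\bfCM)$. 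The remaining step is the standard product-of-Gaussians identity $\int_{\Rbb^2}\mathcal{N}(\boldsymbol{w};\boldsymbol{m}_1,\boldsymbol{\Sigma}_1)\,\mathcal{N}(\boldsymbol{w};\boldsymbol{m}_2,\boldsymbol{\Sigma}_2)\,\md\boldsymbol{w}=\mathcal{N}(\boldsymbol{m}_1;\boldsymbol{m}_2,\boldsymbol{\Sigma}_1+\boldsymbol{\Sigma}_2)$, applied with $\boldsymbol{m}_1=-(\bmbeta+\bfz)$, $\boldsymbol{\Sigma}_1=\bfC$, $\boldsymbol{m}_2=k\bmtmu$, and $\boldsymbol{\Sigma}_2=k\bfCM$. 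This produces the combined covariance $\bfC+k\bfCM$ and, using invariance of the quadratic form under negation, the shifted argument $\bmbeta+\bfz+k\bmtmu$ in the exponent, exactly as claimed.

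The only bookkeeping that requires care is the normalization. The factor $2\pi\sqrt{\det\bfC}$ carried by the kernel cancels the prefactor $1/(2\pi\sqrt{\det\bfC})$ from Lemma~\ref{lemma:series_g}, while the normalization $1/(2\pi\sqrt{\det(\bfC+k\bfCM)})$ emerging from the product identity supplies the stated denominator $\sqrt{\det(\bfC+k\bfCM)}$. Restoring the weight $(\lambda\Delta\tau)^{k}/k!$ and the common factor $e^{\theta}$ then yields the displayed $g_k$. The case $k=0$ is immediate: $\bmS_0=[0,0]$ removes the integral entirely, leaving the $g_0$ term with covariance $\bfC$.

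I do not expect a genuine obstacle, since the computation is an explicit Gaussian integral; the one place demanding care is justifying the collapse of the $k$-fold integral to a single integral against the law of $\bmS_k$. I would make this fully transparent by an induction on $k$: convolving one factor $f(\bfs_\ell)$ at a time via the same product-of-Gaussians identity, so that the covariance grows $\bfC\to\bfC+\bfCM\to\cdots\to\bfC+k\bfCM$ while the mean shift accumulates as $\bmtmu,2\bmtmu,\dots,k\bmtmu$. This inductive route avoids any appeal to a multi-dimensional change of variables and makes the accumulation of mean and covariance explicit at each step.
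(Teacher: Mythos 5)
Your proof is correct and follows the same overall strategy as the paper's: both reduce the $k$-fold integral to a single integral against the law of $\bmS_k$ (using that the integrand depends on $\bfs_1,\dots,\bfs_k$ only through their sum, and that the sum of $k$ i.i.d.\ bivariate normals is $\text{Normal}(k\bmtmu, k\bfC_{\mathcal{M}})$), and both then evaluate the resulting Gaussian--Gaussian integral. The only genuine difference is in that last evaluation: the paper completes the square explicitly, invoking the general formula $\int_{\Rbb^{n}}\exp(-\tfrac12\bfx^{\top}\bfA\bfx+\boldsymbol{b}^{\top}\bfx+c)\,\md\bfx=\sqrt{\det(2\pi\bfA^{-1})}\,e^{\tfrac12\boldsymbol{b}^{\top}\bfA^{-1}\boldsymbol{b}+c}$ together with the matrix identities $(\bfA^{-1}+\boldsymbol{B}^{-1})^{-1}=\bfA(\bfA+\boldsymbol{B})^{-1}\boldsymbol{B}$ and $\det(\bfA\boldsymbol{B})=\det(\bfA)\det(\boldsymbol{B})$, whereas you quote the product-of-Gaussians (convolution) identity, which packages exactly that algebra into one step. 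Your route is shorter and makes the probabilistic structure (growth of the covariance to $\bfC+k\bfC_{\mathcal{M}}$ and of the mean shift to $k\bmtmu$) more transparent, at the cost of relying on a quoted identity that, if proved from scratch, requires essentially the computation the paper carries out; your bookkeeping of the normalizations and the sign in the quadratic form is correct, and the suggested induction on $k$ is a valid, if unnecessary, alternative to the direct collapse via the density of $\bmS_k$.
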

A proof of Corollary \ref{cor:twodis} is given in Appendix~\ref{app:merton}.


{\apnum{
\subsection{Extension to the 2-D Kou model}
\label{ssc:Koutwo}
In the 2-D Merton model of Corollary~\ref{cor:twodis}, $[\log(\xi_x), \log(\xi_y)]$ are assumed to follow a bivariate normal distribution. Consequently, the sum of $k$ i.i.d.\ random vectors, each following this bivariate normal distribution, is also bivariate normal. Convolution with the 2-D diffusion kernel $\exp\l(\theta  -	\f{(\cdots)^{\top}\bfC^{-1}(\cdots)}{2}\r)$, which is itself Gaussian, then reduces to a straightforward Gaussian–Gaussian double integral. Summing over all possible numbers of jumps according to the Poisson distribution gives an infinite series of tractable Gaussian double integrals, thus leading to the closed‐form solution presented in Corollary~\ref{cor:twodis}.

However, extending this approach to the 2-D Kou model is much more challenging—even in the simplest case where $\log(\xi_x)$ and $\log(\xi_y)$ are independent. In this setting, the joint PDF of $\log(\xi_x)$ and $\log(\xi_y)$ is simply the product of
two 1-D Kou PDFs, resulting in a four‐region, piecewise‐exponential density on $\Rbb^2$.
Summing $k$ i.i.d.\ random vectors, each following this independent 2-D Kou jump distribution,
 leads to multi‐region double integrals, since each dimension can exhibit positive or negative jumps. As $k$ grows, the number of piecewise‐exponential components increases combinatorially. Moreover, these sums produce factors that often combine polynomial or gamma‐type terms with exponentials in two variables.

When convolved with the 2-D Gaussian diffusion kernel, the resulting integrals become exponential–Gaussian double integrals, which generally do not reduce to standard special functions (unlike the 1-D case, which sometimes admits the $\mathrm{Hh}(\cdot)$ family from \cite{AbramowitzStegun1972}, as shown in \cite{kou01}). Although one can still construct a series expansion over the Poisson‐distributed jumps, the individual terms would require intricate generalized special functions or numerous piecewise integrals, making the final expression far more cumbersome and less recognizable as a ``closed‐form'' solution.

A possible way forward is inspired by \cite{du2024fourier}, where a single-hidden-layer neural network (NN) with Gaussian activation functions is used to approximate an unknown transition density or Green's function via a closed-form expression of its Fourier transform. However, this approach can lead to potential loss of nonnegativity and requires retraining the neural network for different $\Delta \tau$.

For the general 2-D Kou model, an alternative approach is to instead approximate the joint PDF of $\log(\xi_x)$ and $\log(\xi_y)$ using a NN akin to that in \cite{du2024fourier}, where Gaussian activation functions are employed. Once trained, the joint PDF is represented as a finite sum of 2-D Gaussians, effectively forming a Gaussian mixture model. Notably, the nonnegativity of this approximation can also be enforced, aligning with the monotonicity requirements of numerical schemes.\footnote{{\apnum{While we focus on the 2-D Kou model, this approach can, in principle, be applied to other 2-D jump-diffusion dynamics where the joint jump density can be effectively approximated by a Gaussian mixture.}}}

This formulation transforms the convolution with the 2-D diffusion kernel into a tractable Gaussian–Gaussian double integral, allowing the same efficient techniques developed in this paper for the 2-D Merton model to be applied. Notably, because the NN does not depend on $\Delta \tau$, it can be trained only once and reused for all $\Delta \tau$, ensuring computational efficiency by eliminating the need for retraining at different time step sizes. We plan to explore this approach in future work in the context of fully 2-D asset allocation.

}}

\subsection{Truncated series and error}
For subsequent analysis, we study the truncation error in the infinite series
representation of the Green's function $g(\cdot)$ as given in \eqref{eq:g_proof_sum}.
Notably, this truncation error bound is derived independently of the specific form of the joint probability density function $f(\cdot)$, ensuring its applicability to a broad range of two-asset jump-diffusion models.

Specifically, for a fixed $\bfz = [x, y] \in \Rbb^2$, we denote by $g(\bfz, \Delta \tau, K)$ an approximation
of the Green's function $g(\bfz, \Delta \tau)$ using the first $(K+1)$ terms from the series \eqref{eq:g_proof_sum}.
As $K$ approaches $\infty$, the approximation $g(\bfz, \Delta \tau, K)$ becomes exact with
no loss of information. However, with a finite $K$,  the approximation incurs an error due to the truncation of the series.
This truncation error can be bounded as follows:
    \begin{align}
    | g(\bfz, \Delta\tau) - g(\bfz, \Delta \tau, K) | &=
    \l|\frac{1}{(2\pi)^2} \sum_{k=K+1}^{\infty} \f{\l(\lambda\Delta \tau\r)^k}{k!}
    \int_{\Rbb^2}
    e^{- \frac{1}{2} \bmeta^{\top} \mathbf{C}\bmeta +
    i\l(\bmbeta + \bfz \r) \cdot\bmeta  + \theta} \,
    \l(\Gamma\l( \bmeta \r) \r)^k ~ \md \bmeta \r|
    \nonumber\\
    &\leq
    \frac{1}{(2\pi)^2} \sum_{k=K+1}^{\infty} \f{\l(\lambda\Delta \tau\r)^k}{k!}
    \int_{\Rbb^2}
    \l|e^{- \frac{1}{2} \bmeta^{\top} \mathbf{C}\bmeta +
    i\l(\bmbeta + \bfz \r) \cdot\bmeta  + \theta}\r| \,
   \l| \l(\Gamma\l( \bmeta \r) \r)^k \r| ~ \md \bmeta
    \nonumber\\
     & ~{\buildrel (\text{i}) \over \le}~
    \frac{1}{(2\pi)^2} \sum_{k=K+1}^{\infty} \f{\l(\lambda\Delta \tau\r)^k}{k!}
    \int_{\Rbb^2}
  e^{- \frac{1}{2} \bmeta^{\top} \mathbf{C}\bmeta  + \theta} ~ \md \bmeta
    \nonumber\\
    &=
\sum_{k=K+1}^{\infty}
    \f{\exp(\theta)(\lambda\Delta\tau)^{k}}{{(k)}!2\pi\sqrt{\det(\bfC)}}  ~{\buildrel (\text{ii}) \over \le}~
    \f{e^{-(r
+\lambda)\Delta\tau}}{2\pi\sqrt{\det(\bfC)}}\,
    \f{(e\lambda \Delta\tau)^{K+1}}{(K+1)^{K+1}}.
    \label{eq:Kbound}
    \end{align}
    Here, in (i), we apply the following fact: if $\omega$ denotes a complex number, then the modulus of the complex exponential is equivalent to the exponential of the real part of $\omega$, i.e $\l|e^{\omega}\r|=\exp(\Re(\omega))$ and  $\l| \l(\Gamma\l( \bmeta \r) \r)^{K+1} \r|
    \le \l(\int_{\Rbb^{2}} f(\bfs)~\l|e^{i \bfs\cdot \bmeta}\r|~\md \bfs\r)^{K+1}
    \le 1$,
        (ii) is due to the Chernoff-Hoeffding bound for the tails of a Poisson distribution $\mathrm{Poi}(\lambda \Delta \tau)$, which reads as $\mathbb{P}\l(\mathrm{Poi}(\lambda \Delta \tau)\geq k\r)\leq\f{e^{-\lambda \Delta\tau}(e\lambda \Delta\tau)^{k}}{k^k}$, for $k>\lambda \Delta \tau$ \cite{mitzenmacher2017probability}.

\noindent Therefore, from \eqref{eq:Kbound}, as $K \rightarrow \infty$, we have $\f{\l(e\lambda\Delta \tau\r)^{K+1}}{(K+1)^{K+1}} \rightarrow 0$,
    resulting in no loss of information. For a given $\epsilon > 0$, we can choose $K$ such that
    the error $\l| g(\bfz, \Delta\tau) - g(\bfz, \Delta \tau, K) \r| < \epsilon$.
    This can be achieved by enforcing
    \EQA
    \label{eq:K_Oh}
    \f{\l(e\lambda\Delta \tau\r)^{K+1}}{(K+1)^{K+1}} \le \frac{\epsilon ~{2\pi \sigma_{\myx}\sigma_{\myy}\Delta \tau\s{1-\rho^2}}}{e^{-(r+\lambda)\Delta \tau}}.
    \ENA
  It is straightforward to see that, if $\epsilon = \Ocal((\Delta \tau)^{2})$, then $K = \Ocal(\ln(1/\Delta \tau))$, as $\Delta \tau \rightarrow 0$. In summary,  we can attain
    \EQ
    \label{eq:gerr}
    0\le g(\bfz, \Delta\tau) - g(\bfz, \Delta \tau, K) =   \Ocal((\Delta \tau)^{2}), \quad \text{ if   $K = \Ocal(\ln(1/\Delta \tau))$}.
    \EN

\section{Numerical methods}
\label{section:num}
A common approach to handling the constraint posed by variational inequalities is to explicitly determine the optimal decision between immediate exercise and holding the contract for potential future exercise \cite{tavella2000, forsyth_2002a, kim1990analytic}. We define $\{\tau_m\}$, $m = 0, \ldots, M$, as an equally spaced partition of $[0, T]$, where $\tau_{m} = m\Delta \tau$ and $\Delta \tau = T/M$. We denote by $u(\cdot) \equiv u(x, y, \tau)$ the continuation value of the option. For a fixed $\tau_{m+1} < T$, the solution to the variational inequality \eqref{eq:F} at $(x, y, \tau_{m+1}) \in \Omega_{\myin}$, can be approximated by explicitly handling the
constraint as follows
\EQA
\label{eq:Bermudan}
v(x,y,\tau_{m+1})\simeq \max\{u(x, y, \tau_{m+1}),\hat{v}(x, y)\}, \qquad (x,y) \in  \myDin.
\ENA
Here, the continuation value $u(\cdot)$ is given by the solution of the 2-D PIDE
of the form \eqref{eq:2dPIDEs*}, i.e.\
\EQ
\label{eq:2dPIDEs}
\partial u/\partial \tau -\Lcal u-\Jcal u = 0, \qquad (x,y,\tau) \in  \Rbb^2 \times (\tau_m, \tau_{m+1}],
\EN
subject to  the initial condition at time $\tau_m$ given by a function $\tilde{v}(x, y, \tau_m)$, where
\EQ
\tilde{v}(x, y, \tau_m) =
\left\{
\begin{array}{lllll}
v(x, y, \tau_m) &\text{ satisfies \eqref{eq:VI_log} }  & \quad (x,y) \in \myDin,
\\
v_{\myout}(x, y, \tau_m) &\text{ satisfies \eqref{eq:inf_boundary_log} }  & \quad (x,y) \in \myDinc.
\end{array}
\right.
\EN
The solution $u(x,y,\tau_{m+1})$ for $(x, y) \in \myDin$ can be represented as the convolution integral of the Green's function $g(\cdot,\Delta \tau)$ and the initial condition $\tilde{v}(\cdot, \tau_m)$ as follows \cite{garronigreenfunctionssecond92, Duffy2015}
\EQ
u(x,y,\tau_{m+1})= \iint_{\Rbb^2}g\left(x- x',y- y',   \Delta \tau\right)\tilde{v}(x',y', \taus) dx'dy',\qquad
(x,y) \in  \myDin.
\label{eq:bkinteg}
\EN
The solution $u(x,y,\tau_{m+1})$ for $(x, y) \in  \myDinc$ is given by the boundary condition
\eqref{eq:inf_boundary_log}. In the analysis below, we focus on the convolution integral
\eqref{eq:bkinteg}.

\subsection{Computational domain}
For computational purposes, we truncate the infinite region of integration of \eqref{eq:bkinteg}
to the finite region $\myD^{\dagger}$ defined as follows
\EQ
\label{eq:truncate_region}
\myD^{\dagger}\equiv [x_{\min}^{\dagger}, x_{\max}^{\dagger}] \times [y_{\min}^{\dagger}, y_{\max}^{\dagger}].
\EN
Here,
where, for $z\in\l\{x,y\r\}$, $z^{\dagger}_{\min}< z_{\min}<0<z_{\max}< z^{\dagger}_{\max}$ and $|z^{\dagger}_{\min}|$ and $z^{\dagger}_{\max}$ are sufficiently large.
This results in the approximation for the continuation value
\EQ
\label{eq:integral_truncated}
u(x,y, \tau_{m+1}) \simeq \iint_{\myD^{\dagger}}g\left(x- x',y- y',   \Delta \tau\right)\tilde{v}( x',y', \taus) dx'dy',\qquad (x,y) \in\myDin.
\EN

We note that,  in approximating the above truncated 2-D convolution integral \eqref{eq:integral_truncated}
over the  finite integration domain $\myD^{\dagger}$, it is also necessary to obtain values of the Green's function $g(\cdot, \cdot,\Delta \tau)$ at spatial points $(x- x', y- y')$
which fall outside $\myD^{\dagger}$. 
More specifically, it is straightforward to see that, with $(x, y) \in \myD$ and $(x', y') \in \myD^{\dagger}$,
the point $(x- x', y- y') \in \myD^{\ddagger} \supset \myD^{\dagger}$ defined as follows
\EQ
\label{eq:extra_dom}
\myD^{\ddagger} =  [x_{\mymin}^{\ddagger}, x_{\mymax}^{\ddagger}] \times [y_{\mymin}^{\ddagger}, y_{\mymax}^{\ddagger}],
\qquad  z_{\mymin}^{\ddagger} = z_{\mymin} - z_{\mymax}^{\dagger},
~z_{\mymax}^{\ddagger} = z_{\mymax} - z_{\mymin}^{\dagger},~
 \text{ for } z \in \{x, y\}.
\EN
We emphasize that computing the solutions for $(x, y) \in \myD_{\myout}^{\dagger} =   \myD^{\dagger} \setminus \myD_{\myin}$ is not necessary, nor are they required for our convergence analysis. The primary purpose of $\myD_{\myout}^{\dagger}$ is to ensure the well-definedness of the Green's function $g(\cdot)$ used in the convolution integral \eqref{eq:integral_truncated}.

\begin{figure}[H]
\begin{minipage}{0.61\linewidth}
We now have a finite computational domain, denoted by $\Omega$, and its sub-domains defined  as follows
\begin{linenomath}
\begin{align}
\label{eq:sub_domain_whole*}
\Omega& = [x_{\min}^{\dagger}, x_{\max}^{\dagger}]  \times [y_{\min}^{\dagger}, y_{\max}^{\dagger}]   \times [0, T] \equiv \myD^{\dagger} \times [0, T],
\nonumber
\\
\Omega_{\tau_0}
& = [x_{\min}^{\dagger}, x_{\max}^{\dagger}] \times [y_{\min}^{\dagger}, y_{\max}^{\dagger}] \times  \{0\}
\equiv \myD^{\dagger} \times  \{0\},
\nonumber
\\
\Omega_{\myin}  &=(x_{\mymin}, x_{\mymax}) \times(y_{\mymin}, y_{\mymax}) \times (0,T]
\equiv \myD_{\myin} \times (0,T],
\\
\Omega_{\myout}
&=  \Omega \setminus \Omega_{\myin} \setminus  \Omega_{\tau_0}
\equiv \myD_{\myout} \times [0, T], \text{ where }
\myD_{\myout} = \myD^{\dagger} \setminus \myD_{\myin}.
\nonumber
\end{align}
\end{linenomath}
Here, $\myD_{\myin}$ and $\myD^{\dagger}$ are respectively defined in \eqref{eq:regG} and \eqref{eq:truncate_region}.
An illustration of the spatial computational sub-domains corresponding each $\tau \in (0, T]$
is given in Figure~\ref{fig:comp_domain}.
We note that $\myD_{\myout} = \myD^{\dagger} \setminus \myD_{\myin}$ and
$\myD_{\myout}^{\dagger} = \myD^{\ddagger} \setminus \myD^{\dagger}$,
where region $\myD^{\ddagger}$ is defined in \eqref{eq:extra_dom}.
\end{minipage}
\hspace*{-0.5cm}
\begin{minipage}{0.45\linewidth}
\begin{center}
\begin{tikzpicture}[scale=0.4]
   \draw [dotted, line width=0.5pt] [stealth-stealth] (-4,2) -- (12,2);
   \draw [dotted, line width=0.5pt][stealth-stealth](4,-5.5) --(4,9.5);
   \draw [thick]  (1,5) -- (7,5);
   \draw [thick]  (-1,6.5) -- (9,6.5);
   \draw [dashed, line width=0.5pt]   (-3,8.5) -- (11,8.5);

   \draw [thick]  (1,-1) -- (7,-1);
  \draw [thick]  (-1,-2.5) -- (9,-2.5);
  \draw [dashed, line width=0.5pt]   (-3,-4.5) -- (11,-4.5);

   \draw [thick]  (1,-1) -- (1,5);
   \draw [thick]  (-1,-2.5) -- (-1,6.5);
   \draw [dashed, line width=0.5pt]   (-3,8.5) --  (-3,-4.5);

   \draw [thick](7,-1) --(7,5);
   \draw [thick](9,-2.5) --(9,6.5);
   \draw [dashed, line width=0.5pt]   (11,8.5) -- (11,-4.5);

   \node at (4,2) {\scalebox{0.9}{$\myD_{\myin}$}};

   \node at (7.5,7.5) {\scalebox{0.9}{$\myD_{\myout}^{\dagger}$}};
   \node at (7.5,5.8) {\scalebox{0.9}{$\myD_{\myout}$}};

   \node [below] at (1.8,2) {\scalebox{0.8}{$x_{\mymin}$}};
   \node [below] at (-0.2,3.5) {\scalebox{0.8} {$x_{\mymin}^{\dagger}$}};
   \node [below] at (-2.2,3.5) {\scalebox{0.8} {$x_{\mymin}^{\ddagger}$}};
      \node [below] at (6.3,2) {\scalebox{0.8} {$x_{\mymax}$}};
   \node [below] at (8.3,3.5) {\scalebox{0.8} {$x_{\mymax}^{\dagger}$}};
   \node [below] at (10.3,3.5) {\scalebox{0.8} {$x_{\mymax}^{\ddagger}$}};
   \node [above] at (3.8,-1) {\scalebox{0.8}{$y_{\min}$}};
   \node [above] at (5,-2.5) {\scalebox{0.8}{$y_{\min}^{\dagger}$}};
   \node [above] at (5,-4.5) {\scalebox{0.8}{$y_{\min}^{\ddagger}$}};

   \node [below] at (3.8,5) {\scalebox{0.8}{$y_{\mymax}$}};
   \node [below] at (5,6.8) {\scalebox{0.8}{$y_{\mymax}^{\dagger}$}};
   \node [below] at (5,8.8) {\scalebox{0.8}{$y_{\mymax}^{\ddagger}$}};
\end{tikzpicture}
\end{center}
\vspace*{-0.5cm}
\caption{Spatial computational sub-domain at each $\tau \in [0, T]$,
$\myD_{\myout}^{\dagger} = \myD^{\ddagger} \setminus \myD^{\dagger}$.
}
\label{fig:comp_domain}
\end{minipage}
\end{figure}

Without loss of generality, for convenience, we assume that  $|z_{\min}|$ and $z_{\max}$, for $z\in\l\{x,y\r\}$, are chosen sufficiently large so that
\EQA
\label{eq:w_choice_green_jump_form}
z^{\dagger}_{\min} = z_{\min} - \frac{z_{\max} - z_{\min}}{2},
~~~\text{and}~~~
z^{\dagger}_{\max} =  z_{\max} + \frac{z_{\max} - z_{\min}}{2}.
\ENA
With \eqref{eq:w_choice_green_jump_form} in mind, recalling $z_{\min}^{\ddagger}$ and $z_{\max}^{\ddagger}$, for $z\in\{x,y\}$ as defined
in \eqref{eq:extra_dom} gives
\EQ
\label{eq:w_choice_green_jump_form_dd}
z_{\min}^{\ddagger} = z^{\dagger}_{\min} - z_{\max} =  -\frac{3}{2}\l(z_{\max} - z_{\min}\r),
~~~\text{and}~~~
z_{\max}^{\ddagger} = z^{\dagger}_{\max} - z_{\min}  = \frac{3}{2}\l(z_{\max} - z_{\min}\r).
\EN

\subsection{Discretization}
We denote by $N$ (resp.\ $N^{\dagger}$ and $N^{\ddagger}$ ) the number of intervals of a uniform partition of $[x_{\mymin}, x_{\mymax}]$
(resp.\ $[x_{\mymin}^{\dagger}, x_{\mymax}^{\dagger}]$ and $[x_{\mymin}^{\ddagger}, x_{\mymax}^{\ddagger}]$).
For convenience, we typically choose $N^{\dagger} = 2N$ and $N^{\ddagger} = 3N$ so that only one set of $z$-coordinates is needed. Also, let $P = x_{\mymax} - x_{\mymin}$, $P_x^{\dagger} = x^{\dagger}_{\mymax} - x^{\dagger}_{\mymin}$, and $P_x^{\ddagger} = x^{\ddagger}_{\mymax} - x^{\ddagger}_{\mymin}$. We define $\Delta x = \frac{P_x}{N} = \frac{P_x^{\dagger}}{N^{\dagger}} =\frac{P_x^{\ddagger}}{N^{\ddagger}}$. We use an equally spaced partition in the $x$-direction, denoted by $\{x_{n}\}$, and is defined as follows
\EQA
\label{eq:grid_x}
    x_{n}& = &{\hat{x}}_{0} + n\Delta x,
    ~~
    n = -N^{\ddagger}/2, \ldots, N^{\ddagger}/2, ~~\text{where}~~
    \nonumber
    \\
    \Delta x& =& P_x/N ~=~ P_x^{\dagger}/N^{\dagger}=P_x^{\ddagger}/N^{\ddagger}, ~~\text{and}~~
    \\
    \nonumber
    \hat{x}_{0} &~=~& ( x_{\mymin} + x_{\mymax})/2 ~=~ (x^{\dagger}_{\mymin} + x^{\dagger}_{\mymax})/2~=~ (x^{\ddagger}_{\mymin} + x^{\ddagger}_{\mymax})/2.
    \nonumber
\ENA
Similarly, for the $y$-dimension, with  $J^{\dagger} = 2J$, $J^{\ddagger} = 3J$, $P_{y} = y_{\mymax} - y_{\mymin}$, $P_y^{\dagger} = y^{\dagger}_{\mymax} - y^{\dagger}_{\mymin}$, and $P_y^{\ddagger} = y^{\ddagger}_{\mymax} - y^{\ddagger}_{\mymin}$, we denote by $\{y_{j}\}$, an equally spaced partition in the $y$-direction defined as follows
\EQA
\label{eq:grid_y}
    y_{j}& = &{\hat{y}}_{0} + j\Delta y,
    ~~
    j = -J^{\ddagger}/2, \ldots, J^{\ddagger}/2, ~~\text{where}~~
    \nonumber
    \\
    \Delta y& =& P_y/J ~=~ P_y^{\dagger}/J^{\dagger}=P_y^{\ddagger}/J^{\ddagger}, ~~\text{and}~~
    \\
    \nonumber
    \hat{y}_{0} &~=~& ( y_{\mymin} + y_{\mymax})/2 ~=~ (y^{\dagger}_{\mymin} + y^{\dagger}_{\mymax})/2~=~ (y^{\ddagger}_{\mymin} + y^{\ddagger}_{\mymax})/2.
    \nonumber
\ENA
We use the previously defined uniform partition
$\{\tau_m\}$, $m = 0, \ldots, M$, with $\tau_{m} = m\Delta \tau = mT/M$.\footnote{While it is straightforward to generalize the numerical method to non-uniform partitioning of the $\tau$-dimension, to prove convergence, uniform partitioning suffices.}


For convenience, we let $\mathbb{M}=\l\{0, \ldots M-1\r\}$ and we also define the following index sets:
\begin{linenomath}
\postdisplaypenalty=0
\begin{alignat}{8}
\label{index_sets}
&\mathbb{N} &&= \l\{-N/2+1, \ldots N/2-1\r\}, \quad && \mathbb{N}^{\dagger} &&= \l\{-N, \ldots N\r\}, \quad &&&\mathbb{N}^{\ddagger} &&= \l\{-3N/2+1, \ldots 3N/2-1\r\},
\nonumber
\\
&\mathbb{J} &&= \l\{-J/2+1, \ldots J/2-1\r\}, \quad &&\mathbb{J}^{\dagger} &&= \l\{-J, \ldots J\r\}, \quad &&&\mathbb{J}^{\ddagger} &&= \l\{-3J/2+1, \ldots, 3J/2-1 \r\}.
\end{alignat}
\end{linenomath}
With $n \in \mathbb{N}^{\dagger}$, $j \in \mathbb{J}^{\dagger}$, and $m \in \{0, \ldots, M\}$,
we denote by $v_{n, j}^m$ (resp.\ $u_{n, j}^m$)
a numerical approximation to the exact solution $v(x_n, y_j, \tau_m)$ (resp.\ $u(x_n, y_j, \tau_m)$) at the reference node $(x_n, y_j, \tau_m) = {\bf{x}}_{n, j}^{m}$. 
For $m \in \mathbb{M}$, nodes ${\bf{x}}_{n, j}^{m+1}$ having
$n \in \mathbb{N} \text{ and } j \in \mathbb{J}$ are in $\Omega_{\myin}$.
Those with either $n \in \mathbb{N}^{\dagger}\setminus \mathbb{N}$  or
$j \in \mathbb{J}^{\dagger}\setminus \mathbb{J}$ are in $\Omega_{\myout}$.
For double summations, we adopt the short-hand notation:  $\mysum_{d\in\mathbb{D}}^{q\in \mathbb{Q}}(\cdot):=\sum_{q\in\mathbb{Q}}\sum_{d\in\mathbb{D}}(\cdot)$,
unless otherwise noted. Lastly, it's important to note that references to indices
$n \in \mathbb{N}^{\ddagger}\setminus \mathbb{N}^{\dagger}$ or $j \in \mathbb{J}^{\ddagger}\setminus \mathbb{J}^{\dagger}$
pertain to points within $\myD^{\dagger}_{\myout} = \myD^{\ddagger} - \myD^{\dagger}$.
As noted earlier, no numerical solutions are required for these points.

\subsection{Numerical schemes}
\label{sec:NS}
For $(x_{n},y_{j},\tau_0)\in \Omega_{\tau_0}$, we impose the initial condition 
\EQA
\label{eq:tau0i}
v_{n,j}^0 &=& \hat{v}_{n,j},
\quad n \in \mathbb{N}^{\dagger}  ,~  j \in \mathbb{J}^{\dagger}.
\ENA\
For $(x_{n},y_{j},\tau_{m+1})\in \Omega_{{\myout}}$, we impose the boundary condition
\eqref{eq:vxymax} as follow
\EQA
\label{eq:outi}
v_{n,j}^{m+1} = \hat{v}_{n,j} e^{-r\tau_{m+1}},
\quad
n \in \mathbb{N}^{\dagger}\setminus \mathbb{N} \text{ or } j \in \mathbb{J}^{\dagger}\setminus \mathbb{J}.
\ENA
For subsequent use, we adopt the following notational convention:  (i) $x_{n-l} \equiv x_n-x_l = (n -l) \Delta x$, for $n\in\mathbb{N}$ and $l\in \mathbb{N}^{\dagger}$,  and (ii) $y_{j-d} \equiv y_j-y_d = (j-d)\Delta y$, for $j\in \mathbb{J}$ and $d\in\mathbb{J}^{\dagger}$. In addition, we denote by $g_{n-l,j-d} \equiv g_{n-l,j-d}(\Delta \tau, K)$ an approximation to $g\l(x_{n-l},y_{j-d}, \Delta\tau\r)$ using the first {\apnum{$(K+1)$}} terms of the infinite series representation in Corollary~\ref{cor:twodis}. {\apnum{When the role of $\Delta \tau$
is important, we explicitly write $g_{n-l,j-d}(\Delta \tau)$, omitting $K$ for brevity.}}

The continuation value at node $(x_{n},y_{j},\tau_{m+1})\in \Omega_{{\myin}}$
is approximated through the 2-D convolution integral \eqref{eq:integral_truncated} using a 2-D composite trapezoidal rule. This approximation, denoted by $u_{n,j}^{m+1}$, is computed as follows:
\EQ
\label{eq:dissum}
u_{n,j}^{m+1} = \Delta x \Delta y \mysum_{l\in\mathbb{N}^{\dagger}}^{d\in\mathbb{J}^{\dagger}}\varphi_{l,d}~g_{n-l,j-d} ~v^{m}_{l,d},
\quad n\in\mathbb{N},~j\in \mathbb{J}.
\EN
Here, the coefficients $\varphi_{l,d}$ in \eqref{eq:dissum}, where $l\in \mathbb{N}^{\dagger}$ and $d\in\mathbb{J}^{\dagger}$, are the weights of the 2-D composite trapezoidal rule.
Finally, the discrete solution $v_{n,j}^{m+1}$ is computed as follow
\EQ
\label{eq:scheme}
v_{n,j}^{m+1} =\max\{u_{n,j}^{m+1},~\hat{v}_{n,j}\}=\max \bigg\{\Delta x \Delta y \mysum_{l\in\mathbb{N}^{\dagger}}^{d\in\mathbb{J}^{\dagger}}\varphi_{l,d}~g_{n-l,j-d}~v^{m}_{l,d},~\hat{v}_{n,j} \bigg\},
\quad
n\in\mathbb{N},~j\in \mathbb{J}.
\EN
\begin{remark}[Monotonicity]
\label{rm:mono}
We note that the Green's function $g\l(x_{n-l},y_{j-d}, \Delta\tau\r)$, as given by the infinite series in Corollary~\ref{cor:twodis},
is defined and non-negative for all $n \in \mathbb{N}$, $l \in \mathbb{N}^{\dagger}$,
$j \in \mathbb{J}$, $d \in \mathbb{J}^{\dagger}$. Therefore, scheme \eqref{eq:dissum}-\eqref{eq:scheme} is monotone. We highlight that for computational purposes, $g\l(x_{n-l},y_{j-d}, \Delta\tau\r)$ is truncated to $g_{n-l, j - d}(\Delta \tau, K)$. However, since each term of the series is non-negative, this truncation ensures no loss of monotonicity, which is a key advantage of the proposed approach.
\end{remark}

{\apnum{
\begin{remark}[Rescaled weights and convention]
\label{rem:rescaled_kernel}
In the scheme \eqref{eq:dissum}, the weights $g_{n-l,j-d}(\Delta \tau)$ are multiplied by the grid area $\Delta x\,\Delta y$. As $\Delta \tau \to 0$, the Green's function
$g(\cdot, \Delta \tau)$ approaches a Dirac delta function, becoming increasingly peaked and unbounded. However, as proved subsequently, with $\Delta x \,\Delta y$
absorbed into the numerators of the terms in the (truncated) series representation
of $g_{n-l,j-d}(\Delta \tau)$, the resulting rescaled weights remain bounded.

To formalize this, we define the rescaled weights of our scheme as follows:
\begin{equation}
\label{eq:widetilde_g}
  \widetilde{g}_{n-l,j-d}(\Delta \tau)
  \;:=\;
  \Delta x\,\Delta y\,  \odot g_{n-l,j-d}(\Delta \tau), \quad
  n \in \mathbb{N},~ l \in \mathbb{N}^{\dagger},~
j \in \mathbb{J}, \text{and } d \in \mathbb{J}^{\dagger}.
\end{equation}
Here, $\odot$ indicates that $\Delta x\,\Delta y$ is effectively absorbed into the numerators of the terms in the (truncated) series representation of $g_{n-l,j-d}(\Delta \tau)$, ensuring that $\widetilde{g}_{n-l,j-d}(\Delta\tau)$ remains bounded as $\Delta \tau \to 0$.

For subsequent use, we also define the infinite series counterpart
$\widetilde{g}(x_{n-l},y_{j-d},\Delta \tau)$ as follows:
\begin{equation}
\label{eq:widetilde_g_full}
  \widetilde{g}(x_{n-l},y_{j-d}, \Delta \tau)
  \;:=\;
  \Delta x\,\Delta y\,  \odot g(x_{n-l},y_{j-d}, \Delta \tau), \quad
  n \in \mathbb{N},~ l \in \mathbb{N}^{\dagger},~
j \in \mathbb{J}, \text{and } d \in \mathbb{J}^{\dagger}.
\end{equation}
A formal proof of the boundedness of both $\widetilde{g}(x_{n-l},y_{j-d},\Delta \tau)$
and its truncated counterpart $\widetilde{g}_{n-l,j-d}(\Delta \tau)$ is provided in Lemma~\ref{lemma:bounded_g_scaled}.

\medskip
\noindent
\underline{Convention:}
For the rest of the paper, we adopt the convention of continuing to write
$\Delta x\,\Delta y\, g_{n-l,j-d}(\Delta \tau)$
and
$\Delta x\,\Delta y\,
  \mysum_{l\in\mathbb{N}^{\dagger}}^{d\in\mathbb{J}^{\dagger}}
  (\cdot)~g_{n-l,j-d}(\Delta\tau)~(\cdot)$
in our scheme, implementation descriptions, and subsequent analysis.
These expressions should respectively be understood as shorthand for
$\widetilde{g}_{n-l,j-d}(\Delta \tau)$
and
$\mysum_{l\in\mathbb{N}^{\dagger}}^{d\in\mathbb{J}^{\dagger}}
  (\cdot)~\widetilde{g}_{n-l,j-d}(\Delta \tau)~(\cdot)$,
where $\widetilde{g}_{n-l,j-d}(\Delta \tau)$ is the rescaled weight defined in \eqref{eq:widetilde_g}.
The same convention applies to expressions involving $g(x_{n-l},y_{j-d},\Delta \tau)$.
\end{remark}
}}

{\apnum{We note that the scheme in \eqref{eq:scheme} explicitly enforces the American option's early exercise constraint by taking $\max\{u_{n,j}^{m+1},\hat{v}_{n,j}\}$ at each timestep. While this approach is standard in computational finance and preserves the monotonicity of the scheme, it limits the global accuracy in time to first order due to the non-smooth $\max\{\cdot, \cdot\}$ operation.

A possible approach to achieving higher-order time accuracy is to adopt a penalty formulation (such as that in \cite{forsyth_2002a}), which leads to a system of nonlinear equations that must be solved iteratively at each timestep, typically using fixed-point or Newton-type methods \cite{huang:2010a}. Exploring these iterative methods in conjunction with the integral operator, while preserving the advantages of our convolution-based Green's function framework
in a fully 2-D jump-diffusion setting, is an interesting, but challenging, direction for future work. However, this lies beyond the scope of the present paper. We focus here on an explicit enforcement approach, which is simpler to implement and preserves the monotonicity of the scheme, thereby ensuring convergence to the viscosity solution.
}}

\subsection{Efficient implementation and complexity}
In this section, we discuss an efficient implementation of the 2-D discrete convolution \eqref{eq:dissum} using FFT.
Initially developed for 1-D problems \cite{zhang2023monotone} and extended to 2-D cases \cite{dang2024monotone}, this technique enables efficient computation of the convolution as a circular product.
Specifically, the goal of this technique is to represent \eqref{eq:dissum} for $ n \in \mathbb{N}$, $j \in \mathbb{J}$, $l \in \mathbb{N}^{\dagger}$ and $d \in \mathbb{J}^{\dagger}$ as a 2-D circular convolution product of the form
\EQA
\label{eq:cir_p}
   \U^{m+1} =  \Delta x \Delta y~ {\gb} \ast \V^{m}.
\ENA
Here, $\gb:= \gb(\Delta \tau, K_{\epsilon})$  is the first column of an associated circulant block matrix constructed from
$g_{n-l, j - d}(\Delta \tau, K_{\epsilon})$, where $K_{\epsilon}$ is sufficiently large,  reshaped into a $(3N- 1) \times (3J -1) $ matrix, while $\V^{m}$ is reshaped from an associated augmented block vector into a $(3N- 1) \times (3J -1) $ matrix. The notation $\ast$ denotes the circular convolution product. Full details on constructing $\gb$ and $\V^{m}$ can be found in~\cite{dang2024monotone}.

The resulting circular convolution product \eqref{eq:cir_p} can then be computed efficiently using FFT and inverse FFT (iFFT) as:
\EQA
\label{eq:fft_ifft}
  {\apnum{\U^{m+1} = {\text{FFT}}^{-1}\l\{\text{FFT}\l\{\V^{m}\r\} \circ \text{FFT}\l\{\Delta x \Delta y~ \gb\r\}\r\}}}.
\ENA
After computation, we discard components in $\U^{m+1}$ for indices $n \in \mathbb{N}^{\ddagger} \setminus \mathbb{N}$ or $j \in \mathbb{J}^{\ddagger} \setminus \mathbb{J}$, obtaining discrete solutions $u_{n, j}^{m+1}$ for $\Omega_{\myin}$.

{\apnum{
As explained in Remark~\ref{rem:rescaled_kernel}, the factor $\Delta x,\Delta y$ is incorporated into $g_{n-l,j-d}$, yielding $\widetilde{g}_{n-l,j-d}$.
Following our convention, we continue to write $\Delta x\,\Delta y\, \gb$ in~\eqref{eq:cir_p}-\eqref{eq:fft_ifft} for simplicity.
}}

The implementation outlined in \eqref{eq:fft_ifft} indicates that the {\apnum{rescaled weight array $\Delta x\,\Delta y\,\bf{g}$}} needs to be computed only once using the infinite series expression from Corollary~\ref{cor:twodis}, after which they can be reused across all time intervals. Specifically, for a given user-defined tolerance $\epsilon$, we use \eqref{eq:K_Oh} to determine a sufficiently large number of terms $K = K_{\epsilon}$ in the series representation \eqref{eq:g_proof_sum} for these weights. The resulting {\apnum{rescaled weight array $\Delta x\,\Delta y\,\gb := \Delta x\,\Delta y\, \gb(\Delta \tau, K_{\epsilon})$}} is then calculated. For the two-asset Merton jump-diffusion model, the  {\apnum{rescaled weight array  $\Delta x\,\Delta y\,\gb$}} needs only be computed once as per \eqref{eq:fft_ifft}, and can subsequently be reused across all time intervals. This step is detailed in Algorithm~\ref{alg:Gtilde}.

\begin{algorithm}[htb!]
\caption{
\label{alg:Gtilde}
Computation of the {\apnum{rescaled weight array  $\Delta x\,\Delta y\, \gb := \Delta x\,\Delta y\, \gb(\Delta \tau, K_{\epsilon})$}}; $\epsilon>0$ is a user-defined tolerance.}
\begin{algorithmic}[1]

\STATE set $k = K_{\epsilon} = 0$;

\STATE compute
$\texttt{test} = \frac{e^{-(r + \lambda)\Delta\tau}}{2\pi\sqrt{\det(\bfC)}}\,
\frac{(e\lambda \Delta\tau)^{k+1}}{(k+1)^{k+1}}$;

\STATE set
{\apnum{$\Delta x\, \Delta y\,g_{n-l, j - d}(\Delta \tau, K_{\epsilon}) = \Delta x\, \Delta y\,g_{0}\left(x_{n-l}, y_{j-d}, \Delta\tau\right)$}},  $n \in \mathbb{N}, \, j \in \mathbb{J}, \, l \in \mathbb{N}^{\dagger}, \, d \in \mathbb{J}^{\dagger}$;

\WHILE{$\texttt{test} \ge \epsilon$}
    \STATE set $k = k + 1$, and $K_{\epsilon} = k$;

    \STATE compute
     {\apnum{$\Delta x\, \Delta y \, g_k(x_{n-l}, y_{j-d}, \Delta\tau)$}}, $n \in \mathbb{N}, \, j \in \mathbb{J}, \, l \in \mathbb{N}^{\dagger}, \, d \in \mathbb{J}^{\dagger}$, given in Corollary~\ref{cor:twodis};

    \STATE set
    {\apnum{$\Delta x\, \Delta y \, g_{n-l, j - d}(\Delta \tau, K_{\epsilon}) = \Delta x \Delta y \, g_{n-l, j - d}(\Delta \tau, K_{\epsilon}) + \Delta x \Delta y \, g_k(x_{n-l}, y_{j-d}, \Delta\tau)$}}, $n \in \mathbb{N}, \, j \in \mathbb{J}, \, l \in \mathbb{N}^{\dagger}, \, d \in \mathbb{J}^{\dagger}$;

    \STATE {\myblue{
        compute
        $\texttt{test} = \frac{e^{-(r+\lambda)\Delta\tau}}{2\pi\sqrt{\det(\bfC)}}\,
        \frac{(e\lambda \Delta\tau)^{k+1}}{(k+1)^{k+1}}$;
    }}

\ENDWHILE

\STATE construct the weight array {\apnum{$\Delta x\, \Delta y\, \gb = \Delta x\, \Delta y \, \gb(\Delta \tau, K_{\epsilon})$ using $\Delta x\, \Delta y \, g_{n-l, j - d}(\Delta \tau, K_{\epsilon})$}}, $n \in \mathbb{N}$, $j \in \mathbb{J}$, $l \in \mathbb{N}^{\dagger}$, $d \in \mathbb{J}^{\dagger}$;

\STATE output $\gb$;
\end{algorithmic}
\end{algorithm}

Putting everything together, the proposed numerical scheme for the American options under two-asset Merton jump-diffusion model is presented in Algorithm \ref{alg:monotone} below.
\begin{algorithm}[htb!]
\caption{
A monotone numerical integration algorithm for pricing American options
under the two-asset Merton jump-diffusion model.
}

\begin{algorithmic}[1]
\label{alg:monotone}

\STATE
compute the {\apnum{rescaled weight array $\Delta x\, \Delta y\, \gb$}} using Algorithm~\ref{alg:Gtilde};

\STATE
\label{alg:initial}
initialize $v_{n, j}^{0} =\vh(x_n,y_j)$, $n\in\Nd$, $j\in \Jd$;

 \FOR{$m = 0, \ldots, M-1$}

    \STATE
    \label{alg:step1}
    compute intermediate values $\U^{m+1}$ using FFT as per \eqref{eq:fft_ifft};

    \STATE
    \label{alg:dis}
    obtain discrete solutions $\l[u_{n, j}^{m+1}\r]_{n\in \Nbb, j \in \Jbb}$ by discarding the components in $\U^{m+1}$ corresponding to indices $n \in \mathbb{N}^{\ddagger} \setminus \mathbb{N}$ or $j \in \mathbb{J}^{\ddagger} \setminus \mathbb{J}$;

    \STATE
    \label{alg:step4}
    set $v_{n,j}^{m+1}= \max\{u_{n,j}^{m+1},v_{n,j}^{0}\}$, $n\in\Nbb$ and  $j\in\Jbb$,
    where $u_{n,j}^{m+1}$ are from Line~\ref{alg:dis};
     \hfill $\Omega_{\myin}$

    \STATE
    \label{alg:step5}
    compute $v_{n,j}^{m+1}$, $n\in\Nbb^{\dagger}\setminus\Nbb$ or $j\in\Jbb^{\dagger}\setminus\Jbb$, using \eqref{eq:outi}; \hfill $\Omega_{\myout}$

\ENDFOR
\end{algorithmic}
\end{algorithm}

\begin{remark}[Complexity]
\label{rm:complexity}
Our algorithm involves, for $m=0,\ldots,M-1$, the following key steps:
\begin{itemize}

    \item Compute $u_{n,j}^{m+1}$, $n\in\Nbb^{\dagger}$, $j\in\Jbb^{\dagger} $ via the proposed 2-D FFT algorithm. The complexity of this step is {\zblue{$\Ocal\l(N^{\dagger}J^{\dagger}\log(N^{\dagger}J^{\dagger})\r)=\Ocal\l(NJ\log(NJ)\r)$, considering that $N^{\dagger}=2N$ and $J^{\dagger}=2J$}}.

    \item Finding the optimal control for each node $\x_{n,j}^{m+1}$ by directly comparing $u_{n,j}^{m+1}$ with the payoff requires $\Ocal(1)$ complexity. Thus, with a total of {\zblue{$NJ$}}  interior nodes, this gives a complexity {\zblue{$\Ocal(NJ)$}}.

    \item Therefore, the major cost of our algorithm is determined by the step of FFT algorithm. With {\zblue{$M$}} timesteps, the total complexity is {\zblue{$\Ocal(MNJ\log(NJ))$}}.
    \end{itemize}

\end{remark}

\section{Convergence to viscosity solution}
\label{sc:conv}
In this section, we demonstrate that, as a discretization paremeter approaches zero, our numerical scheme in the interior sub-domain $\Omega_{\myin}$ converges to the viscosity solution of the variational inequality \eqref{eq:F} in the sense of Definition~\ref{Def:viscosity_VIs}. To achieve this, we examine three critical properties: $\ell_\infty$-stability, consistency, and
monotonicity \cite{crandall_ishii_lions1992}.

\subsection{Error analysis}
\label{ssc:error}
To commence,  we shall identify errors arising in our numerical scheme and make assumptions needed for
subsequent proofs. In the discussion below, $\phi(\cdot)$ is a test function in $(\mathcal{B}\cap\mathcal{C}^{\infty})(\Rbb^{2}\times[0,T])$.
\begin{itemize}

\item  
Truncating the infinite region of integration $\Rbb^2$ of the convolution integral
\eqref{eq:bkinteg} between the Green's function $g(\cdot)$ and $\phi(\cdot)$
to $\myD^{\dagger}$ results in a boundary truncation error $\errorb$, where
\EQA
\label{eq:green_integral_truncated_e}
\errorb  = \iint_{\mathbb{R}^2 \setminus \myD^{\dagger}}
g(x - x', y - y', \dtau)~\phi(x', y', \cdot,\taus)~dx'~dy',
\quad (x, y) \in \myD_{\myin}.
\ENA
It has been established that for general jump diffusion models, such as those considered in this paper, the error bound $\errorb$ is bounded by \cite{Cont2005,cont2003financial}
\EQ
\label{eq:truncation}
        \left| \errorb \right| \le  C_1 \Delta \tau e^{-C_2 \l(P_{x}^{\dagger} \wedge P_{y}^{\dagger}\r)},
    \quad
    \text{$\forall (x,y) \in \myD_{\myin}$},
     \quad
     \text{$C_1, C_2 > 0$ independent of $\Delta \tau$, $P_{x}^{\dagger}$ and $P_{y}^{\dagger}$},
\EN
where $P_{\myx}^{\dagger}=x^{\dagger}_{\mymax}-x^{\dagger}_{\mymin}$, $P_{\myy}^{\dagger}=y^{\dagger}_{\mymax}-y^{\dagger}_{\mymin}$ and $a\wedge b = \min(a, b)$. For fixed $P_{\myx}^{\dagger}$ and $P_{\myy}^{\dagger}$, \eqref{eq:truncation} shows $\errorb\to 0$, as $\Delta \tau\to 0$. However, as typical required for showing consistency, one would need to ensure $\frac{\errorb}{\Delta \tau}\to 0$ as $\Delta\tau\to 0$. Therefore, from \eqref{eq:truncation}, we need $P_{\myx}^{\dagger}\to \infty$ and $P_{\myy}^{\dagger}\to \infty$ as $\Delta\tau\to 0$, which can be achieved by letting $P_{\myx}^{\dagger}=C_1'/\Delta\tau$ and $P_{\myy}^{\dagger}=C_2'/\Delta\tau$, for finite $C_1', C_2'>0$
independent of $\Delta \tau$.

\item The next source of error is identified in approximating the truncated 2-D convolution integral
\[
\iint_{\myD^{\dagger}}
g(x - x', y - y', \dtau)~\phi(x', y', \cdot,\taus)~dx'~dy',
\quad (x, y) \in \myD_{\myin}.
\]
by the composite trapezoidal rule:
$
\Delta x \Delta y \mysum_{l\in\mathbb{N}^{\dagger}}^{d\in\mathbb{J}^{\dagger}} \varphi_{l,d}~ g(x_{n-l}, y_{j-d}, \Delta \tau) ~\phi(x_l, y_d, \tau_m),
$
where $g(x_{n-l}, y_{j-d}, \Delta \tau)$ represents the exact Green's function.
We denote by $\errorc$ the numerical integration error associated with this approximation.
For a fixed integration domain  $\myD^{\dagger}$, due to the smoothness of the test function $\phi$,  we have that $\errorc = \mathcal{O}\l(\max(\Delta x, \Delta y)^2\r)$ as $\Delta x, \Delta y \to 0$.

\item Truncating the Green's function $g(x_{n-l}, y_{j-d}, \Delta \tau)$, which is expressed as an infinite series in \eqref{eq:g_proof_sum}, to $g_{n-l, j - d}(\Delta \tau, K)$  using only the first
    $(K+1)$ terms introduces a series truncation error, denoted by $\errorf$. As discussed previously in \eqref{eq:gerr}, with $K = \Ocal(\ln((\Delta \tau)^{-1}))$, then  $\errorf = \Ocal((\Delta \tau)^{2})$ as $\Delta \tau \to 0$.
\end{itemize}
Motivated by the above discussions, for convergence analysis, we make the assumption below about the discretization parameter.
\begin{assumption}
\label{as:dis_parameter}
We assume that there is a discretization parameter $h$
such that
\EQA
\label{eq:dis_parameter}
\Delta x=  C_1 h, \quad
\Delta y = C_2 h,\quad
\Delta \tau = C_3 h,
\quad
{P_x^{\dagger} = C'_1/h,}\quad
{P_y^{\dagger} = C'_2/h,}
\ENA
where the positive constants $C_1$, $C_2$, $C_3$, $C_1'$ and $C_2'$ are independent of $h$.
\end{assumption}
Under Assumption~\ref{as:dis_parameter}, and for a test function
$\phi(\cdot) \in \mathcal{B}(\Oinf)\cap\mathcal{C}^{\infty}(\Oinf)$,
we have
\EQ
\label{eq:err_h}
\errorb = \mathcal{O}(he^{-\frac{1}{h}}),
\quad
\errorc = \mathcal{O}(h^2),
\quad
\errorf  = \mathcal{O}(h^2).
\EN
It is also straightforward to ensure the theoretical requirement $P_x^{\dagger}, P_y^{\dagger} \to \infty$ as $h \to 0$.
For example, with $C'_2 = C'_2 = 1$ in \eqref{eq:dis_parameter},
we can quadruple $N^{\dagger}$ and $J^{\dagger}$ as we halve $h$.
We emphasize that, for practical purposes, if $P_x^{\dagger}$ and $ P_y^{\dagger}$ are chosen sufficiently large,
both can be kept constant for all $\Delta \tau$ refinement levels (as we let $\Delta \tau  \to 0$).
The effectiveness of this practical approach is demonstrated through numerical experiments
in Section~\ref{sec:num_test}. 
Finally, we note that, the total complexity of the proposed algorithm, as outlined in Remark~\ref{rm:complexity}, is $\Ocal(1/h^3\cdot\log(1/h))$.

{\apnum{
We now present a lemma on the boundedness of the rescaled weights $\widetilde{g}_{n-l,j-d}(\Delta \tau)$ defined in \eqref{eq:widetilde_g}.
\begin{lemma}[Boundedness of the rescaled weights]
\label{lemma:bounded_g_scaled}
Let $g(x,y,\Delta\tau)$ be the two-asset Merton jump-diffusion Green's function
from Corollary~\ref{cor:twodis}, and recall the rescaled weights
$\widetilde{g}_{n-l,j-d}(\Delta \tau)$ and its infinite series counterpart
$\widetilde{g}(x_{n-l},y_{j-d}, \Delta \tau)$, respectively defined
in \eqref{eq:widetilde_g} and \eqref{eq:widetilde_g_full}.

Under Assumption~\ref{as:dis_parameter}, there exists a finite constant $C'>0$, independent of the discretization parameter  $h$, such that for all sufficiently small $h$,
\[
  0 \le \widetilde{g}(x_{n-l},y_{j-d}, \Delta \tau) \le C',
  \quad
  \text{for all }
  n \in \mathbb{N},~ l \in \mathbb{N}^{\dagger},~
j \in \mathbb{J}, \text{and } d \in \mathbb{J}^{\dagger}.
\]
Consequently, the rescaled weights $\widetilde{g}_{n-l,j-d}(\Delta \tau)$ satisfy
\[
0 \le \widetilde{g}_{n-l,j-d}(\Delta \tau) \le C'.
\]
\end{lemma}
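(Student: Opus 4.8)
The plan is to bound the full (nonnegative) series for $\widetilde{g}(x_{n-l},y_{j-d},\Delta\tau)$ term by term, exploiting the fact that each summand in Corollary~\ref{cor:twodis} is a scaled bivariate Gaussian density, and then to invoke the joint scaling of Assumption~\ref{as:dis_parameter} to control the prefactor $\Delta x\,\Delta y$. Since every term $g_k$ is strictly positive, the truncated weight $\widetilde{g}_{n-l,j-d}(\Delta\tau)$ is a partial sum bounded above by the full series $\widetilde{g}(x_{n-l},y_{j-d},\Delta\tau)$ defined in \eqref{eq:widetilde_g_full}; hence it suffices to bound the latter, and the stated inequality for $\widetilde{g}_{n-l,j-d}(\Delta\tau)$ in \eqref{eq:widetilde_g} follows immediately.

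First I would discard the Gaussian exponential factor in each $g_k$ using $\exp\!\big(-\tfrac12(\cdot)^{\top}(\bfC+\bfC_{\mathcal{M}}k)^{-1}(\cdot)\big)\le 1$ together with $\theta=-(r+\lambda)\Delta\tau\le 0$, so that $e^{\theta}\le 1$. This step is crucial for the claimed \emph{uniformity} over the indices $n,l,j,d$: after dropping the exponential, the resulting bound no longer depends on the spatial point $\bfz=(x_{n-l},y_{j-d})$. One is then left with $\Delta x\,\Delta y\, g_k \le \frac{(\lambda\Delta\tau)^{k}}{k!}\,\frac{\Delta x\,\Delta y}{2\pi\sqrt{\det(\bfC+k\bfC_{\mathcal{M}})}}$.

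Next I would treat the $k=0$ and $k\ge 1$ contributions separately. For $k=0$, using $\bfC=\Delta\tau\,\tilde{\bfC}$ so that $\det\bfC=(\Delta\tau)^2\det\tilde{\bfC}$, and substituting $\Delta x\,\Delta y=C_1C_2h^2$ and $\Delta\tau=C_3h$ from Assumption~\ref{as:dis_parameter}, yields $\frac{\Delta x\,\Delta y}{2\pi\sqrt{\det\bfC}}=\frac{C_1C_2}{2\pi C_3\sqrt{\det\tilde{\bfC}}}\,h=\Ocal(h)$, where $\det\tilde{\bfC}=\sigma_{\myx}^2\sigma_{\myy}^2(1-\rho^2)>0$. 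For $k\ge 1$, Minkowski's determinant inequality (equivalently, monotonicity of $\det$ on positive-definite matrices) gives $\det(\bfC+k\bfC_{\mathcal{M}})\ge\det(k\bfC_{\mathcal{M}})=k^2\det\bfC_{\mathcal{M}}$, with $\det\bfC_{\mathcal{M}}>0$ since $\hat{\rho}\in(-1,1)$, whence $\frac{1}{\sqrt{\det(\bfC+k\bfC_{\mathcal{M}})}}\le\frac{1}{k\sqrt{\det\bfC_{\mathcal{M}}}}$. Summing the geometric-type tail, $\sum_{k\ge 1}\frac{(\lambda\Delta\tau)^{k}}{k!\,k}\le e^{\lambda\Delta\tau}-1$, so the $k\ge1$ contribution is at most $\frac{\Delta x\,\Delta y}{2\pi\sqrt{\det\bfC_{\mathcal{M}}}}\big(e^{\lambda\Delta\tau}-1\big)=\Ocal(h^2)$, since $e^{\lambda\Delta\tau}-1$ remains bounded as $h\to 0$.

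Combining the two contributions gives $0\le\widetilde{g}(x_{n-l},y_{j-d},\Delta\tau)=\Ocal(h)+\Ocal(h^2)=\Ocal(h)$, which is bounded by a constant $C'$ for all sufficiently small $h$; this $C'$ depends only on the model parameters and on $C_1,C_2,C_3$, and is independent of $h$ and of the indices $n,l,j,d$. The corresponding bound for the truncated weights then follows from the monotonicity of the partial sums noted at the outset. I expect the $k=0$ term to be the main obstacle: there $\sqrt{\det\bfC}=\Delta\tau\sqrt{\det\tilde{\bfC}}\to 0$, so $g_0$ itself blows up like $1/\Delta\tau$ (the emerging Dirac-delta behaviour as $\Delta\tau\to0$), and boundedness survives only because the grid area $\Delta x\,\Delta y=\Ocal(h^2)$ exactly absorbs this singularity under the precise joint scaling $\Delta x,\Delta y\propto h$ and $\Delta\tau\propto h$ of Assumption~\ref{as:dis_parameter}; without this coupling the rescaled weights would be unbounded.
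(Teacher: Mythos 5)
Your proof is correct, and it follows the same basic decomposition as the paper (split into the $k=0$ diffusion term and the $k\ge 1$ jump terms, and observe that $\Delta x\,\Delta y/\sqrt{\det \bfC}=\Ocal(h)$ absorbs the Dirac-delta singularity of $g_0$). Where you differ is in how the remaining factors are controlled, and your treatment is tighter in two places. First, you simply discard the Gaussian exponential via $\exp(\theta-\tfrac12(\cdot)^{\top}(\bfC+k\bfC_{\mathcal{M}})^{-1}(\cdot))\le 1$, which immediately yields a bound uniform in the spatial indices $n,l,j,d$; the paper instead runs a case analysis on whether $\bfz_{n-l,j-d}$ is $\Ocal(h)$ or $\Ocal(1/h)$ to estimate the exponential, which buys the sharper statements $\Ocal(he^{-h})$ versus $\Ocal(he^{-1/h^3})$ but is unnecessary for mere boundedness. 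Second, for $k\ge 1$ the paper asserts $\sqrt{\det(\bfC+k\bfC_{\mathcal{M}})}=\Ocal(1)$ for each fixed $k$ and then sums $\sum_k \Ocal(h^{2+k})$, leaving the uniformity of the implied constants over $k$ implicit; your use of $\det(\bfC+k\bfC_{\mathcal{M}})\ge \det(k\bfC_{\mathcal{M}})=k^2\det\bfC_{\mathcal{M}}$ followed by $\sum_{k\ge1}(\lambda\Delta\tau)^k/(k!\,k)\le e^{\lambda\Delta\tau}-1$ makes the summation over all $k$ explicit and airtight. The reduction of the truncated weights to the full series via nonnegativity of each $g_k$ matches the paper's logic. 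In short, both arguments reach $\widetilde{g}=\Ocal(h)$; yours trades the paper's finer pointwise decay estimates for a cleaner, fully uniform bound.
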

A proof of Lemma~\ref{lemma:bounded_g_scaled} is provided in  Appendix~\ref{app:bounded_g_scaled}.
We reiterate that, as explained in Remark~\ref{rem:rescaled_kernel}, we adopt the convention of writing $\Delta x\,\Delta y\,g_{n-l,j-d}(\Delta \tau)$ and $\Delta x\,\Delta y\,
  \mysum_{l\in\mathbb{N}^{\dagger}}^{d\in\mathbb{J}^{\dagger}}
  (\cdot)~g_{n-l,j-d}(\Delta\tau)~(\cdot)$ even though the factor $\Delta x\,\Delta y$ can be interpreted as folded into $g_{n-l,j-d}(\Delta \tau)$. Under this convention, the discrete rescaled weights, as well as its full series counter parts, remain bounded as $h \to 0$ by Lemma~\ref{lemma:bounded_g_scaled}.

%
}}

For subsequent use,  we present a result about $g(x_{n-l}, y_{j-d}, \Delta \tau)$
in the form of a lemma.
\begin{lemma}
\label{lemma:rootuni}
Suppose the discretization parameter $h$ satisfies \eqref{eq:dis_parameter}.
For sufficiently small $h$, we have
\EQ
\label{eq:geps}
\Delta x \Delta y \mysum_{l\in\mathbb{N}^{\dagger}}^{d\in\mathbb{J}^{\dagger}} ~ \varphi_{l, d}~
g_{n-l, j-d}\le
e^{-r \Delta \tau} \!+\! \mathcal{O}(h^2)
\le
1 + \epsilon_g \frac{\Delta \tau}{T} \le e^{\epsilon_g\frac{\Delta \tau}{T}}.
\quad n \in \mathbb{N}, ~j \in \mathbb{J},
\EN
where $\epsilon_g = C h$ with $C>0$ being a bounded constant independently of $h$.
\end{lemma}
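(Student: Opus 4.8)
The plan is to read the left-hand side of \eqref{eq:geps}, for the constant test function $\phi\equiv 1$, as the series-truncated composite-trapezoidal approximation of $\iint_{\myD^{\dagger}} g(x_n-x',y_j-y',\Delta\tau)\,dx'dy'$, and to bound it from above by the \emph{total mass} of the Green's function, which is exactly the discount factor. The identity I would lean on is obtained by evaluating the Fourier transform \eqref{eq:vmf_G_closed} at $\bmeta=\mathbf{0}$:
\[
\iint_{\Rbb^2} g(x,y,\Delta\tau)\,dx\,dy = G(\mathbf{0},\Delta\tau)=\exp\big(\Psi(\mathbf{0})\Delta\tau\big)=e^{-r\Delta\tau},
\]
since $\Gamma(\mathbf{0})=\iint_{\Rbb^2}f(\bfs)\,d\bfs=1$ forces $\Psi(\mathbf{0})=-(r+\lambda)+\lambda=-r$. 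Thus the target quantity $e^{-r\Delta\tau}$ is not an approximation but the true mass of $g$.

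First I would make two monotone reductions. By Corollary~\ref{cor:twodis} and Remark~\ref{rm:mono} every term of the series for $g$ is nonnegative, so truncating to the first $(K+1)$ terms only lowers the value, i.e. $g_{n-l,j-d}(\Delta\tau,K)\le g(x_{n-l},y_{j-d},\Delta\tau)$. Because the trapezoidal weights obey $0\le\varphi_{l,d}\le 1$ and $g\ge 0$, I would then enlarge the finite index ranges $l\in\mathbb{N}^{\dagger},\,d\in\mathbb{J}^{\dagger}$ to the full lattice to obtain
\[
\Delta x\,\Delta y\mysum_{l\in\mathbb{N}^{\dagger}}^{d\in\mathbb{J}^{\dagger}}\varphi_{l,d}\,g_{n-l,j-d}
\;\le\;
\Delta x\,\Delta y\sum_{(p,q)\in\mathbb{Z}^2} g\big(p\,\Delta x,\,q\,\Delta y,\,\Delta\tau\big),
\]
with $p=n-l,\ q=j-d$. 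Both the series truncation and the domain truncation have now been discarded in the correct (error-lowering) direction, so it only remains to estimate the full-lattice trapezoidal sum of $g$.

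Next I would evaluate the right-hand side by the two-dimensional Poisson summation formula: the zero-frequency term reproduces $\iint_{\Rbb^2}g=G(\mathbf{0},\Delta\tau)=e^{-r\Delta\tau}$, while the remaining aliasing terms are controlled by $\sum_{(m_1,m_2)\neq(0,0)}\big|G\big(\tfrac{2\pi m_1}{\Delta x},\tfrac{2\pi m_2}{\Delta y},\Delta\tau\big)\big|$. Using $|G(\bmeta,\Delta\tau)|=\exp(\Re\Psi(\bmeta)\,\Delta\tau)$ with $\Re\Psi(\bmeta)\le-\tfrac12\bmeta^{\top}\tilde{\bfC}\bmeta-r$ (since $\Re\Gamma(\bmeta)\le 1$ and $\tilde{\bfC}\succ0$ as $|\rho|<1$), and invoking Assumption~\ref{as:dis_parameter} so that $\Delta x,\Delta y,\Delta\tau$ are all of order $h$, each nonzero aliasing frequency has magnitude $\gtrsim 1/h$ and contributes a factor $\exp(-c\,\Delta\tau/\Delta x^2)=\exp(-c'/h)$; summing the Gaussian-type tail gives an aliasing error $\mathcal{O}(e^{-c'/h})=\mathcal{O}(h^2)$, consistent with the quadrature bound $\errorc=\mathcal{O}(h^2)$ in \eqref{eq:err_h}. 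Assembling the chain,
\[
\Delta x\,\Delta y\mysum_{l\in\mathbb{N}^{\dagger}}^{d\in\mathbb{J}^{\dagger}}\varphi_{l,d}\,g_{n-l,j-d}
\;\le\; e^{-r\Delta\tau}+\mathcal{O}(h^2)
\;\le\; 1+\epsilon_g\tfrac{\Delta\tau}{T}
\;\le\; e^{\epsilon_g\frac{\Delta\tau}{T}},
\]
using $e^{-r\Delta\tau}\le 1$ for $r>0$, absorbing the $\mathcal{O}(h^2)$ remainder into $\epsilon_g\tfrac{\Delta\tau}{T}$ by taking $\epsilon_g=Ch$ (so that $\epsilon_g\tfrac{\Delta\tau}{T}=\tfrac{CC_3}{T}h^2$, with $C$ chosen to dominate the hidden constant), and closing with $1+x\le e^x$.

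The main obstacle is the \emph{uniform} control of the quadrature/aliasing error as $\Delta\tau\to 0$. As $\Delta\tau\to0$ the Green's function concentrates into a Dirac mass, so the naive composite-trapezoidal estimate $\mathcal{O}(\Delta x^2\,\|\partial^2 g\|_\infty)$ degenerates to $\mathcal{O}(1)$ and is useless. What saves the argument is the spectral accuracy of the trapezoidal rule for the smooth, rapidly decaying $g$ — equivalently, the Gaussian-type decay of $G$ in $\bmeta$ — which makes the genuine error exponentially small; Poisson summation is the cleanest device for turning this into a rigorous bound rather than relying on a pointwise second-derivative estimate. A secondary point to verify is that the $k\ge 1$ Gaussian components, whose covariances $\bfC+k\bfC_{\mathcal{M}}$ are bounded below independently of $\Delta\tau$, produce even smaller aliasing, so the sharply peaked $k=0$ term governs the estimate.
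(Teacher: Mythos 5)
Your proposal is correct, and it reaches the same chain of inequalities as the paper, but the middle step is handled by a genuinely different device. The paper's proof makes the same first reduction (dropping the series tail is harmless because every term of the series is nonnegative), but then writes the finite weighted sum as the exact integral $\iint_{\Rbb^2} g\,dx\,dy = e^{-r\Delta\tau}$ plus two error terms: the boundary truncation error $\errorb = \mathcal{O}(he^{-1/h})$ and the composite-trapezoidal error $\errorc=\mathcal{O}(h^2)$ quoted from \eqref{eq:err_h}. You instead exploit $g\ge 0$ a second time to enlarge the finite sum to the full lattice sum, and then evaluate that lattice sum exactly by Poisson summation, so the zero mode gives $e^{-r\Delta\tau}$ and the aliasing modes are controlled by the Gaussian decay of $G(\bmeta,\Delta\tau)$ at frequencies $\gtrsim 1/h$, yielding an $\mathcal{O}(e^{-c/h})$ remainder. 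This buys you something real: the $\mathcal{O}(h^2)$ trapezoidal bound in \eqref{eq:err_h} is derived for a fixed smooth test function $\phi$, whereas here the integrand is $g(\cdot,\Delta\tau)$ itself, whose derivatives blow up as $\Delta\tau = C_3 h \to 0$, so a naive second-derivative quadrature estimate degenerates — exactly the obstacle you flag. Your Poisson-summation argument sidesteps this and makes the uniform-in-$h$ control rigorous, at the modest cost of invoking the summation formula (justified here since both $g$ and $G$ are Schwartz-class in the spatial/frequency variables for each fixed $\Delta\tau>0$). The final absorption of the remainder into $\epsilon_g\,\Delta\tau/T$ with $\epsilon_g = Ch$ and the closing inequality $1+x\le e^x$ are identical to the paper's steps (iv)--(v).
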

\begin{proof}[Proof of Lemma~\ref{lemma:rootuni}]
In this proof, we let $C$ be a generic positive constant independent of $h$, which may take different values
from line to line. We note that  $g_{n-l,j-d} \equiv g_{n-l,j-d}(\Delta \tau, K)$ is an approximation to $g\l(x_{n-l},y_{j-d}, \Delta\tau\r)$ using the first $(K+1)$ terms of the infinite series.
Recall that  $G(\eta_x,\eta_y, \Delta \tau)= \iint_{\Rbb^2}e^{-i(\eta_x x+\eta_y y)}g(x,y, \Delta \tau)~\md x\md y$,
and also, by  \eqref{eq:G_closed},  $G(\eta_x,\eta_y, \Delta \tau)= \exp(\Psi(\eta_x,\eta_y) \Delta \tau)$.
Hence, setting $\eta_x=\eta_y=0$ in the above gives
\EQ
\label{eq:gdb}
\iint_{\Rbb^2} g(x,y, \Delta \tau)~\md x\md y = \exp(\Psi(0,0) \Delta \tau) = e^{-r \Delta \tau},
\quad \forall (x, y) \in \Rbb^2
\EN
As $h \to 0$,
we have:
$0\le \Delta x \Delta y \mysum_{l\in\mathbb{N}^{\dagger}}^{d\in\mathbb{J}^{\dagger}} ~ \varphi_{l, d}~
g_{n-l, j-d}  \overset{\text{(i)}}{\le} \Delta x \Delta y \mysum_{l\in\mathbb{N}^{\dagger}}^{d\in\mathbb{J}^{\dagger}} ~ \varphi_{l, d}~ g(x_{n-l}, y_{j-d}, \Delta \tau) = \ldots$
\begin{linenomath}
\postdisplaypenalty=0
\begin{align*}
\ldots\overset{\text{(ii)}}{=}\!\!
\iint_{\mathbb{R}^2}\!\!
g(x_n - x,y_j - y, \Delta \tau)\md x\md y
+ \mathcal{O}(he^{-\frac{1}{h}}) + \mathcal{O}(h^2)
\overset{\text{(iii)}}{=}
e^{-r \Delta \tau} \!+\! \mathcal{O}(h^2)\!
\overset{\text{(iv)}}{\le}
1 \!+\! C h \frac{\Delta \tau}{T}
\overset{\text{(v)}}{\le}
e^{\epsilon_g\frac{\Delta \tau}{T}}.
\label{eq:sumgG}
\end{align*}
\end{linenomath}
Here, (i) is due to the fact that all terms of the infinite series are non-negative, so are the weights
$\varphi_{l, d}$ of the composite trapezoidal rule; in (ii), as  previously discussed,
 the error  $\mathcal{O}(he^{-\frac{1}{h}})$ is due to the boundary truncation error \eqref{eq:truncation}, together with $P_x^{\dagger}, P_y^{\dagger} \sim \mathcal{O}(1/h)$ as $h \to 0$, as in \eqref{eq:dis_parameter}; the $ \mathcal{O}(h^2)$ error arises from the trapezoidal rule approximation of the double integral, as noted in \eqref{eq:err_h}; 
 (iii)~follows from \eqref{eq:gdb}; (iv)~is due to  $e^{-r \Delta \tau} \le 1$ and \eqref{eq:dis_parameter}.
Letting $\epsilon_g = C h$ gives (iv). This concludes the proof.
\end{proof}
\subsection{Stability}
Our scheme consists of the following equations: \eqref{eq:tau0i} for $\Omega_{\tau_0}$, \eqref{eq:outi}  for $\Omega_{\myout}$,
and finally \eqref{eq:scheme} for $\Omega_{\myin}$. We start by verifying $\ell_\infty$-stability of our scheme.
\begin{lemma}[$\ell_\infty$-stability]
\label{lemma:stability}
Suppose the discretization parameter $h$ satisfies \eqref{eq:dis_parameter}.
The scheme  \eqref{eq:tau0i}, \eqref{eq:outi}, and \eqref{eq:scheme} satisfies
the bound $\ds \sup_{h > 0} \left\| v^{m} \right\|_{\infty} < \infty$
for all $m = 0, \ldots, M$, as the discretization parameter $h \to 0$.
Here, we have $\left\| v^{m} \right\|_{\infty} = \max_{n, j} |v_{n, j}^{m}|$,
$n \in \mathbb{N}^{\dagger}$ and $j \in \mathbb{J}^{\dagger}$.
\end{lemma}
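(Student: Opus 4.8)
The plan is to argue by induction on the time index $m$, isolating all the analytic content in Lemma~\ref{lemma:rootuni} and using only the non-negativity of the quadrature weights $\varphi_{l,d}$ and of the Green's-function terms $g_{n-l,j-d}$ recorded in Remark~\ref{rm:mono}. Write $V^m := \|v^m\|_\infty = \max_{n,j}|v^m_{n,j}|$ over $n\in\mathbb{N}^{\dagger}$, $j\in\mathbb{J}^{\dagger}$. The base case is supplied by the initial condition \eqref{eq:tau0i}: since $v^0_{n,j}=\hat{v}_{n,j}$ and $\hat{v}\in\mathcal{B}(\Oinf)$, we have $V^0=\|\hat{v}\|_\infty<\infty$. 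The induction hypothesis I would carry is twofold: $V^m<\infty$ and the invariant $V^m\ge\|\hat{v}\|_\infty$ (which holds trivially at $m=0$).

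For the induction step, I would first bound the continuation value at an interior node $n\in\mathbb{N}$, $j\in\mathbb{J}$. Because every summand weight is non-negative, the triangle inequality applied to \eqref{eq:dissum}, followed by $|v^m_{l,d}|\le V^m$ and then Lemma~\ref{lemma:rootuni}, gives
\[
|u^{m+1}_{n,j}|
\;\le\;
\Delta x\,\Delta y \mysum_{l\in\mathbb{N}^{\dagger}}^{d\in\mathbb{J}^{\dagger}} \varphi_{l,d}\, g_{n-l,j-d}\, |v^m_{l,d}|
\;\le\;
V^m \,\Delta x\,\Delta y \mysum_{l\in\mathbb{N}^{\dagger}}^{d\in\mathbb{J}^{\dagger}} \varphi_{l,d}\, g_{n-l,j-d}
\;\le\;
V^m\, e^{\epsilon_g \Delta\tau/T}.
\]
Applying $|\max\{a,b\}|\le\max\{|a|,|b|\}$ to the scheme \eqref{eq:scheme} then yields $|v^{m+1}_{n,j}|\le\max\{V^m e^{\epsilon_g\Delta\tau/T},\,\|\hat{v}\|_\infty\}$ for interior nodes. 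For the outer nodes, the boundary prescription \eqref{eq:outi} gives $|v^{m+1}_{n,j}|=|\hat{v}_{n,j}|\,e^{-r\tau_{m+1}}\le\|\hat{v}\|_\infty$. Since $e^{\epsilon_g\Delta\tau/T}\ge 1$ and, by the invariant, $V^m\ge\|\hat{v}\|_\infty$, both the payoff term and the boundary term are dominated by $V^m e^{\epsilon_g\Delta\tau/T}$, so that $V^{m+1}\le V^m e^{\epsilon_g\Delta\tau/T}$; this in turn preserves $V^{m+1}\ge\|\hat{v}\|_\infty$, closing the induction.

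Iterating this one-step estimate and using $m\le M$ together with $M\Delta\tau=T$, I would conclude
\[
V^m \;\le\; V^0\, e^{m\epsilon_g\Delta\tau/T} \;=\; \|\hat{v}\|_\infty\, e^{m\epsilon_g\Delta\tau/T} \;\le\; \|\hat{v}\|_\infty\, e^{\epsilon_g} \;=\; \|\hat{v}\|_\infty\, e^{Ch},
\]
after substituting $\epsilon_g=Ch$ from Lemma~\ref{lemma:rootuni}. Since $h\to 0$, the factor $e^{Ch}$ stays bounded (e.g.\ $e^{Ch}\le e^{C}$ for all $h\le 1$) and tends to $1$, whence $\sup_{h>0}\|v^m\|_\infty<\infty$ uniformly over $m=0,\ldots,M$, which is the claim.

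The step I expect to require the most care is not an estimate but the bookkeeping that prevents the payoff and boundary contributions from dominating: one must verify that the invariant $V^m\ge\|\hat{v}\|_\infty$ is genuinely propagated, so that the $\max$ against $\|\hat{v}\|_\infty$ collapses to $V^m e^{\epsilon_g\Delta\tau/T}$ at every step and the recursion becomes the clean geometric bound above. Everything else is mechanical, because the essential fact — that the row sums of the rescaled, non-negative weights do not exceed $e^{\epsilon_g\Delta\tau/T}$ with $\epsilon_g=\mathcal{O}(h)$ — has already been established in Lemma~\ref{lemma:rootuni}.
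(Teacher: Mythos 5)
Your argument follows the same route as the paper's proof: induction on $m$, with the non-negativity of the quadrature weights and Green's-function terms plus the row-sum bound of Lemma~\ref{lemma:rootuni} carrying all the analytic content, and the boundary nodes handled separately via \eqref{eq:outi}. The one step that does not hold up is the auxiliary invariant $V^m \ge \|\hat{v}\|_\infty$. Its propagation is asserted (``this in turn preserves \ldots'') but does not follow from anything you proved: the inequality $V^{m+1}\le V^m e^{\epsilon_g\Delta\tau/T}$ is an \emph{upper} bound and yields no lower bound on $V^{m+1}$. Moreover the invariant itself is doubtful: on $\Omega_{\myout}$ the scheme sets $v^{m+1}_{n,j}=\hat{v}_{n,j}e^{-r\tau_{m+1}}$, which is strictly below $|\hat{v}_{n,j}|$ for $\tau_{m+1}>0$, and for a put payoff the maximum of $|\hat{v}|$ over the full grid $\mathbb{N}^{\dagger}\times\mathbb{J}^{\dagger}$ is typically attained at exactly such a deep-in-the-money boundary node, so nothing forces $\max_{n,j}|v^{m+1}_{n,j}|$ to stay above $\|\hat{v}\|_\infty$.

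Fortunately the invariant is also unnecessary. From $|v^{m+1}_{n,j}|\le\max\{|u^{m+1}_{n,j}|,\,|\hat{v}_{n,j}|\}$, bound each argument of the max separately against the target: the induction hypothesis $V^m\le\|\hat{v}\|_\infty e^{m\epsilon_g\Delta\tau/T}$ together with Lemma~\ref{lemma:rootuni} gives $|u^{m+1}_{n,j}|\le V^m e^{\epsilon_g\Delta\tau/T}\le\|\hat{v}\|_\infty e^{(m+1)\epsilon_g\Delta\tau/T}$, while trivially $|\hat{v}_{n,j}|\le\|\hat{v}\|_\infty\le\|\hat{v}\|_\infty e^{(m+1)\epsilon_g\Delta\tau/T}$; the boundary nodes satisfy the same bound. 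Hence $V^{m+1}\le\|\hat{v}\|_\infty e^{(m+1)\epsilon_g\Delta\tau/T}$ and the geometric estimate $V^m\le\|\hat{v}\|_\infty e^{\epsilon_g}=\|\hat{v}\|_\infty e^{Ch}$ survives with no extra bookkeeping. This is precisely how the paper closes the induction; with that small repair your proof coincides with it.
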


\begin{proof}[Proof of Lemma~\ref{lemma:stability}]
Since the function $\hat{v}(\cdot)$ is a bounded, for any fixed $h >0$, we have
\EQ
\left\| v^{0} \right\|_{\infty} = \max_{n, j} |v_{n, j}^{0}| =  \max_{n, j} |\hat{v}_{n, j}| <\infty,
\quad n \in \mathbb{N}^{\dagger},~j \in \mathbb{J}^{\dagger}
\label{eq:bdbounded}
\EN
Hence,  $\sup_{h > 0} \left\| v^{0} \right\|_{\infty} < \infty$.
Motivated by this observation and Lemma~\ref{lemma:rootuni},
to demonstrate $\ell_\infty$-stability of our scheme,
we will show that, for a fixed $h > 0$, at any $(x_n, y_j, \tau_m)$, we have
\EQ
\label{eq:key}
|v_{n, j}^{m}| < e^{m \frac{\Delta \tau}{T} \epsilon_g}  \left\| v^{0} \right\|_{\infty},
\quad
\text{ where $\epsilon_g = C h$ from \eqref{eq:geps} },
\EN
from which, we obtain $\sup_{h > 0} \left\| v^{m} \right\|_{\infty} < \infty$  for all $m = 0, \ldots, M$, as wanted, noting $m \frac{\Delta\tau}{T} \le 1$.

For the rest of the proof, we will show the key inequality \eqref{eq:key} when $h > 0$ is fixed.
We will address $\ell_\infty$-stability for the boundary and interior sub-domains
(together with their respective initial conditions) separately, starting with
the boundary sub-domains $\Omega_{\tau_0}$ and $\Omega_{\myout}$.
It is straightforward to see that both \eqref{eq:tau0i} (for $\Omega_{\tau_0}$) and \eqref{eq:outi} (for $\Omega_{\myout}$)  satisfy \eqref{eq:key}, respectively due to  \eqref{eq:bdbounded} and the following
\EQA
\label{eq:stab_max}
\max_{n, j} |v_{n, j}^{m}|  = \max_{n, j} |\hat{v}_{n, j} e^{-r\tau_{m}} | \le
\max_{n, j} |\hat{v}_{n, j}| = \left\| v^{0} \right\|_{\infty}  < \infty,
\quad
n \in \mathbb{N}^{\dagger}\setminus \mathbb{N} \text{ or } j \in \mathbb{J}^{\dagger}\setminus \mathbb{J}.
\ENA
We now demonstrate the bound \eqref{eq:key} for $\Omega_{\myin}$ using induction on $m$, $m = 0, \ldots, M$.
For the base case, $m = 0$, the bound \eqref{eq:key} holds for all $n \in \mathbb{N}$
and $j \in \mathbb{J}$, which follows from \eqref{eq:bdbounded}.
Assume that \eqref{eq:key} holds for $n \in \mathbb{N}$, $j \in \mathbb{J}$,
and $m = m' \le M-1$, i.e.\
$|v_{n, j}^{m'}| < e^{m' \frac{\Delta \tau}{T} \epsilon_g}  \left\| v^{0} \right\|_{\infty}$,
for $n \in \mathbb{N}$ and $j \in \mathbb{J}$.
We now show that \eqref{eq:key} also holds for $m = m' + 1$.
The continuation value $u_{n,j}^{m'+1}$, for $n \in \mathbb{N}$ and $j \in \mathbb{J}$,
satisfies $|u_{n,j}^{m'+1}| \le \Delta x \Delta y \mysum_{l\in\mathbb{N}^{\dagger}}^{d\in\mathbb{J}^{\dagger}}\varphi_{l,d}~g_{n-l,j-d}~|v^{m'}_{l,d}|
\le \ldots$

\begin{linenomath}
\begin{align}
\ldots \overset{(i)}{\le}
e^{m'  \frac{\Delta \tau}{T} \epsilon_g} \left\| v^{0}\right\|_{\infty}
\Delta x \Delta y
\mysum_{l\in\mathbb{N}^{\dagger}}^{d\in\mathbb{J}^{\dagger}}\varphi_{l,d}~g_{n-l,j-d}  \,
\overset{\text{(ii)}}{\le}
e^{(m' +1) \frac{\Delta \tau}{T}\epsilon_g}
\left\| v^{0}\right\|_{\infty}.
\label{eq:vl}
\end{align}
\end{linenomath}
Here, (i) is due to induction hypothesis, and (ii) is due to Lemma~\ref{lemma:rootuni}.
Using \eqref{eq:vl} and \eqref{eq:bdbounded}, we have
\[
|v_{n,j}^{m'+1}| \le \max\{|u_{n,j}^{m'+1}|, |\hat{v}_{n,j}| \}
\le e^{(m' +1) \frac{\Delta \tau}{T}\epsilon_g}
\left\| v^{0}\right\|_{\infty}.
\]
This concludes the proof.
\end{proof}

\subsection{Consistency}
While equations \eqref{eq:tau0i}, \eqref{eq:outi}, and \eqref{eq:scheme} are convenient for computation, they are not well-suited for analysis. To verify consistency in the viscosity sense,
it is more convenient to rewrite them in a single equation that encompasses the interior and boundary
sub-domains. To this end, for grid point $(x_n, y_j, \tau_{m+1}) \in \Omega_{\myin}$, we define operator $\mathcal{C}_{n, j}^{m+1}(\cdot) \equiv
\mathcal{C}_{n, j}^{m+1} \bigg(h, v_{n, j}^{m+1},
\left\{v_{l,d}^{m}\right\}_{\subalign{l\in \Nd\\d\in \Jd}}
 \bigg)= \ldots$
\EQ
\label{eq:scheme_CD}
\begin{aligned}
\ldots=\min\bigg\{ \frac{1}{\Delta \tau}\bigg(v_{n,j}^{m+1}
-
 \Delta x \Delta y \mysum_{l \in \Nd}^{d \in \Jd} \varphi_{l, d}~
g_{n-l, j-d}~v^{m}_{l,d} \bigg),~ v_{n,j}^{m+1}-\hat{v}_{n,j}\bigg\}.
\end{aligned}
\EN
Using $\mathcal{C}_{n,j}^{m+1}(\cdot)$ defined
in \eqref{eq:scheme_CD}, our numerical scheme at  the reference node $\bfx = (x_n, y_j, \tau_{m+1}) \in \Omega$
can be rewritten in an equivalent form as follows
\EQA
\label{eq:scheme_GF}
0=
\mathcal{H}_{n,j}^{m+1}
\bigg(h, v_{n, j}^{m+1},
\left\{v_{l,d}^{m}\right\}_{\subalign{l\in \Nd\\d\in \Jd}}
 \bigg)
\equiv
\left\{
\begin{array}{lllllllllllll}
\mathcal{C}_{n,j}^{m+1}
\left(\cdot\right)
&
\quad {\bf{x}} \in \Omega{\myin},
\\
v_{n,j}^{m+1} - \hat{v}_{n,j} e^{-r\tau_{m+1}}
&
\quad {\bf{x}} \in \Omega_{\myout},
\\
v_{n,j}^{m+1} -\hat{v}_{n,j}
&
\quad {\bf{x}} \in \Omega_{\tau_0},
\end{array}
\right.
\ENA
where the sub-domains $ \Omega{\myin}$, $\Omega_{\myout}$ and  $\Omega_{\tau_0}$ are defined in \eqref{eq:sub_domain_whole*}.

To establish convergence of the numerical scheme to the viscosity solution
in $\Omega_{\myin}$, we first consider an intermediate result presented in
Lemma~\ref{lemma:ar} below.
\begin{lemma}
\label{lemma:ar}
Suppose the discretization parameter $h$ satisfies \eqref{eq:dis_parameter}.
Let $\phi$  be a test function in $(\mathcal{B}\cap\mathcal{C}^{\infty})(\Rbb^{2}\times[0,T])$.
For $\bfx_{n, j}^m = (x_n, y_j, \tau_m) \in \Omega_{\myin}$, where $n \in \mathbb{N}$, $j \in \mathbb{J}$, and $m \in \{0, \ldots, M\}$,
with $\phi_{n,j}^{m} = \phi(\bfx_{n, j}^m)$, and for sufficiently small $h$, we have
\begin{align}
\Delta x \Delta y
\mysum_{l\in\mathbb{N}^{\dagger}}^{d\in\mathbb{J}^{\dagger}}
    g_{n-l, j-d}~
    \phi_{l,d}^{m}
&=
\phi_{n,j}^{m}
+ \Delta \tau \left[ \mathcal{L} \phi + \mathcal{J} \phi\right]_{n, j}^{m} + \mathcal{O}( h^2).
\label{eq:error_analysis_smooth}
\end{align}
Here, $\left[\mathcal{L} \phi\right]_{n, j}^{m} = [\mathcal{L} \phi](\bfx_{n, j}^m)$ and
 $\left[\mathcal{J} \phi\right]_{n, j}^{m} = [\mathcal{J} \phi](\bfx_{n, j}^m)$.
\end{lemma}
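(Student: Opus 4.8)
The plan is to read the left-hand side of \eqref{eq:error_analysis_smooth} as a quadrature of the exact convolution integral
\[
I_{n,j}^{m} \;:=\; \iint_{\Rbb^2} g\l(x_n - x', y_j - y', \Delta \tau\r)\,\phi(x', y', \tau_m)\,dx'\,dy',
\]
and then to exploit the defining property of the Green's function, recorded in \eqref{eq:bkinteg}, that $I_{n,j}^{m}$ is exactly the value at $\tau_{m+1}$ of the solution $u$ of the PIDE \eqref{eq:2dPIDEs} on $(\tau_m,\tau_{m+1}]$ subject to the initial datum $u(\cdot,\tau_m)=\phi(\cdot,\tau_m)$. The argument then separates into (a) a time Taylor expansion of $I_{n,j}^{m}$ producing the right-hand side, and (b) an estimate of the quadrature/truncation error between the discrete sum and $I_{n,j}^{m}$.

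For step (a), since $u$ solves \eqref{eq:2dPIDEs} we have $\partial u/\partial\tau=\Lcal u+\Jcal u$ and $u(\cdot,\tau_m)=\phi(\cdot,\tau_m)$, so a first-order Taylor expansion in $\tau$ about $\tau_m$ gives
\[
I_{n,j}^{m}=u(x_n,y_j,\tau_{m+1})=\phi_{n,j}^{m}+\Delta\tau\,\l[\Lcal\phi+\Jcal\phi\r]_{n,j}^{m}+\tfrac12(\Delta\tau)^2\,\partial_\tau^2 u(x_n,y_j,\tilde\tau),
\]
for some $\tilde\tau\in(\tau_m,\tau_{m+1})$. The remainder is $\Ocal((\Delta\tau)^2)=\Ocal(h^2)$ once $\partial_\tau^2 u$ is bounded uniformly in $\Delta\tau$; differentiating the PIDE once more gives $\partial_\tau^2 u=(\Lcal+\Jcal)^2 u$, which stays bounded because $\phi$ is smooth with bounded spatial derivatives and $\Jcal$ maps bounded functions to bounded functions ($f$ being a probability density). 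The same expansion can be read off in Fourier space from $G(\bmeta,\Delta\tau)=\exp(\Psi(\bmeta)\Delta\tau)=1+\Psi(\bmeta)\Delta\tau+\Ocal((\Delta\tau)^2)$ in \eqref{eq:vmf_G_closed}, since $\Psi(\bmeta)$ is precisely the Fourier symbol of $\Lcal+\Jcal$; taking $\phi\equiv 1$ recovers the identity $\iint g=e^{-r\Delta\tau}$ used in Lemma~\ref{lemma:rootuni}, which is a useful consistency check.

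For step (b), the discrete sum differs from $I_{n,j}^{m}$ through the three error sources catalogued in Subsection~\ref{ssc:error}: the domain-truncation error $\errorb$ from replacing $\Rbb^2$ by $\myD^{\dagger}$, the quadrature error $\errorc$ from the composite trapezoidal rule on $\myD^{\dagger}$, and the series-truncation error $\errorf$ from using $g_{n-l,j-d}=g_{n-l,j-d}(\Delta\tau,K)$ in place of the exact series. Under Assumption~\ref{as:dis_parameter} and the bounds \eqref{eq:err_h}, one has $\errorb=\Ocal(he^{-1/h})$ and $\errorc=\Ocal(h^2)$; the quadrature bound is legitimate because the grid resolves the increasingly peaked kernel, namely $\Delta x/\sqrt{\Delta\tau}=\Ocal(\sqrt h)\to 0$. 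The $\errorf$ contribution must be handled at the level of the full sum rather than node-by-node: since each omitted term $g_k\ge 0$ and $\phi$ is bounded, it is controlled by $\|\phi\|_\infty\sum_{k>K}\Delta x\,\Delta y\,\mysum_{l\in\Nd}^{d\in\Jd}\varphi_{l,d}\,g_k(x_{n-l},y_{j-d},\Delta\tau)$, which by the integrated Poisson-tail estimate underlying \eqref{eq:Kbound} is $\Ocal\!\l((e\lambda\Delta\tau)^{K+1}/(K+1)^{K+1}\r)=\Ocal(h^2)$ once $K=\Ocal(\ln(1/h))$. Dropping the weights $\varphi_{l,d}$ in the statement is harmless, since $\varphi_{l,d}\neq 1$ only on $\partial\myD^{\dagger}$, where $g_{n-l,j-d}$ is exponentially small for interior nodes and is absorbed into the $\errorb$-type term. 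Adding the estimates of (a) and (b) yields \eqref{eq:error_analysis_smooth}.

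The main obstacle is the uniform-in-$\Delta\tau$ control of the time-Taylor remainder in (a), i.e.\ the time regularity of the nonlocal PIDE solution $u$: bounding $\partial_\tau^2 u=(\Lcal+\Jcal)^2 u$ brings in up to fourth-order spatial derivatives of $\phi$ (through $\Lcal^2$) together with nested jump integrals (through $\Jcal^2$ and the cross terms), so this is exactly where the smoothness and boundedness of $\phi$ and its derivatives, and the boundedness-preserving property of $\Jcal$, are indispensable. A secondary, more technical point is ensuring that the quadrature constant in $\errorc$ is uniform over the reference node despite the kernel becoming ever more peaked as $\Delta\tau\to 0$; this is precisely where the resolution condition $\Delta x,\Delta y=\Ocal(h)$ against kernel width $\Ocal(\sqrt{\Delta\tau})=\Ocal(\sqrt h)$ from Assumption~\ref{as:dis_parameter} enters.
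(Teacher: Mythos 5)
Your proof is correct in substance but takes a genuinely different route from the paper for the central expansion. The paper mollifies $\phi(\cdot,\tau_m)$ to an $L^1$ function $\chi$, passes to Fourier space, Taylor-expands the symbol $G(\bmeta,\Delta\tau)=e^{\Psi(\bmeta)\Delta\tau}=1+\Psi(\bmeta)\Delta\tau+\mathcal{R}(\bmeta)\Delta\tau^2$, identifies $\mathcal{F}[\chi]\Psi=\mathcal{F}[\Lcal\chi+\Jcal\chi]$, and bounds the remainder by showing $\iint_{\Rbb^2}|\mathcal{R}(\eta_x,\eta_y)|\,d\eta_x d\eta_y<\infty$ — a step that uses the non-degeneracy $|\rho|<1$, $\sigma_x,\sigma_y>0$ so that the Gaussian factor dominates the quartic polynomial $|\Psi|^2$. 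You instead stay in physical space: you read the convolution as the time-$\tau_{m+1}$ value of the PIDE solution with datum $\phi(\cdot,\tau_m)$ and Taylor-expand in time, pushing the remainder onto $\partial_\tau^2 u=(\Lcal+\Jcal)^2u$. This avoids the mollification entirely (no Fourier transform of $\phi$ is ever needed) and does not use non-degeneracy of the diffusion, but it shifts the burden to a uniform-in-$\Delta\tau$ bound on $(\Lcal+\Jcal)^2u$ at intermediate times. You correctly identify this as the crux, but you should make the mechanism explicit: because the operator is translation-invariant, spatial derivatives commute with the evolution, and convolution with $g\ge 0$, $\|g\|_{L^1}=e^{-r\Delta\tau}\le 1$, is an $L^\infty$-contraction, so $\|(\Lcal+\Jcal)^2u(\cdot,\tau)\|_\infty$ is controlled by the sup-norms of $\partial^\alpha\phi(\cdot,\tau_m)$ up to order four. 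Note that $\mathcal{B}\cap\mathcal{C}^\infty$ by itself does not guarantee these derivative bounds on all of $\Rbb^2$; you need the same kind of cutoff/mollification outside $\myD^{\dagger}$ that the paper performs for $\chi$, after which the discrepancy between $\phi$ and its cutoff is absorbed into the $\errorb$-type boundary term. Your step (b) matches the paper's bookkeeping of $\errorb$, $\errorc$, $\errorf$; your remarks on resolving the peaked kernel ($\Delta x/\sqrt{\Delta\tau}=\Ocal(\sqrt h)$) and on the trapezoidal weights $\varphi_{l,d}$ being omitted from the statement are more careful than the paper itself, which simply asserts $\errorc=\Ocal(h^2)$ for the fixed domain $\myD^{\dagger}$.
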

\begin{proof}[Proof of Lemma~\ref{lemma:ar}]
Lemma~\ref{lemma:ar} can be proved using similar techniques in
\cite{lu2024semi}[Lemmas~5.2].
Starting from the discrete convolution on the left-hand-side (lhs) of \eqref{eq:error_analysis_smooth},
we need to recover an associated convolution integral of the form \eqref{eq:bkinteg}
which is posed on an infinite integration region.
Since for an arbitrary fixed $\tau_m$, $\phi(x, y, \tau_m)$ is not necessarily in $L_1(\Rbb^2)$, standard mollification techniques can be used to obtain a mollifier $\chi(x, y, \tau_m) \in L_1(\Rbb^2)$ which agrees with  $\phi(x, y, \tau_m)$ on $\mathbf{D}^{\dagger}$ \cite{Johnlee}, and has bounded derivatives up to second order across $\Rbb^2$. For brevity, instead of $\chi(x, y, \tau_{m})$, we will write $\chi(x, y)$.
Recalling different errors outlined in \eqref{eq:err_h}, 
we have
\begin{align}
\label{eq:error_smooth_b}
\Delta x \Delta y
\mysum_{l\in\mathbb{N}^{\dagger}}^{d\in\mathbb{N}^{\dagger}} \varphi_{l, d}~
    g_{n-l, j - d}~  \phi_{l,d}^{m}~
&\overset{\text{(i)}}{=}
\Delta x \Delta y
\mysum_{l\in\mathbb{N}^{\dagger}}^{d\in\mathbb{N}^{\dagger}} \varphi_{l, d}~
     g(x_{n-l}, y_{j -d}, \Delta \tau)~  \phi_{l,d}^{m}~  +   \errorf
\nonumber
\\
&\overset{\text{(ii)}}{=}
\iint_{\Rbb^2} g(x_n - x, y_j - y, \Delta \tau)~ \chi(x, y) ~dx~dy +
\errorf + \errorb + \errorc
\nonumber
\\
&\overset{\text{(iii)}}{=}  [\chi*g](x_n, y_j) + \mathcal{O}(h^2) +  \mathcal{O}\big(he^{-1/h}\big) + \mathcal{O}(h^2)
\nonumber
\\
&=
\mathcal{F}^{-1}\l[\mathcal{F}\left[\chi\right](\eta_x, \eta_y)~ G\left(\eta_x, \eta_y, \Delta \tau\right)\r](x_n, y_j) + \mathcal{O}(h^2).
 \end{align}
 Here, in (i), the error $\errorf$ is the series truncation error;
 in (ii), two additional errors $\errorb$ and $\errorc$ are due to the boundary truncation error  and
 the numerical integration error, respectively; in (iii) $[\chi * g]$ denotes the convolution of $\chi(x, y)$ and $g(x, y, \Delta \tau)$; in addition, $\errorf = \mathcal{O}(h^2)$, $\errorb = \mathcal{O}\big(he^{-1/h}\big)$
  and $\errorc = \mathcal{O}(h^2)$ as previously discussed in \eqref{eq:err_h}.
In \eqref{eq:error_smooth_b},  with $\Psi(\eta_{\myx}, \eta_{\myy})$ given in \eqref{eq:G_closed}, expanding $G(\eta_{\myx}, \eta_{\myy}; \Delta \tau) = e^{\Psi(\eta_{\myx}, \eta_{\myy})\Delta \tau}$
using a Taylor series gives
\EQ
\label{eq:taylor}
G(\eta_{\myx}, \eta_{\myy}; \Delta \tau) \approx 1 + \Psi(\eta_{\myx}, \eta_{\myy}) \Delta \tau + \mathcal{R}(\eta_{\myx}, \eta_{\myy}) \Delta \tau^2,
\quad
\mathcal{R}(\eta_{\myx}, \eta_{\myy}) = \frac{\Psi(\eta_{\myx}, \eta_{\myy})^2 e^{\xi \Psi(\eta_{\myx}, \eta_{\myy}) }}{2},
\quad \xi \in (0, \Delta \tau).
\EN
Therefore,
\EQA
\label{eq:error_smooth_1}
\l[\chi*g\r](x_n, y_j)
&=&
\mathcal{F}^{-1}\l[\mathcal{F}\left[\chi\right]\!(\eta_{\myx},\eta_{\myy})~\l(1 + \Psi(\eta_{\myx},\eta_{\myy})\Delta \tau + \mathcal{R}(\eta_{\myx}, \eta_{\myy}) \Delta \tau^2)\r) \r]\l(x_n, y_j\r)
\nonumber
\\
&=& \chi(x_n, y_j) + \Delta \tau \mathcal{F}^{-1}\l[\mathcal{F}\left[\chi\right]\!(\eta_{\myx}, \eta_{\myy})~\Psi\left(\eta_{\myx}, \eta_{\myy}\right)\r](x_n, y_j)
\nonumber
\\
&& \qquad + \Delta \tau^2  \mathcal{F}^{-1}\l[\mathcal{F}\left[\chi\right]\!(\eta_{\myx}, \eta_{\myy}) ~\mathcal{R}(\eta_{\myx}, \eta_{\myy}) \r](x_n, y_j).
\ENA
Here, the first term in \eqref{eq:error_smooth_1}, namely $\chi(x_n, y_j) \equiv \chi(x_n, y_j, \tau_m) $ is simply $\phi_{n, j}^{m}$
by construction of $\chi(\cdot)$.
For the second term in \eqref{eq:error_smooth_1}, we focus on  $\mathcal{F}\left[\chi\right]\!(\eta_{\myx}, \eta_{\myy})~\Psi\left(\eta_{\myx}, \eta_{\myy}\right)$.
Recalling the closed-form expression for $\Psi(\eta_{\myx}, \eta_{\myy})$ in \eqref{eq:G_closed},
we obtain $\mathcal{F}[\chi](\eta_{\myx}, \eta_{\myy}) \Psi(\eta_{\myx}, \eta_{\myy}) $

\begin{align}
 \mathcal{F}[\chi](\eta_{\myx}, \eta_{\myy}) \Psi(\eta_{\myx}, \eta_{\myy}) &= \mathcal{F}\big[ \frac{\sigma_{x}^{2}}{2} \chi_{xx} + \frac{\sigma_{y}^{2}}{2} \chi_{yy} + (r-\lambda\kappa_{x}-\frac{\sigma_x^2}{2})\chi_{x} + (r-\lambda\kappa_{y}-\frac{\sigma_y^2}{2})\chi_{y} + \rho\sigma_x\sigma_y\chi_{xy}-\ldots  \nonumber\\
 &\qquad\ldots- (r+\lambda)\chi +\lambda\int_{\Rbb^2}\chi(x+s_x,y+s_y)f(s_x,s_y)\md s_x\md s_y \big](\eta_{\myx}, \eta_{\myy})\nonumber\\
&= \mathcal{F}\l[ \mathcal{L} \chi+\mathcal{J} \chi\r](\eta_{\myx}, \eta_{\myy}).
\end{align}
Therefore, the second term in \eqref{eq:error_smooth_1} becomes
\EQA
\label{eq:im_2}
\Delta \tau \mathcal{F}^{-1}\l[\mathcal{F}\left[\chi\right]\!(\eta_{\myx}, \eta_{\myy})~\Psi\left(\eta_{\myx}, \eta_{\myy}\right)\r](x_n, y_j)
= \Delta \tau \l[ \mathcal{L}\chi +\mathcal{J} \chi\r] (\x_{n, j}^{m})
= \Delta \tau \l[ \mathcal{L} \chi+\mathcal{J} \chi \r]_{n, j}^{m}.
\ENA
For the third term $\Delta \tau^2  \mathcal{F}^{-1}\l[\mathcal{F}\left[\chi\right]\!(\eta_{\myx}, \eta_{\myy}) ~\mathcal{R}(\eta_{\myx}, \eta_{\myy}) \r](x_n, y_j)$ in \eqref{eq:error_smooth_1}, we have
\begin{align}
\label{eq:err_g}
&\Delta \tau^2 \l| \mathcal{F}^{-1}\l[ \mathcal{F} [\chi](\eta_{\myx},\eta_{\myy})~\mathcal{R}(\eta_{\myx}, \eta_{\myy})\r]\!(x_n, y_j)\r|
\nonumber
\\
& \qquad\qquad =
\frac{\Delta \tau^2}{(2\pi)^2} \bigg| \iint_{\Rbb^2} e^{i(\eta_{\myx} x_n+\eta_{\myy} y_j)} \mathcal{R}(\eta_{\myx},\eta_{\myy}) \bigg[ \iint_{\Rbb^2}  e^{-i(\eta_{\myx} x+\eta_{\myy} y)} \chi(x,y)~dx~dy  \bigg] d \eta_{\myx} d \eta_{\myy} \bigg|
\nonumber
\\
&\qquad \qquad\leq
\Delta \tau^2  \iint_{\Rbb^2} \l| \chi(x,y)  \r|~dxdy~ \iint_{\Rbb^2}
 \l| \mathcal{R}(\eta_{\myx},\eta_{\myy}) \r|~d\eta_{\myx} d\eta_{\myy}.
\end{align}
Noting $\ds \mathcal{R}(\eta_{\myx},\eta_{\myy}) = \frac{\Psi(\eta_{\myx},\eta_{\myy})^2 e^{\xi\Psi(\eta_{\myx},\eta_{\myy})}}{2}$,
as shown in \eqref{eq:taylor}, where a closed-form expression for $\Psi(\eta_{\myx},\eta_{\myy})$ is given in \eqref{eq:G_closed},
we obtain
\[
|\mathcal{R}(\eta_{\myx}, \eta_{\myy})| = \frac{|(\Psi(\eta_{\myx}, \eta_{\myy}))^2|}{2} \exp\big(\xi\big(-\frac{\sigma_{x}^{2}\eta_{\myx}^2}{2} - \frac{\sigma_{y}^{2}\eta_{\myy}^2}{2} - \rho\sigma_x\sigma_y\eta_{\myx}\eta_{\myy} - (r+\lambda)\big)\big).
\]
The term $|(\Psi(\eta_{\myx}, \eta_{\myy}))^2|$ can be written in the form
$|\Psi|^2 = \sum_{\substack{k+q=4 \\ k, q \geq 0}} C_{kq} \eta_{\myx}^k \eta_{\myy}^q$, where $C_{kq}$ are bounded coefficients.
This is a quartic polynomial in $\eta_{\myx}$ and $\eta_{\myy}$. Furthermore, the exponent of exponential term is bounded by
\[
-\frac{1}{2} \sigma_{\myx}^2 \eta_{\myx}^2 -\frac{1}{2} \sigma_{\myy}^2 \eta_{\myy}^2- \rho \sigma_{\myx}\sigma_{\myy}\eta_{\myx}\eta_{\myy}-(r+\lambda) \le 
-\frac{1}{2} \sigma_{\myx}^2 \eta_{\myx}^2-\frac{1}{2} \sigma_{\myy}^2 \eta_{\myy}^2 + |\rho| \sigma_{\myx}\sigma_{\myy}|\eta_{\myx}\eta_{\myy}|
\]
For $|\rho|< 1$, we have $|\rho| \sigma_{\myx}\sigma_{\myy}|\eta_{\myx}\eta_{\myy}| < \frac{1}{2}(\sigma_{\myx}^2 \eta_{\myx}^2 + \sigma_{\myy}^2 \eta_{\myy}^2)$. Therefore, we conclude that for $|\rho|< 1$,  the term  $\iint_{\Rbb^2} \l| \mathcal{R}(\eta_{\myx},\eta_{\myy}) \r|~d\eta_{\myx} d\eta_{\myy}$ is bounded since
\[
\iint_{\Rbb^2} |\eta_{\myx}|^k |\eta_{\myy}|^q~e^{-\frac{1}{2} \sigma_{\myx}^2 \eta_{\myx}^2-\frac{1}{2} \sigma_{\myy}^2 \eta_{\myy}^2 - \rho \sigma_{\myx}\sigma_{\myy}\eta_{\myx}\eta_{\myy}}~d\eta_{\myx}~d\eta_{\myy}, \quad k+q=4,~ k, q \geq 0,
\]
is also bounded.
Together with  $\chi(x,y) \in L_1(\mathbb{R}^2)$,  the rhs of \eqref{eq:err_g} is  $\mathcal{O}(\Delta \tau^2)$, i.e.\
\begin{align}
\label{eq:err_g2}
\Delta \tau^2 \l| \mathcal{F}^{-1}\l[ \mathcal{F} [\chi](\eta_{\myx},\eta_{\myy})~\mathcal{R}(\eta_{\myx}, \eta_{\myy})\r]\!(x_n, y_j)\r|
= \mathcal{O}(\Delta \tau^2)
\end{align}
Substituting \eqref{eq:im_2} and \eqref{eq:err_g2} back into \eqref{eq:error_smooth_1},
noting \eqref{eq:error_smooth_b} and $\chi(x, y, \tau_m) = \phi(x, y, \tau_m)$ for all  $(x, y) \in \mathbf{D}^{\dagger}$,
we have
\EQAS
\Delta x \Delta y
\mysum_{l\in\mathbb{N}^{\dagger}}^{d\in\mathbb{N}^{\dagger}} \varphi_{l, d}~
    g_{n-l, j - d}~  \phi_{l,d}^{m}~
=~
\phi_{n,j}^{m} + \Delta \tau \l[ \mathcal{L} \phi+ \mathcal{J} \phi\r]_{n,j}^{m}
+ \mathcal{O}(h^2).
\ENAS
This concludes the proof.
\end{proof}

\noindent Below, we state the key supporting lemma related to local consistency of our numerical scheme \eqref{eq:scheme_GF}.
\begin{lemma} [Local consistency]
\label{lemma:consistency}
Suppose that (i) the discretization parameter $h$ satisfies \eqref{eq:dis_parameter}.
Then, for any test function $\phi \in \mathcal{B}(\Oinf)\cap\mathcal{C}^{\infty}(\Oinf)$,
with  $\phi_{n, j}^{m} = \phi\l({\bf{x}}_{n, j}^{m}\r)$ and ${\bf{x}} \coloneqq (x_n, y_j, \tau_{m+1}) \in \Omega$, and for a sufficiently small $h$,  we have
\begin{linenomath}
\postdisplaypenalty=0
\EQA
\label{eq:lemma_1}
\mathcal{H}_{n, j}^{m+1}
\bigg(h, \phi_{n, j}^{m+1} + \xi,
\l\{\phi_{l, d}^{m}+\xi \r\}_{\subalign{l\in \Nd\\d\in \Jd}}
\bigg)
=
\left\{
\begin{array}{llllllllllr}
F_{\myin}\l(\cdot, \cdot\r)
&\!\!\!\!\!+~
c(\x)\xi
+ \mathcal{O}(h)
&&
{\bf{x}} \in \Omega_{\myin},
\\
F_{\myout}\l(\cdot, \cdot\r)
&
&&
{\bf{x}} \in \Omega_{\myout};
\\
F_{\tau_0}\l(\cdot, \cdot\r)
&
&&
{\bf{x}} \in \Omega_{\tau_0}.
\end{array}\right.
\ENA
\end{linenomath}
Here, $\xi$ is a constant, and $c(\cdot)$ is a bounded function
satisfying $|c({\bf{x}})| \le \max(r, 1)$ for all ${\bf{x}}~\in~\Omega$.
The  operators $F_{\myin}(\cdot, \cdot)$, $F_{\myout}(\cdot, \cdot)$, and $F_{\tau_0}(\cdot, \cdot)$, defined in \eqref{eq:fall}, are functions of $\l({\bf{x}}, \phi\l({\bf{x}}\r)\r)$.
\end{lemma}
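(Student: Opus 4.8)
The plan is to treat the three sub-domains separately, with essentially all the work concentrated on the interior case $\mathbf{x}\in\Omega_{\myin}$. For $\mathbf{x}\in\Omega_{\myout}$ and $\mathbf{x}\in\Omega_{\tau_0}$ the scheme operator $\mathcal{H}_{n,j}^{m+1}$ coincides, by construction, with the linear expressions $v_{n,j}^{m+1}-\hat v_{n,j}e^{-r\tau_{m+1}}$ and $v_{n,j}^{m+1}-\hat v_{n,j}$. Substituting $v_{n,j}^{m+1}=\phi_{n,j}^{m+1}+\xi$ reproduces $F_{\myout}$ and $F_{\tau_0}$ evaluated at the perturbed value $\phi_{n,j}^{m+1}+\xi$ exactly, with no discretization error; so these two cases are immediate and the $c(\mathbf{x})\xi+\mathcal{O}(h)$ term is absent.

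For the interior case I would expand the two arguments of the $\min$ in $\mathcal{C}_{n,j}^{m+1}(\cdot)$ from \eqref{eq:scheme_CD} after the substitution $v_{n,j}^{m+1}=\phi_{n,j}^{m+1}+\xi$, $v_{l,d}^{m}=\phi_{l,d}^{m}+\xi$. The obstacle argument is trivial: $v_{n,j}^{m+1}-\hat v_{n,j}=(\phi_{n,j}^{m+1}-\hat v_{n,j})+\xi=:A_2+\xi$. For the PIDE argument I would split the weighted sum into its $\phi$-part and its $\xi$-part. Applying Lemma~\ref{lemma:ar} to the $\phi$-part gives $\phi_{n,j}^{m}+\Delta\tau[\mathcal{L}\phi+\mathcal{J}\phi]_{n,j}^{m}+\mathcal{O}(h^2)$, while Lemma~\ref{lemma:rootuni} controls the $\xi$-part as $\xi\bigl(e^{-r\Delta\tau}+\mathcal{O}(h^2)\bigr)$. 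A first-order Taylor expansion in $\tau$ converts $\phi_{n,j}^{m+1}-\phi_{n,j}^{m}$ into $\Delta\tau\,\partial_\tau\phi_{n,j}^{m+1}+\mathcal{O}(\Delta\tau^2)$ and shifts the operator values from $\tau_m$ to $\tau_{m+1}$ at an $\mathcal{O}(h)$ cost after division by $\Delta\tau$.

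The key bookkeeping step is the division by $\Delta\tau=C_3 h$: the $\mathcal{O}(h^2)$ quadrature and series-truncation errors become $\mathcal{O}(h)$ (this is precisely why second-order errors were needed in \eqref{eq:err_h}), and the factor $(1-e^{-r\Delta\tau})/\Delta\tau=r+\mathcal{O}(h)$ makes the net $\xi$-coefficient of the PIDE argument equal to $r+\mathcal{O}(h)$. Collecting terms, the PIDE argument equals $A_1+r\xi+\mathcal{O}(h)$ with $A_1=[\partial_\tau\phi-\mathcal{L}\phi-\mathcal{J}\phi]_{n,j}^{m+1}$, so that $F_{\myin}(\mathbf{x},\phi)=\min\{A_1,A_2\}$ and $\mathcal{C}_{n,j}^{m+1}(\cdot)=\min\{A_1+r\xi+\mathcal{O}(h),\,A_2+\xi\}$.

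I expect the main obstacle to be extracting a single bounded coefficient $c(\mathbf{x})$ from a $\min$ whose two branches carry different $\xi$-sensitivities, $r$ and $1$. First, by the $1$-Lipschitz property of $(a,b)\mapsto\min\{a,b\}$ in each slot, I would absorb the $\mathcal{O}(h)$ from the first branch, leaving $\min\{A_1+r\xi,A_2+\xi\}+\mathcal{O}(h)$. Then, viewing $f(\xi):=\min\{A_1+r\xi,A_2+\xi\}-\min\{A_1,A_2\}$ as a concave, piecewise-linear function of $\xi$ with $f(0)=0$ and all one-sided slopes lying in $\{r,1\}\subseteq(0,\max(r,1)]$ (using $r>0$), the fundamental-theorem bound $|f(\xi)|\le\max(r,1)\,|\xi|$ follows. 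Defining $c(\mathbf{x}):=f(\xi)/\xi$ for $\xi\neq0$ (and any admissible value otherwise) yields $\mathcal{C}_{n,j}^{m+1}(\cdot)=F_{\myin}(\mathbf{x},\phi)+c(\mathbf{x})\xi+\mathcal{O}(h)$ with $|c(\mathbf{x})|\le\max(r,1)$, which is the claimed interior identity. This branch-dependent control of $c(\mathbf{x})$ is the only genuinely nonlinear ingredient; everything else reduces to substituting Lemmas~\ref{lemma:ar} and~\ref{lemma:rootuni} and tracking the $\Delta\tau=\mathcal{O}(h)$ scaling.
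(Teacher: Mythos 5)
Your proposal is correct and follows essentially the same route as the paper's proof: the boundary and initial cases are immediate, and the interior case is handled by expanding the two arguments of the $\min$ in $\mathcal{C}_{n,j}^{m+1}$, applying Lemma~\ref{lemma:ar} to the $\phi$-part, using the normalization $\Delta x\,\Delta y\sum\varphi_{l,d}\,g_{n-l,j-d}=e^{-r\Delta\tau}+\mathcal{O}(h^2)$ for the $\xi$-part, and dividing by $\Delta\tau$ so the $\mathcal{O}(h^2)$ errors become $\mathcal{O}(h)$ and the $\xi$-coefficient becomes $r+\mathcal{O}(h)$. Your explicit extraction of $c(\mathbf{x})$ via the $1$-Lipschitz property of $\min$ and the piecewise-linear function $f(\xi)$ with slopes in $\{r,1\}$ is a welcome elaboration of a step the paper merely asserts; just note that you need the two-sided estimate (equality up to $\mathcal{O}(h^2)$) for the weight sum, which is established in the chain of identities inside the proof of Lemma~\ref{lemma:rootuni} rather than in its final one-sided inequality.
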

\begin{proof}[Proof of Lemma~\ref{lemma:consistency}]
We now show that the first equation of \eqref{eq:lemma_1} is true, that is,
\EQAS
\mathcal{H}_{n, j}^{m+1}(\cdot) &\equiv& \mathcal{C}_{n, j}^{m+1} \l(\cdot\r)
= F_{\myin}\l({\bf{x}}, \phi\l({\bf{x}}\r)\r)
+ c(\x)\xi + \mathcal{O}(h)
\\
&&\qquad \qquad
\text{if}~
x_{\min} < x_n < x_{\max},~
y_{\min} < y_j < y_{\max},~
0 < \tau_{m+1} \le T.
\ENAS
where operators $\mathcal{C}_{n, j}^{m+1}(\cdot)$ is defined in \eqref{eq:scheme_CD}.
In this case,  the first argument of the $\min(\cdot, \cdot)$ operator in $\mathcal{C}_{n, j}^{m+1}(\cdot)$ is written as follows
\begin{linenomath}
\postdisplaypenalty=0
\label{eq:all_c}
\begin{align}
\text{$1^{st}$ arg} &= \frac{1}{\Delta \tau}\bigg[ \phi_{n, j}^{m+1} + \xi
-
\Delta x \Delta y \mysum_{l \in \Nd}^{d \in \Jd} \varphi_{l, d}~
g_{n-l, j-d}~(\phi^{m}_{l,d}+ \xi) \bigg]
\nonumber
\\
&= \frac{1}{\Delta \tau}\bigg[ \phi_{n, j}^{m+1}
-
 \bigg(\Delta x \Delta y \mysum_{l \in \Nd}^{d \in \Jd} \varphi_{l, d}~
g_{n-l, j-d}~\phi^{m}_{l,d}\bigg)
+ \xi \bigg( 1 - \Delta x \Delta y \mysum_{l \in \Nd}^{d \in \Jd} \varphi_{l, d}~
g_{n-l, j-d}\bigg)
\bigg]
\nonumber
\\
&\overset{\text{(i)}}{=}
\frac{\phi_{n, j}^{m+1} - \phi_{n, j}^{m}}{\Delta \tau}
- \l[ \Lcal \phi + \Jcal \phi \r]_{n,j}^{m}
+ \mathcal{O}(h)
+ \frac{\xi}{\Delta \tau} \bigg( 1 - \bigg\{\Delta x \Delta y \mysum_{l \in \Nd}^{d \in \Jd} \varphi_{l, d}~
g_{n-l, j-d}\bigg\}\bigg).
\end{align}
\end{linenomath}
Here,  (i) follows from Lemma~\ref{lemma:ar}, where the $\mathcal{O}(h^2)$ error term is divided by $\Delta \tau$, yielding $\mathcal{O}(h)$.
Regarding the second term of \eqref{eq:all_c}, 
we have
$1 - \Delta x \Delta y \mysum_{l \in \Nd}^{d \in \Jd} \varphi_{l, d}~
g_{n-l, j-d} = \ldots$
\EQ
\label{eq:termq}
\ldots
=
\bigg(1 - \iint_{\Rbb^2}g(x_n-x,y_j - y, \Delta \tau)dxdy\bigg)
+
\bigg(\iint_{\Rbb^2}g(\cdot,\cdot, \Delta \tau)dxdy -
\Delta x \Delta y \mysum_{l \in \Nd}^{d \in \Jd} \varphi_{l, d}~
g_{n-l, j-d}
\bigg).
\EN
The first term of \eqref{eq:termq} is simply $1 - e^{-r\Delta \tau} = r \Delta \tau + \mathcal{O}(h^2)$,
due to \eqref{eq:gdb}. The second term  of \eqref{eq:termq} is simply $\mathcal{O}(h^2) + \mathcal{O}(he^{-1/h})=\mathcal{O}(h^2)$
due to infinite series truncation error, numerical integration error, and boundary truncation error, as noted earlier. Thus,  the second term of \eqref{eq:all_c} becomes
\[
\frac{\xi}{\Delta \tau} \bigg( 1 -  \Delta x \Delta y \mysum_{l \in \Nd}^{d \in \Jd} \varphi_{l, d}~
g_{n-l, j-d}\bigg) =   r\xi + \mathcal{O}(h).
\]
Substituting this result into \eqref{eq:all_c} gives
 \begin{linenomath}
\postdisplaypenalty=0
\begin{align*}
\text{$1^{st}$ arg} &=
\frac{\phi_{n, j}^{m+1} - \phi_{n, j}^{m}}{\Delta \tau}
-  \l[ \mathcal{L}  \phi + \mathcal{J} \phi \r]_{n,j}^{m}
+ r\xi + \mathcal{O}(h)
\overset{\text{(i)}}{=}
\big[ \partial\phi/\partial \tau -   \mathcal{L} \phi  - \mathcal{J} \phi \big]_{n,j}^{m+1} + r\xi
+ \mathcal{O} (h).
\end{align*}
\end{linenomath}
Here, in (i), we use $(\partial\phi/\partial \tau )_{n,j}^{m} = (\partial\phi/\partial \tau )_{n,j}^{m+1} + \mathcal{O}\l(h\r)$; for $z \in \{x, y\}$,
$(\partial\phi/\partial z )_{n,j}^{m} = (\partial\phi/\partial z )_{n,j}^{m+1} + \mathcal{O}\l(h\r)$;
and for the cross derivative term $(\partial^2\phi/\partial x \partial y)_{n,j}^{m} = (\partial^2\phi/\partial x \partial y)_{n,j}^{m+1} + \mathcal{O}\l(h\r)$.

The second argument of the $\min(\cdot, \cdot)$ operator in $\mathcal{C}_{n, j}^{m+1}(\cdot)$ is simply
$\phi_{n, j}^{m+1} + \xi  - \vh_{n, j}$. Thus,
\begin{align*}
\mathcal{C}_{n, j}^{m+1}(\cdot) &= \min \left(\big[ \partial\phi/\partial \tau -   \mathcal{L} \phi  - \mathcal{J} \phi \big]_{n,j}^{m+1}
+ r\xi
+ \mathcal{O} (h),~
\phi_{n, j}^{m+1} + \xi  - \vh_{n, j}\right)
\\
&=\min \left(\big[ \partial\phi/\partial \tau -   \mathcal{L} \phi  - \mathcal{J} \phi \big]_{n,j}^{m+1},~
\phi_{n, j}^{m+1} -  \vh_{n, j}\right)
+ c({\bf{x}})\xi + \mathcal{O} (h),
\\
&= F_{\myin}\l({\bf{x}}, \phi\l({\bf{x}}\r)\r)
+ c(\x)\xi + \mathcal{O}(h).
\end{align*}
Here, ${\bf{x}} =  (x_n, y_j, \tau_{m+1}) \in \Omega_{\myin}$, $|c({\bf{x}})| \le \max(r, 1)$. This proves the first equation in (\ref{eq:lemma_1}).
The remaining equations in (\ref{eq:lemma_1}) can be proved using similar arguments with the first equation,
and hence omitted for brevity. This concludes the proof.
\end{proof}

\noindent We now formally state a lemma regarding the consistency of scheme \eqref{eq:scheme_GF} in the viscosity sense.
\begin{lemma}
\label{lemma:consistency_viscosity}
Suppose that the discretization parameter $h$ satisfies \eqref{eq:dis_parameter}.
For all ${\bf{\hat{x}}} = (\hat{x}, \hat{y}, \hat{\tau}) \in \Oinf$,
and for any $\phi \in \mathcal{B}(\Oinf)\cap\mathcal{C}^{\infty}(\Oinf)$,
with  $\phi_{n, j}^{m} = \phi\big({\bf{x}}_{n, j}^{m}\big)$ and {\bf{x}}~=~$(x_n, y_j, \tau_{m+1})$,
the scheme \eqref{eq:scheme_GF} satisfies
\EQA
\limsup_{\subalign{h \to 0, & ~  {\bf{x}} \to {\bf{\hat{x}}} \\ \xi &\to 0}}
\mathcal{H}_{n, j}^{m+1}
\bigg(h, \phi_{n, j}^{m+1}+ \xi,
\l\{\phi_{l, d}^{m}+\xi \r\}_{\subalign{l\in \Nd\\d\in \Jd}} \bigg)
\leq
F^* \l(
              {\bf{\hat{x}}}, \phi({\bf{\hat{x}}}), D\phi({\bf{\hat{x}}}), D^2 \phi({\bf{\hat{x}}}),  \mathcal{J} \phi({\bf{\hat{x}}})\r),
\label{eq:consistency_viscosity_1}
\\
\liminf_{\subalign{h \to 0, & ~ {\bf{x}} \to {\bf{\hat{x}}} \\ \xi &\to 0}}
\mathcal{H}_{n, j}^{m+1}
\bigg(h, \phi_{n, j}^{m+1}+ \xi,
\l\{\phi_{l, k}^{m}+\xi \r\}_{\subalign{l\in \Nd\\d\in \Jd}} \bigg)
\geq
F_*\l(
              {\bf{\hat{x}}}, \phi({\bf{\hat{x}}}), D\phi({\bf{\hat{x}}}), D^2 \phi({\bf{\hat{x}}}),  \mathcal{J} \phi({\bf{\hat{x}}})
             \r).
\label{eq:consistency_viscosity_2}
\ENA
Here, $F^*(\cdot)$ and  $F_*(\cdot)$ respectively are
the u.s.c. and the l.s.c. envelop of the operator $F(\cdot)$ defined in \eqref{eq:F}.
\end{lemma}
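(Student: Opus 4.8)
The plan is to deduce the viscosity-sense consistency inequalities \eqref{eq:consistency_viscosity_1}--\eqref{eq:consistency_viscosity_2} directly from the pointwise local consistency already established in Lemma~\ref{lemma:consistency}, and then to absorb the possible discontinuity of $F(\cdot)$ across the interfaces separating $\Omega_{\myin}$, $\Omega_{\myout}$, and $\Omega_{\tau_0}$ into the u.s.c.\ and l.s.c.\ envelopes $F^*$ and $F_*$. I would fix ${\bf{\hat{x}}} \in \Oinf$ together with the test function $\phi$, and take an arbitrary admissible sequence $(h, {\bf{x}}, \xi) \to (0, {\bf{\hat{x}}}, 0)$, where ${\bf{x}} = {\bf{x}}_{n,j}^{m+1}$ ranges over reference grid nodes converging to ${\bf{\hat{x}}}$. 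For each such node, Lemma~\ref{lemma:consistency} gives
\[
\mathcal{H}_{n,j}^{m+1}\big(h, \phi_{n,j}^{m+1}+\xi, \{\phi_{l,d}^{m}+\xi\}\big) = F_{R({\bf{x}})}\big({\bf{x}}, \phi({\bf{x}})\big) + e({\bf{x}}, h, \xi),
\]
where $R({\bf{x}}) \in \{\myin, \myout, \tau_0\}$ is the sub-domain containing ${\bf{x}}$, the operators $F_{\myin}, F_{\myout}, F_{\tau_0}$ are those of \eqref{eq:fall}, and the residual $e$ equals $c({\bf{x}})\xi + \mathcal{O}(h)$ in the interior case and vanishes identically in the two boundary cases. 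Since $|c({\bf{x}})| \le \max(r,1)$ is bounded and $\xi, h \to 0$, the residual $e \to 0$ along the sequence.

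The key observation is that, because $\phi$ is a fixed smooth test function, the arguments $\phi({\bf{x}})$, $D\phi({\bf{x}})$, $D^2\phi({\bf{x}})$, and $\mathcal{J}\phi({\bf{x}})$ depend continuously on position and converge to their values at ${\bf{\hat{x}}}$; consequently the only source of discontinuity in $F_{R({\bf{x}})}({\bf{x}}, \cdot)$ is the region index $R({\bf{x}})$. For the upper bound \eqref{eq:consistency_viscosity_1}, I would invoke the definition of the u.s.c.\ envelope in \eqref{eq:envelop}: since $({\bf{x}}, \phi({\bf{x}}), D\phi({\bf{x}}), D^2\phi({\bf{x}}), \mathcal{J}\phi({\bf{x}}))$ converges to $({\bf{\hat{x}}}, \phi({\bf{\hat{x}}}), D\phi({\bf{\hat{x}}}), D^2\phi({\bf{\hat{x}}}), \mathcal{J}\phi({\bf{\hat{x}}}))$, for every $\epsilon>0$ and all ${\bf{x}}$ sufficiently near ${\bf{\hat{x}}}$ one has
\[
F_{R({\bf{x}})}\big({\bf{x}}, \phi({\bf{x}})\big) \le F^*\big({\bf{\hat{x}}}, \phi({\bf{\hat{x}}}), D\phi({\bf{\hat{x}}}), D^2\phi({\bf{\hat{x}}}), \mathcal{J}\phi({\bf{\hat{x}}})\big) + \epsilon.
\]
Combining with $e \to 0$, taking $\limsup$ over the sequence, and letting $\epsilon \downarrow 0$ yields \eqref{eq:consistency_viscosity_1}. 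The lower bound \eqref{eq:consistency_viscosity_2} follows by the mirror-image argument with the l.s.c.\ envelope $F_*$ and $\liminf$.

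The main obstacle is purely the interface behaviour. When ${\bf{\hat{x}}}$ lies in the interior of a single sub-domain, $F(\cdot)$ is continuous there, so $F^* = F_* = F$ and both inequalities collapse to the equality $\lim \mathcal{H}_{n,j}^{m+1} = F({\bf{\hat{x}}}, \phi({\bf{\hat{x}}}), \ldots)$, which is immediate. The substantive case is ${\bf{\hat{x}}}$ on the spatial boundary $\partial \myD_{\myin} \times (0,T]$ (where $F_{\myin}$ meets $F_{\myout}$) or on the terminal slice $\{\tau = 0\}$ (where $F_{\tau_0}$ meets the rest): here grid nodes ${\bf{x}} \to {\bf{\hat{x}}}$ may fall into different regions along different subsequences, so $F_{R({\bf{x}})}({\bf{x}}, \cdot)$ has several distinct one-sided limits. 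The envelopes $F^*$ and $F_*$ are defined precisely to dominate, respectively to be dominated by, all of these limits simultaneously, so the desired inequalities hold without any additional estimate. The only point requiring care is confirming that the expansion of Lemma~\ref{lemma:consistency} is uniform enough that $e \to 0$ regardless of which region the nodes visit; this is guaranteed since the bound $|c({\bf{x}})| \le \max(r,1)$ and the $\mathcal{O}(h)$ term are independent of the region.
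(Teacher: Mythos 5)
Your proposal is correct and follows essentially the same route as the paper, which simply notes that the result follows from Lemma~\ref{lemma:consistency} together with the definitions of the u.s.c.\ and l.s.c.\ envelopes in \eqref{eq:envelop}; your write-up merely supplies the details (vanishing of the residual $c(\x)\xi + \mathcal{O}(h)$, continuity of $\phi$ and its derivatives, and the role of the envelopes at the sub-domain interfaces) that the paper leaves implicit.
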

\begin{proof}[Proof of Lemma~\ref{lemma:consistency_viscosity}]
The proof is straightforward, deriving from Lemma~\ref{lemma:consistency}
and the definitions of u.s.c.\ and l.s.c.\ envelopes given in \eqref{eq:envelop}.
\end{proof}

\subsection{Monotonicity}
Below, we present a result concerning the monotonicity of our scheme \eqref{eq:scheme_GF}.
\begin{lemma}{(Monotonicity)}
\label{lemma:mon}
Scheme \eqref{eq:scheme_GF} satisfies
 \EQA
\label{eq:mon}
\mathcal{H}^{m+1}_{n,j}\l(h,  v^{m+1}_{n,j}, \l\{w^{m}_{l,d}\r\}\r)\leq\mathcal{H}^{m+1}_{n,j}\l(h, v^{m+1}_{n,j}, \l\{z^{m}_{l,d}\r\}\r)
\ENA
for bounded $\l\{w^{m}_{l,d}\r\}$ and $\l\{z^{m}_{l,d}\r\}$ having $\l\{w^{m}_{l,d}\r\}\geq \l\{z^{m}_{l,d}\r\}$,
where the inequality is understood in the component-wise sense.
\end{lemma}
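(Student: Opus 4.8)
The plan is to verify \eqref{eq:mon} directly from the closed form of $\mathcal{H}_{n,j}^{m+1}(\cdot)$ in \eqref{eq:scheme_GF}, handling the three sub-domains separately. On $\Omega_{\myout}$ and $\Omega_{\tau_0}$ the operator equals $v_{n,j}^{m+1}-\hat{v}_{n,j}e^{-r\tau_{m+1}}$ and $v_{n,j}^{m+1}-\hat{v}_{n,j}$ respectively; neither depends on the previous-timestep data $\{v_{l,d}^{m}\}$, so \eqref{eq:mon} holds there with equality. The only case requiring work is $\x\in\Omega_{\myin}$, where $\mathcal{H}_{n,j}^{m+1}(\cdot)=\mathcal{C}_{n,j}^{m+1}(\cdot)$ as defined in \eqref{eq:scheme_CD}.

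For the interior case, I would note first that the second entry of the $\min(\cdot,\cdot)$ in \eqref{eq:scheme_CD}, namely $v_{n,j}^{m+1}-\hat{v}_{n,j}$, is independent of $\{v_{l,d}^{m}\}$ and hence is unchanged when $\{z_{l,d}^{m}\}$ is replaced by $\{w_{l,d}^{m}\}$. It therefore suffices to control the first entry
\[
\frac{1}{\Delta\tau}\bigg(v_{n,j}^{m+1}
-\Delta x\,\Delta y\mysum_{l\in\Nd}^{d\in\Jd}\varphi_{l,d}\,g_{n-l,j-d}\,v_{l,d}^{m}\bigg).
\]
The crucial structural facts are that the composite-trapezoidal weights $\varphi_{l,d}$ are non-negative and that, by Remark~\ref{rm:mono} (which rests on the term-by-term non-negativity of the truncated series in Corollary~\ref{cor:twodis}), the rescaled weights $\Delta x\,\Delta y\,g_{n-l,j-d}\equiv\widetilde{g}_{n-l,j-d}(\Delta\tau)$ from \eqref{eq:widetilde_g} are non-negative for all admissible $n,l,j,d$. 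Consequently, replacing $\{z_{l,d}^{m}\}$ by the componentwise-larger $\{w_{l,d}^{m}\}$ can only increase the non-negatively weighted double sum; since this sum enters with a negative sign and $\Delta\tau>0$, the first entry of the $\min$ can only decrease.

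To finish, I would invoke the elementary monotonicity of the $\min$: for fixed second argument, $\min(a,c)$ is non-decreasing in $a$, so decreasing the first entry decreases (or leaves unchanged) the value of $\mathcal{C}_{n,j}^{m+1}(\cdot)$. This yields
\[
\mathcal{C}_{n,j}^{m+1}\big(h,v_{n,j}^{m+1},\{w_{l,d}^{m}\}\big)
\le
\mathcal{C}_{n,j}^{m+1}\big(h,v_{n,j}^{m+1},\{z_{l,d}^{m}\}\big),
\]
which is precisely \eqref{eq:mon} on $\Omega_{\myin}$, completing the argument. I do not expect a genuine obstacle: the whole proof is the direct payoff of the scheme's design, since the non-negativity of each series term guaranteed by Remark~\ref{rm:mono} survives truncation and makes the weights non-negative without any delicate estimate or limiting argument. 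The only point needing care is bookkeeping—confirming that the statement holds the current-node value $v_{n,j}^{m+1}$ fixed and varies only the previous-timestep arguments, so that the unconstrained term $v_{n,j}^{m+1}-\hat{v}_{n,j}$ and the two boundary equations contribute no dependence on $\{w_{l,d}^{m}\}$ or $\{z_{l,d}^{m}\}$.
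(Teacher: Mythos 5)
Your proof is correct and follows essentially the same route as the paper's: both reduce the claim to the interior case and rest entirely on the non-negativity of the trapezoidal weights $\varphi_{l,d}$ and of the (truncated-series) Green's function values $g_{n-l,j-d}$. The only cosmetic difference is that you exploit the monotonicity of $\min(a,c)$ in its first argument (since only that argument depends on the past data), whereas the paper bounds the difference of the two $\min$'s via $\min(c_1,c_2)-\min(c_3,c_4)\le\max(c_1-c_3,\,c_2-c_4)$; both are valid and equivalent in substance.
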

\begin{proof}[Proof of Lemma~\ref{lemma:mon}]
Since scheme \eqref{eq:scheme_GF} is defined case-by-case, to establish \eqref{eq:mon}, we show that each
case satisfies \eqref{eq:mon}. It is straightforward that the scheme satisfies \eqref{eq:mon} in
${\Omega}_{\tau_0}$) and $ {\Omega}_{{\myout}}$. We now establish
that  $\mathcal{C}_{n, j}^{m+1}\l(\cdot\r)$, as defined in \eqref{eq:scheme_CD}
for $\Omega_{\myin}$, also satisfies \eqref{eq:mon}. We have
\begin{align}
\label{eq:mon_p}
&\mathcal{C}^{m+1}_{n,j}\l(h,  v^{m+1}_{n,j}, \l\{w^{m}_{l,d}\r\}\r)-\mathcal{C}^{m+1}_{n,j}\l(h, v^{m+1}_{n,j}, \l\{z^{m}_{l,d}\r\}\r)\nonumber
\\
&\quad = \min\bigg\{ \frac{1}{\Delta \tau}\bigg(v_{n,j}^{m+1}
-
 \Delta x \Delta y \mysum_{l \in \Nd}^{d \in \Jd} \varphi_{l, d}~
g_{n-l, j-d}~w^{m}_{l,d} \bigg),~ v_{n,j}^{m+1}-\hat{v}_{n,j}\bigg\}
\nonumber
\\
&\qquad \qquad -
\min\bigg\{ \frac{1}{\Delta \tau}\bigg(v_{n,j}^{m+1}
-
 \Delta x \Delta y \mysum_{l \in \Nd}^{d \in \Jd} \varphi_{l, d}~
g_{n-l, j-d}~z^{m}_{l,d} \bigg),~ v_{n,j}^{m+1}-\hat{v}_{n,j}\bigg\}
\nonumber
\\
&\quad \overset{\text{(i)}}{\le}
\max
\bigg\{ \frac{1}{\Delta \tau}  \Delta x \Delta y \mysum_{l \in \Nd}^{d \in \Jd} \varphi_{l, d}~
g_{n-l, j-d}~(z_{l,d}^{m+1} - w_{l,d}^{m+1}),0\bigg\}  \overset{\text{(ii)}}{=} 0.
\end{align}
Here, (i) is due to the fact that $\min (c_1, c_2) - \min(c_3, c_4) \le
\max(c_1-c_3, c_2-c_4)$ for real numbers $c_1, c_2, c_3, c_4$;
(ii) follows from $\max(\cdot, 0) = 0$, since $z_{l,d}^{m+1} - w_{l,d}^{m+1} \le 0$ and $g_{n-l, j-d}\ge 0$ for
all $n\in \Nbb$, $j \in \Jbb$, $l\in \Nd$, and $d \in \Jd$.
This concludes the proof.
\end{proof}

\subsection{Main convergence result}
We have demonstrated that the scheme \ref{eq:scheme_GF} satisfies three key properties in
$\Omega$: (i) $\ell_{\infty}$-stability (Lemma~\ref{lemma:stability}), (ii) consistency in the viscosity
sense (Lemma~\ref{lemma:consistency_viscosity}) and (iii) monotonicity (Lemma~\ref{lemma:mon}).
With a provable strong comparison principle result for
$\Omega_{\myin}$ in Theorem~\ref{theorem:comparison}, we now present the main convergence result of the paper.
\begin{theorem} [Convergence to viscosity solution in $\Omega_{\myin}$]
\label{thm:convergence}
Suppose that all the conditions for Lemmas~\ref{lemma:stability}), \ref{lemma:consistency_viscosity}
and \ref{lemma:mon} are satisfied. As the parameter discretization $h \to 0$, the scheme \eqref{eq:scheme_GF} converges uniformly on $\Omega_{\myin}$ to the unique continuous viscosity solution of the variational inequality \eqref{eq:F}
in the sense of Definition~\ref{def:vis}.
\end{theorem}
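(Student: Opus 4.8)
The plan is to apply the Barles–Souganidis framework, invoking the three properties already established—$\ell_\infty$-stability (Lemma~\ref{lemma:stability}), consistency in the viscosity sense (Lemma~\ref{lemma:consistency_viscosity}), and monotonicity (Lemma~\ref{lemma:mon})—together with the strong comparison result of Lemma~\ref{theorem:comparison}. First I would define the upper and lower relaxed limits of the numerical solution,
\[
\overline{v}(\xh) = \limsup_{\substack{h \to 0 \\ \x_{n,j}^m \to \xh}} v_{n,j}^m, \qquad \underline{v}(\xh) = \liminf_{\substack{h \to 0 \\ \x_{n,j}^m \to \xh}} v_{n,j}^m, \qquad \xh \in \Oinf.
\]
The $\ell_\infty$-stability of Lemma~\ref{lemma:stability} guarantees that both limits are finite and locally bounded on $\Oinf$; by construction $\overline{v}$ is u.s.c., $\underline{v}$ is l.s.c., and $\underline{v} \le \overline{v}$ everywhere.

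The crux is to show that $\overline{v}$ is a viscosity subsolution and $\underline{v}$ a viscosity supersolution of $F=0$ in the sense of Definition~\ref{def:vis}. For the subsolution property, I would take any $\phi \in \mathcal{B}(\Oinf)\cap\mathcal{C}^{\infty}(\Oinf)$ such that $\overline{v} - \phi$ attains a strict global maximum at some $\xh \in \Omega_{\myin}$ with $\overline{v}(\xh) = \phi(\xh)$. By the definition of $\limsup$ together with a standard perturbation argument, there exist $h_k \to 0$ and grid nodes $\x_{n_k,j_k}^{m_k} \to \xh$ at which $v_{n,j}^{m}-\phi_{n,j}^{m}$ attains a local maximum, with maximal value $\xi_k \to 0$. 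At such a node the scheme satisfies $\mathcal{H}_{n,j}^{m+1}(h, v_{n,j}^{m+1}, \{v_{l,d}^m\})=0$ and $v_{n,j}^{m+1}=\phi_{n,j}^{m+1}+\xi_k$, while $v_{l,d}^{m} \le \phi_{l,d}^{m}+\xi_k$ for all neighbours; monotonicity (Lemma~\ref{lemma:mon}) then yields
\[
0 = \mathcal{H}_{n,j}^{m+1}\big(h, \phi_{n,j}^{m+1}+\xi_k, \{v_{l,d}^m\}\big) \ge \mathcal{H}_{n,j}^{m+1}\big(h, \phi_{n,j}^{m+1}+\xi_k, \{\phi_{l,d}^{m}+\xi_k\}\big).
\]
Passing to the limit and invoking the consistency bound \eqref{eq:consistency_viscosity_2} gives $0 \ge F_*(\xh, \phi(\xh), D\phi(\xh), D^2\phi(\xh), \Jcal\phi(\xh))$, which is exactly the subsolution inequality. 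The supersolution property of $\underline{v}$ follows by the symmetric argument at a global minimum of $\underline{v}-\phi$, using the reverse half of monotonicity together with the consistency bound \eqref{eq:consistency_viscosity_1}.

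With $\overline{v}$ identified as a u.s.c.\ subsolution and $\underline{v}$ as an l.s.c.\ supersolution, I would then verify the boundary data: on $\Omega_{\tau_0}$ and $\Omega_{\myout}$ the scheme prescribes $v$ exactly through \eqref{eq:tau0i} and \eqref{eq:outi}, so both relaxed limits equal the continuous terminal/boundary data there, and in particular $\overline{v} \le \underline{v}$ on $\Oinf_{\tau_0}\cup\Oinf_{\myout}$. The comparison result of Lemma~\ref{theorem:comparison} then forces $\overline{v} \le \underline{v}$ throughout $\Omega_{\myin}$; combined with the trivial inequality $\underline{v} \le \overline{v}$ this yields $\overline{v}=\underline{v}=:v$, which is therefore the unique continuous viscosity solution of \eqref{eq:F} in $\Omega_{\myin}$. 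Equality of the relaxed limits to a common continuous function upgrades pointwise convergence to locally uniform convergence on $\Omega_{\myin}$ by a routine compactness argument. I expect the main obstacle to be the perturbation step that produces the grid-node maximizers with vanishing shift $\xi_k$—reconciling the discrete, wide stencil of $\mathcal{H}_{n,j}^{m+1}$ with the smooth test function while controlling $\xi_k \to 0$—and ensuring the boundary/terminal data pass cleanly to the limit so that Lemma~\ref{theorem:comparison} is applicable.
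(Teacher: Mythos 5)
Your proposal is correct and follows essentially the same route as the paper: the paper's proof simply cites the established $\ell_\infty$-stability, consistency, and monotonicity lemmas together with the comparison result and invokes the Barles--Souganidis convergence framework, whereas you spell out the standard relaxed-limits argument that underlies that framework. No gap; your version is just a more detailed unpacking of the same citation.
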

\begin{proof}[Proof of Theorem~\ref{thm:convergence}]
Our scheme is $\ell_\infty$-stable (Lemma \ref{lemma:stability}),
and consistent in the viscosity sense (Lemma \ref{lemma:consistency_viscosity})
and monotone (Lemma~\ref{lemma:mon}). Since a comparison result holds in $\Omega_{\myin}$ (Theorem~\ref{theorem:comparison}),
by \cite{barles-souganidis:1991, barles:1997, barles95a}, our scheme converges uniformly on $\Omega_{\myin}$ to the unique continuous viscosity solution of the variational inequality~\eqref{eq:F}.
\end{proof}

\section{Numerical experiments}
\label{sec:num_test}
In this section, we present the selected numerical results of our monotone integration method (MI)
applied to the American options under a two-asset Merton jump-diffusion model pricing problem.

\subsection{Preliminary}
\label{sc:num}
For our numerical experiments, we evaluate three parameter sets for the two-asset Merton jump-diffusion model, labeled as Case I, Case II, and Case III. The modeling parameters for these tests, reproduced from \cite{ghosh2022high}[Table~1], are provided in Table~\ref{tab:parameter02}. Notably, the parameters in Cases I, II, and III feature progressively larger jump intensity rates \(\lambda\). As previously mentioned, we can choose \(P^{\dagger} = P^{\dagger}_{\myx} \wedge P^{\dagger}_{\myy}\) sufficiently large to remain constant across all refinement levels (as \(h \to 0\)). Due to the varying jump intensity rates, we select computational domains of appropriate size for each case, listed in Table~\ref{tab:step03}, and confirm that these domains are sufficiently large through numerical validation in Subsection \ref{ssc:spatial_size}. Furthermore, values for \(x_{\mymin}^{\dagger}\), \(x_{\mymax}^{\dagger}\), \(y_{\mymin}^{\dagger}\), and \(y_{\mymax}^{\dagger}\) were chosen according to \eqref{eq:w_choice_green_jump_form}.
Unless specified otherwise, the details on mesh size and timestep refinement levels (``Refine.\ level'') used in all experiments are summarized in Table~\ref{tab:step01}.

 \begin{minipage}{0.5\textwidth}
\strut\vspace*{-\baselineskip}\newline
\center
\begin{tabular}{cccc}
\hline
 & \multicolumn{1}{c}{Case I} & \multicolumn{1}{c}{Case II} & \multicolumn{1}{c}{Case III}                                     \\ \cline{1-4}
 Diffusion parameters &  &      &                            \\ \cline{1-4}
  $\sigma_{\myx}$              & 0.12        &0.30 &0.20\\
    $\sigma_{\myy}$              & 0.15        &0.30&0.30\\
     $\rho$              & 0.30        &0.50&0.70\\ \hline\hline
Jump parameters & &&\
                            \\ \hline
$\lambda$                      & 0.60    &  2 & 8 \\
$\Tilde{\mu}_{\myx}$                      & -0.10    &  -0.50 &-0.05  \\
$\Tilde{\mu}_{\myy}$                      & 0.10    &  0.30  &-0.20\\
$\rhoh$                      & -0.20    &-0.60 &0.50   \\
$\Tilde{\sigma}_{\myx}$                      & 0.17    &0.40 &0.45  \\
$\Tilde{\sigma}_{\myy}$                      & 0.13    & 0.10 &0.06 \\\hline\hline
$K$ & 100 & 40 &40\\
$T$ (years) & 1 &0.5 & 1
\\
$r$  & 0.05 &0.05 & 0.05
\\
\hline
\end{tabular}

\captionof{table}{Model parameters used in numerical experiments for two-assets Merton jump-diffusion model
reproduced from \cite{ghosh2022high}~Table~1.
}
\label{tab:parameter02}
\end{minipage}
\begin{minipage}{0.5\textwidth}
\strut\vspace*{-\baselineskip}\newline
\center
\centering
\begin{tabular}{crrrr}
\hline
Refine.\  level & $N$ &  $J$                & $M$      \\
      & ($x$) & ($y$)               & ($\tau$) &  \\ \hline
0     & $2^{8}$ & $2^{8}$          & 50    \\
1     & $2^{9}$ & $2^{9}$          & 100     \\
2     & $2^{10}$ & $2^{10}$        & 200       \\
3     & $2^{11}$& $2^{11}$        & 400     \\
4     & $2^{12}$&$2^{12}$        & 800       \\
\hline
\end{tabular}
\captionof{table}{Grid and timestep refinement levels for numerical tests.}
\label{tab:step01}
\vspace{1em}
\begin{tabular}{crrr}
\hline
 & Case I & Case II & Case III               \\\hline
$x_{\mymin}$     &$\ln(X_{0})-1.5 $ &$\ln(X_{0})-3 $ &$\ln(X_{0})-6 $   \\
$x_{\mymax}$     &$\ln(X_{0})+1.5 $ &$\ln(X_{0})+3 $ &$\ln(X_{0})+6 $   \\
$y_{\mymin}$     &$\ln(Y_{0})-1.5 $ &$\ln(Y_{0})-3 $ &$\ln(Y_{0})-6 $   \\
$y_{\mymax}$     &$\ln(Y_{0})+1.5 $ &$\ln(Y_{0})+3 $ &$\ln(Y_{0})+6 $
\\\hline
\end{tabular}
\captionof{table}{Computational domains of numerical experiments for Cases I, II, and III.}
\label{tab:step03}
\end{minipage}
\subsection{Validation examples}
For the numerical experiments, we analyze two types of options: an American put-on-the-min option and an American put-on-the-average option, each with a strike price \( K \), as described in \cite{ghosh2022high, boen2020operator}.
\subsubsection{Put-on-the-min option}
%
Our first test case examines an American put option on the minimum of two assets, as described in \cite{ghosh2022high, boen2020operator}. The payoff function $\hat{v}(x, y)$ is defined as
\EQA
\label{eq:BSO_P1}
\hat{v}(x, y) = \max(K-\min(e^x,e^y),0), \quad K>0.
\ENA
As a representative example, we utilize the parameters specified in Case I, with initial asset values \(X_0 = 90\) and \(Y_0 = 90\), for the put-on-the-min option. Computed option prices for this test case are presented in Table~\ref{tab:result01}. To estimate the convergence rate of the proposed method, we calculate the ``Change'' as the difference between computed option prices at successive refinement levels and the ``Ratio'' as the quotient of these changes between consecutive levels. As shown, these computed option prices exhibit first-order convergence and align closely with results obtained using the operator splitting method in \cite{boen2020operator}. In addition, Figure~\ref{fig:sub4} displays the early exercise regions at \(T/2\) for this test case.

Tests conducted under Cases II and III demonstrate similar convergence behavior. Numerical results for American put-on-the-min options with various initial asset values and parameter sets are summarized in Section \ref{ssc:ct} [Table ~\ref{tab:result06}].

\begin{minipage}{0.5\textwidth}
\strut\vspace*{-\baselineskip}
\flushleft
\begin{linenomath}
\begin{table}[H]
\center
\begin{tabular}{llll}
\hline
 Refine.\ level     & Price        & Change      & Ratio           \\
\hline
     \quad   0         &16.374702 & &            \\
    \quad    1         &16.383298 &8.60e-03 &         \\
    \quad 2            &16.387210 &3.91e-03  &2.20         \\
      \quad  3         &16.389079 &1.87e-03  &2.09         \\
   \quad  4            &16.389991 &9.11e-04  &2.05          \\
\hline
Ref.\ \cite{boen2020operator}&  16.390 & &
\\     \hline

\end{tabular}
\captionof{table}{Convergence study for a put-on-the-min American option under two-assets Merton jump-diffusion model (modeling parameters in Table \ref{tab:parameter02}, Case I) with initial asset values $X_0 =90$ and $Y_0 =90$ -  payoff function in \eqref{eq:BSO_P1}. Reference prices: by FD  method (operator splitting) is 16.390 \cite{boen2020operator}.
}
\label{tab:result01}
\end{table}
\end{linenomath}
\end{minipage}
\begin{minipage}{0.5\textwidth}
\strut\vspace*{-\baselineskip}

\begin{figure}[H]
\center
\includegraphics[width=0.8\linewidth]{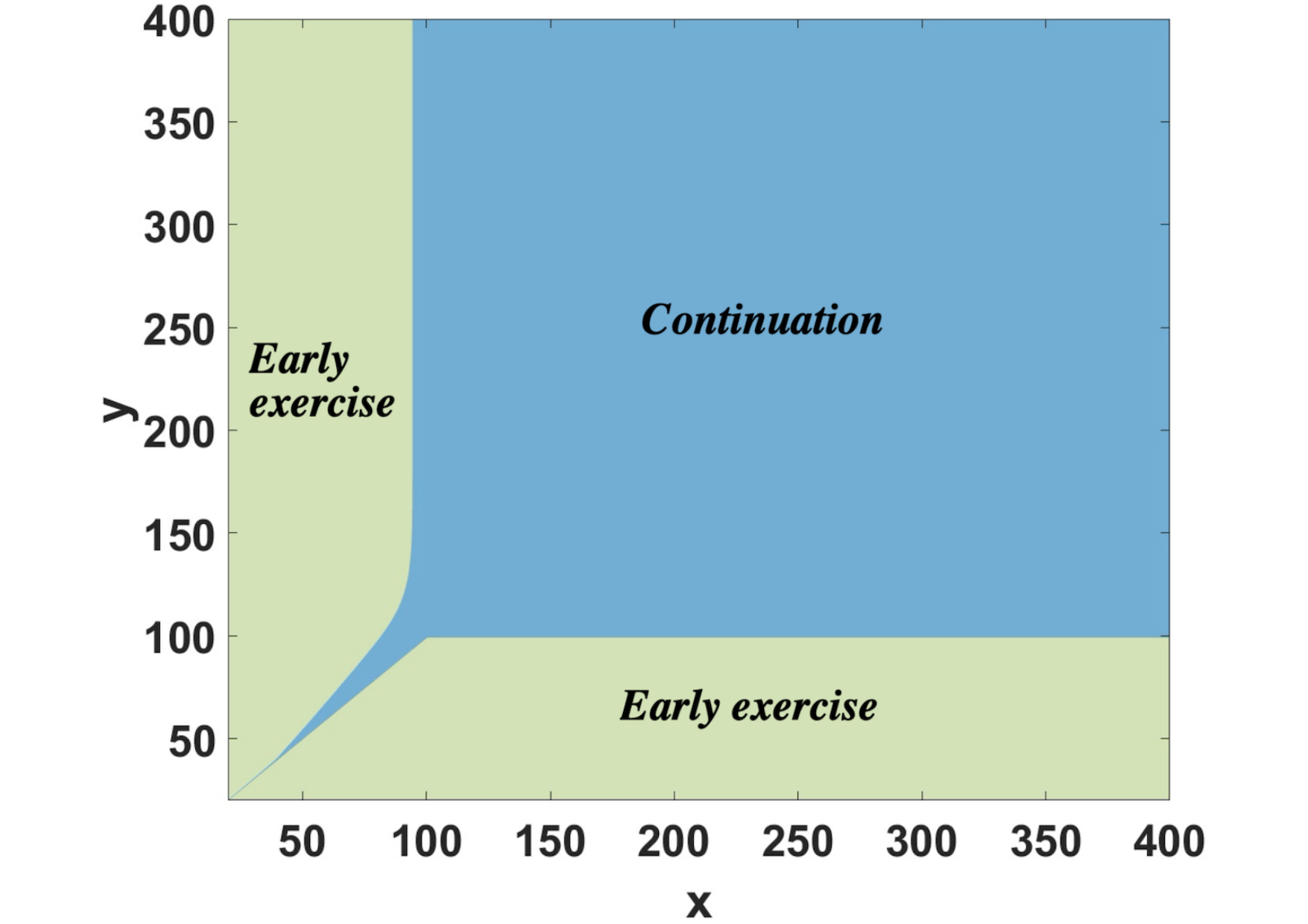}
    \caption{Early exercise regions for the American put-on-the-min at $t=T/2$, corresponding to
    Refine.\ level 4 from Table~\ref{tab:result01}. }
    \label{fig:sub4}
\end{figure}
\end{minipage}

\subsubsection{Put-on-the-average option}
%
For the second test case, we examine an American option based on the arithmetic average of two assets. The payoff function, $\hat{v}(x, y)$, is defined as:
\EQA
\label{eq:BSO_P2}
\hat{v}(x,y) = \max(K-(e^x+e^y)/2,0), \quad K>0.
\ENA
As a representative example, we use the modeling parameters from Case I, with initial asset values set at \(X_0 = 100\) and \(Y_0 = 100\) to illustrate the put-on-the-average option. The computed option prices, presented in Table~\ref{tab:result02}, demonstrate a first order of convergence and show strong agreement with the results reported in \cite{boen2020operator}. In addition, the early exercise regions at \(T/2\) for this case are depicted in Figure~\ref{fig:eer_aver}. Similar experiments conducted for Cases II and III yield comparable results. Further numerical results for American put-on-the-average options, encompassing various initial asset values and parameter sets, are presented in Section~\ref{ssc:ct} [Table~\ref{tab:result07}].

 \begin{minipage}{0.5\textwidth}
\strut\vspace*{-\baselineskip}
\flushleft
\begin{linenomath}
\begin{table}[H]
\center
\begin{tabular}{llll}
\hline
 Refine.\ level     & Price        & Change      & Ratio           \\
\hline
     \quad   0         &3.431959 & &            \\
    \quad    1         &3.436727 &4.77e-03 &         \\
    \quad 2            & 3.439096 &2.37e-03  &2.01         \\
      \quad  3         &3.440278&1.18e-03  &2.00         \\
   \quad  4            & 3.440868&5.90e-04  &2.00          \\
\hline
Ref.\ \cite{boen2020operator}&  3.442 & &
\\     \hline

\end{tabular}
\captionof{table}{Convergence study for a put-on-the-average American option under two-assets Merton jump-diffusion model (modeling parameters in Table \ref{tab:parameter02}, Case I) with initial asset values $X_0 =100$ and $Y_0 =100$ -  payoff function in \eqref{eq:BSO_P2}. Reference prices: by FD  method (operator splitting) is 3.442 \cite{boen2020operator}.
}
\label{tab:result02}
\end{table}
\end{linenomath}
\end{minipage}
\begin{minipage}{0.5\textwidth}
\strut\vspace*{-\baselineskip}
\begin{figure}[H]
\center
\includegraphics[width=0.8\linewidth]{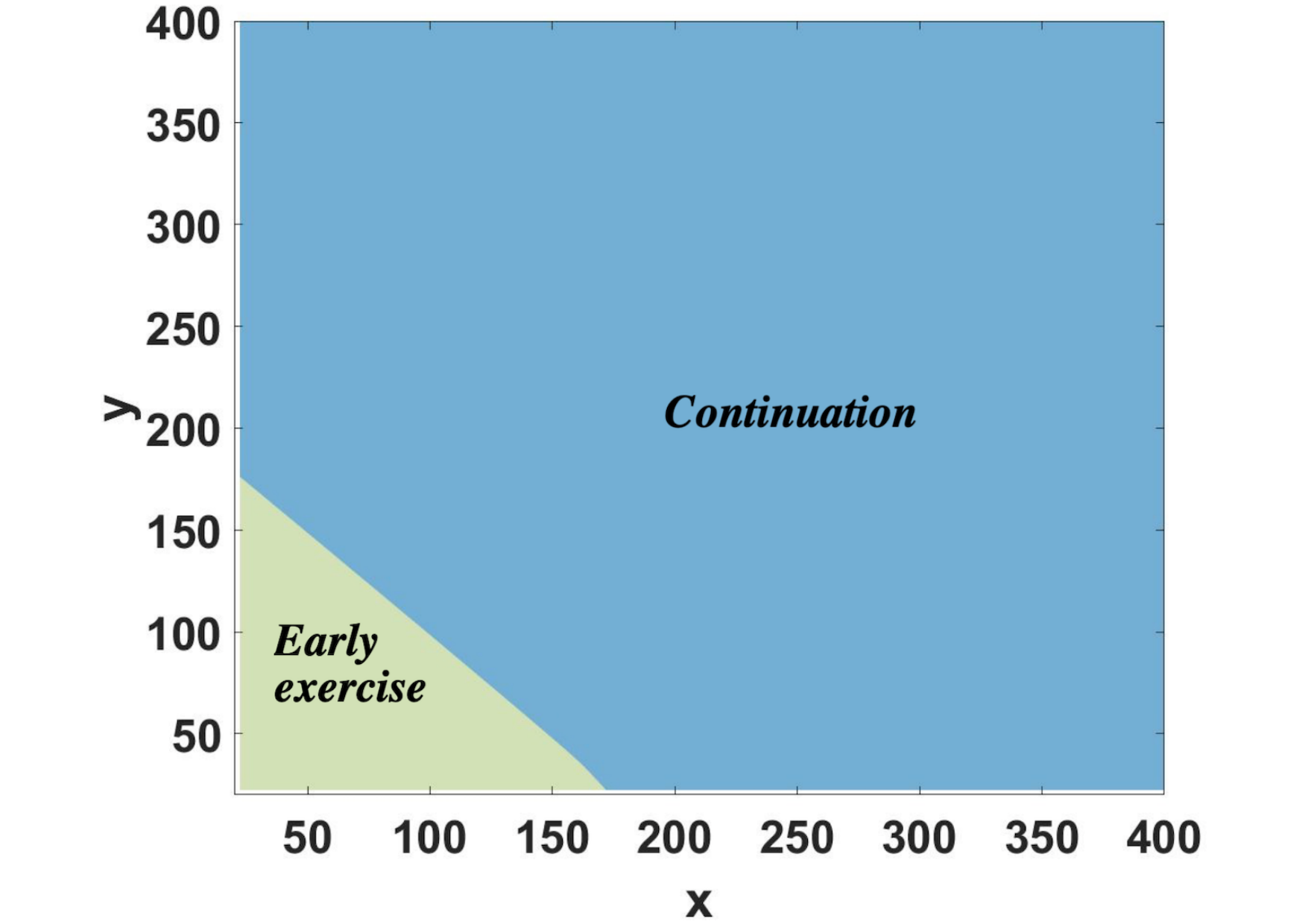}
    \caption{Early exercise regions for the American put-on-the-average at $t=T/2$,
    corresponding to Refine.\ level 4 from Table~\ref{tab:result02}.}
    \label{fig:eer_aver}
\end{figure}
\end{minipage}

\medskip
{\apnum{
It is important to note that the convergence rates in our numerical experiments are not directly comparable to those reported in \cite{ruijter2012two}. As highlighted earlier, the method presented in this work is more general and can be extended in a straightforward manner to stochastic control problems, such as asset allocation, where no continuation boundary exists. Furthermore, our method directly converges to the viscosity solution of the true American option pricing problem, unlike \cite{ruijter2012two}, which focuses on Bermudan options and does not examine the limiting case as  $\Delta \tau \to 0$.}}

\subsubsection{Impact of spatial domain sizes}
\label{ssc:spatial_size}
In this subsection, we validate the adequacy of the chosen spatial domain for our experiments, focusing on Case I for brevity. Similar tests for Cases II and III yield consistent results and are omitted here.

To assess domain sufficiency, we revisit the setup from Table~\ref{tab:result01} and double the sizes
of the interior sub-domain \(\myDin\), extending \(x_{\min} = \ln(X_0) - 1.5\), \(x_{\max} = \ln(X_0) + 1.5\), \(y_{\min} = \ln(Y_0) - 1.5\), and \(y_{\max} = \ln(Y_0) + 1.5\) to \(x_{\min} = \ln(X_0) - 3\), \(x_{\max} = \ln(X_0) + 3\), \(y_{\min} = \ln(Y_0) - 3\), and \(y_{\max} = \ln(Y_0) + 3\). The boundary sub-domains are adjusted accordingly as in \eqref{eq:w_choice_green_jump_form}. We also double the intervals \(N\) and \(J\) to preserve \(\Delta x\) and \(\Delta y\) as in the setup from Table~\ref{tab:result01}.

The computed option prices for this larger domain, presented in Table~\ref{tab:result04} under ``Larger \(\myDin\)'' show minimal differences from the original results (shown under ``Table~\ref{tab:result01}), with discrepancies only appearing at the 8th decimal place. These differences are recorded in the ``Diff.'' column, which represents the absolute difference between the computed option prices from Table~\ref{tab:result01} and those obtained with either an extended or contracted interior sub-domain  \(\myDin\).
This indicates that further enlarging the spatial computational domain has a negligible effect on accuracy.

\begin{table}[htb!]
\center
\begin{tabular}{ll|ll|ll|}
\hline
 Refine.\               & Table~\ref{tab:result01}        & \multicolumn{2}{c|}{(Larger $\myDin$)}          &    \multicolumn{2}{c|}{(Smaller $\myDin$)}
                       \\
level                        & Price        &Price    &   Diff.      & Price   &Diff.\
                            \\
\hline
0                      & 16.374702    &  16.374702        & 1.64e-08    &   16.374210      & 4.92e-04                                                               \\
1                      & 16.383298     &  16.383298      & 1.60e-08    &  16.382820   & 4.78e-04                     \\
2                      & 16.387210     &  16.387210    & 1.60e-08     & 16.386736 & 4.74e-04        \\
3                      & 16.389079     &  16.389079     &1.62e-08    & 16.388605   & 4.74e-04      \\
4                      & 16.389991     &  16.389991    & 1.62e-08      & 16.389515   & 4.76e-04         \\ \hline
\end{tabular}
\caption{Prices (put-on-min) obtained using different spatial computational domain: (i) a \underline{Larger} $\myDin$ with
$x_{\min} = \ln(X_0)-3$, $x_{\max} = \ln(X_0)+3$, $y_{\min} = \ln(Y_0)-3$, $y_{\max} = \ln(Y_0)+3$,  and (ii) a \underline{Smaller}
$\myDin$ with  $x_{\min} = \ln(X_0)-0.75$, $x_{\max} = \ln(X_0)+0.75$, $y_{\min} = \ln(Y_0)-0.75$, $y_{\max} = \ln(Y_0)+0.75$.
These are to compare with prices in Table~\ref{tab:result01}
obtained using the original $\myDin$ with  $z_{\min} = \ln(Z_0)-1.5$, $z_{\max} = \ln(Z_0)+1.5$, for $z\in \{x,y\}$
as in Table~\ref{tab:step03}[Case 1].
}
\label{tab:result04}
\end{table}
In addition, we test a smaller interior domain \(\myDin\) with boundaries \(x_{\min} = \ln(X_0) - 0.75\), \(x_{\max} = \ln(X_0) + 0.75\), \(y_{\min} = \ln(Y_0) - 0.75\), and \(y_{\max} = \ln(Y_0) + 0.75\), while keeping \(\Delta x\) and \(\Delta y\) constant. The results, shown in Table~\ref{tab:result04} under ``Smaller \(\myDin\)'', reveal differences starting at the third decimal place compared to the original setup. This indicates that the selected domain size is essential for achieving accurate results; further expansion of the domain size offers negligible benefit, whereas any reduction may introduce noticeable errors.

In Table~\ref{tab:result05}, we present the test results for extending and contracting \(\myDin\) for the American put-on-average option, which yield similar conclusions to those observed previously.
\begin{table}[htb!]
\center
\begin{tabular}{cl|ll|ll|}
\hline
Refine.\               & Table~\ref{tab:result02}        & \multicolumn{2}{c|}{(Larger $\myDin$)}          &    \multicolumn{2}{c|}{(Smaller $\myDin$)}
                       \\
level                        & Price        &Price    &   Diff.      & Price   &Diff.\
                            \\
                            \hline
0                      & 3.431959   &  3.431959        & 2.91e-07    &   3.431348      & 6.12e-04                                                               \\
1                      & 3.436727     &  3.436727      & 3.09e-07    &  3.436080   & 6.37e-04                     \\
2                      & 3.439096     &  3.439096    & 3.21e-07     & 3.436080 & 6.69e-04        \\
3                      & 3.440278     &  3.440278     &3.26e-07    & 3.438426   & 6.82e-04      \\
4                      & 3.440868     &  3.440868    & 3.29e-07      & 3.440178   & 6.91e-04         \\ \hline
\end{tabular}
\caption{Prices (put-on-average)  obtained using different spatial computational domain: (i) a \underline{Larger} $\myDin$ with
$x_{\min} = \ln(X_0)-3$, $x_{\max} = \ln(X_0)+3$, $y_{\min} = \ln(Y_0)-3$, $y_{\max} = \ln(Y_0)+3$, and (ii) a \underline{Smaller} $\myDin$ with
$x_{\min} = \ln(X_0)-0.75$, $x_{\max} = \ln(X_0)+0.75$, $y_{\min} = \ln(Y_0)-0.75$, $y_{\max} = \ln(Y_0)+0.75$.
These are to compare with prices in Table~\ref{tab:result02}
obtained using the original $\myDin$ with  $z_{\min} = \ln(Z_0)-1.5$, $z_{\max} = \ln(Z_0)+1.5$, for $z\in \{x,y\}$,
as in Table~\ref{tab:step03}[Case 1].}
\label{tab:result05}
\end{table}

\subsubsection{Impact of boundary conditions}
\label{sec:constant_pad}

In this subsection, we numerically demonstrate that our straightforward approach of employing discounted payoffs for boundary sub-domains is adequate. For brevity, we show the tests of impact of boundary conditions for Case I. Similar experiments for Cases II and III yield the same results.

We revisited previous experiments reported in Tables~\ref{tab:result01}, introducing more sophisticated boundary conditions based on the asymptotic behavior of the PIDEs \eqref{eq:2dPIDEs*} as $z \to -\infty$ and $z \to \infty$ for $z\in\{x,y\}$ as proposed in \cite{clift2008numerical}.
Specifically,  the PIDEs \eqref{eq:2dPIDEs*} simplifies to the 1D PDEs shown in \eqref{eq:hjbxymin} when $x$ or $y$ tends to $-\infty$:
\EQ
\label{eq:hjbxymin}
\begin{aligned}
v_{\tau} -  \big(r-(\sigma_{y})^2/2\big)v_{y}  +   (\sigma_{y})^2/2 v_{yy}  + rv &= 0, \quad x \to -\infty,
\\
v_{\tau} - \big(r-(\sigma_{x})^2/2\big)v_{x} +   (\sigma_{x})^2/2v_{xx}  +rv &= 0, \quad y \to -\infty.
\end{aligned}
\EN
That can be justified based on the properties of the Green’s function of the PIDE \cite{garroni1992green}.
As  $x, y \to -\infty$, the PIDEs \eqref{eq:2dPIDEs*} simplifies to the ordinary differential equation $v_{\tau} + rv = 0$.

To adhere to these asymptotic boundary conditions, we choose a much large spatial domain: $x_{\min} = \ln(X_0)-12$, $x_{\max} = \ln(X_0)+12$, $y_{\min} = \ln(Y_0)-12$, $y_{\max} = \ln(Y_0)+12$,  and adjust the number of intervals $N$ and $J$ accordingly to maintain the same grid resolution ($\Delta x$ and $\Delta y$).
Employing the monotone integration technique, tailored for the 1D case, we effectively solved the 1D PDEs in \eqref{eq:hjbxymin}. The ordinary differential equation $v_{\tau} + rv = 0$ is solved directly and efficiently. The scheme's convergence to the viscosity solution can be rigorously established in the same fashion as the propose scheme.
\begin{table}[htb!]
\center
\begin{tabular}{c|ll|ll|}
\hline
 & \multicolumn{2}{c|}{Put-on-the-min} & \multicolumn{2}{c|}{Put-on-the-average}                                     \\
\cline{2-5}
 Refine.    & Price            & Price       & Price            & Price
                       \\
 level                       &        & (Table~\ref{tab:result01})   &        & (Table~\ref{tab:result02})
                            \\ \hline
0                      & 16.374702    &  16.374702        & 3.431959   &  3.431959                                                            \\
1                      & 16.383298     &  16.383298        & 3.436727     &  3.436727                \\
2                      & 16.387210     &  16.387210        & 3.439096     &  3.439096    \\
3                      & 16.389079     &  16.389079        & 3.440278     &  3.440278 \\
4                      & 16.389991     &  16.389991         & 3.440868     &  3.440868  \\ \hline
\end{tabular}
\caption{
Results using sophisticated boundary conditions. Compare with computed prices in Table~\ref{tab:result01} and Table~\ref{tab:result02} where discounted payoff boundary conditions are used.
}
\label{tab:result08}
\end{table}
The computed option prices for the put-on-the-min option, as shown in Table~\ref{tab:result08}, are virtually identical with those from the original setup (see column marked ``Table.~\ref{tab:result01}''). In addition, Table~\ref{tab:result08} presents results for the put-on-the-average option using the similar sophisticated boundary condition, with findings consistent with the put-on-the-min option. These results confirm that our simple boundary conditions are not only easy to implement but also sufficient to meet the theoretical and practical requirements of our numerical experiments.
\subsubsection{Comprehensive tests}
\label{ssc:ct}
In the following, we present a detailed study of two types of options: an American put-on-the-min option and an American put-on-the-average option, tested with various strike prices \(K\) and initial asset values. Across all three parameter cases, our computed prices closely align with the reference prices given in \cite{boen2020operator}[Tables 5 and 6].

 \begin{minipage}{0.49\textwidth}
\strut\vspace*{-\baselineskip}
\flushleft
\begin{linenomath}
\begin{table}[H]
\center
\scalebox{0.9}
{
\begin{tabular}{lrrrr}
\hline
&\multicolumn{4}{c}{Put-on-min} \\ \hline
&\multicolumn{4}{c}{Case I} \\ \hline
&$Y_0$ & \multicolumn{3}{c}{$X_0$}\\ \hline
 & & \multicolumn{1}{c}{90} & \multicolumn{1}{c}{100} &  \multicolumn{1}{c}{110} \\ \hline
 MI & 90 &16.389991 &13.998405 &12.756851 \\
  & 100 &13.020204&9.619252&7.876121\\
   & 110 &11.441389 &7.226153 &5.131663\\\hline
Ref. \cite{boen2020operator}&$ 90$ & 16.391 & 13.999 & 12.758 \\
&$ 100$ & 13.021 & 9.620 & 7.877 \\
&$ 110$ & 11.443 & 7.227 & 5.132 \\ \hline
&\multicolumn{4}{c}{Case II} \\ \hline
& &\multicolumn{1}{c}{36} & \multicolumn{1}{c}{40} &  \multicolumn{1}{c}{44} \\ \hline
 MI & 36 &15.469776 & 14.566197  &13.796032 \\
  & 40 &14.094647&13.109244 &12.265787 \\
   & 44 &12.924092 &11.879584  &10.984126 \\\hline
Ref. \cite{boen2020operator}&$ 36$ & 15.467 & 14.564 & 13.794 \\
&$ 40$ & 14.092 & 13.107 & 12.263 \\
&$ 44$ & 12.921 & 11.877 & 10.982 \\ \hline
&\multicolumn{4}{c}{Case III} \\ \hline
& & \multicolumn{1}{c}{36} & \multicolumn{1}{c}{40} &  \multicolumn{1}{c}{44} \\ \hline
 MI & 36 &21.750926  & 20.917727 &20.176104  \\
  & 40 &21.281139 &20.403611 &19.620525 \\
   & 44 &20.906119  &19.992702  &19.176009 \\\hline
Ref. \cite{boen2020operator} &$ 36$ & 21.742 & 20.908 & 20.167 \\
&$ 40$ & 21.272 & 20.394 & 19.611 \\
&$ 44$ & 20.892 & 19.983 & 19.166 \\ \hline
\end{tabular}
}
\caption{Results for an American put-on-min option under Cases I, II, III. Reference price by FD  method (operator splitting) from \cite{boen2020operator}[Table~5]. }
\label{tab:result06}
\end{table}
\end{linenomath}
\end{minipage}
\begin{minipage}{0.49\textwidth}
\strut\vspace*{-\baselineskip}
\center
\begin{table}[H]
\center
\scalebox{0.9}
{
\begin{tabular}{lrrrr}
\hline
&\multicolumn{4}{c}{Put-on-average} \\ \hline
&\multicolumn{4}{c}{Case I} \\ \hline
&$Y_0$ & \multicolumn{3}{c}{$X_0$}\\ \hline
 & & \multicolumn{1}{c}{90} & \multicolumn{1}{c}{100} &  \multicolumn{1}{c}{110} \\ \hline
MI  & 90 &10.000000 &5.987037 &3.440343\\
  & 100 &6.028929 &3.440868&1.886527\\
   & 110 &3.490665 &1.890874 &0.992933\\\hline
Ref. \cite{boen2020operator}&$ 90$ & 10.003 & 5.989 & 3.441 \\
&$ 100$ & 6.030 & 3.442 & 1.877 \\
&$ 110$ & 3.491 & 1.891 & 0.993 \\ \hline
&\multicolumn{4}{c}{Case II} \\ \hline
& & \multicolumn{1}{c}{36} & \multicolumn{1}{c}{40} &  \multicolumn{1}{c}{44} \\ \hline
 MI & 36 &5.405825 & 4.363340  &3.547399 \\
  & 40 &4.213899&3.338840 &2.669076 \\
   & 44 &3.224979 &2.506688  &1.969401\\\hline
Ref. \cite{boen2020operator}&$ 36$ & 5.406 & 4.363 & 3.547 \\
&$ 40$ & 4.214 & 3.339 & 2.669 \\
&$ 44$ & 3.225 & 2.507 & 1.969 \\ \hline
&\multicolumn{4}{c}{Case III} \\ \hline
& & \multicolumn{1}{c}{36} & \multicolumn{1}{c}{40} &  \multicolumn{1}{c}{44} \\ \hline
 MI & 36 &12.472058  & 11.935904 &11.446078  \\
  & 40 &11.439979 &10.948971 &10.500581 \\
   & 44 &10.499147  &10.049777  &9.639534 \\\hline
Ref. \cite{boen2020operator} &$ 36$ & 12.466 & 11.930 & 11.440 \\
&$ 40$ & 11.434 & 10.943 & 10.495 \\
&$ 44$ & 10.493 & 10.043 & 9.633 \\ \hline
\end{tabular}
}
\caption{Results for an American put-on-average option under  Cases I, II, III. Reference prices by FD method (operator splitting)  from \cite{boen2020operator}[Table~6]. }
\label{tab:result07}
\end{table}
\end{minipage}



\section{Conclusion and future work}
\label{sc:conclude}
 In this paper, we address an important gap in the numerical analysis of two-asset American options under the Merton jump-diffusion model by introducing an efficient and straightforward-to-implement monotone scheme based on numerical integration. The pricing of these options involves solving complex 2-D variational inequalities that include cross derivative and nonlocal integro-differential terms due to jumps. Traditional finite difference methods often struggle to maintain monotonicity in cross derivative approximations—crucial for ensuring convergence to the viscosity solution-and accurately discretize 2-D jump integrals. Our approach overcomes these challenges by leveraging an infinite series representation of the Green’s function, where each term is non-negative and computable, enabling efficient approximation of 2-D convolution integrals through a monotone integration method. In addition, we rigorously establish the stability and consistency of the proposed scheme in the viscosity sense and prove its convergence to the viscosity solution of the variational inequality. This overcomes  several significant limitations associated with previous numerical techniques.

Extensive numerical results demonstrate strong agreement with benchmark solutions from published test cases, including those obtained via operator splitting methods,  highlighting the utility of our approach as a valuable reference for verifying other numerical techniques.

{\apnum{Although our focus has been on American option pricing under a correlated two-asset Merton jump-diffusion model, the core methodology—particularly the infinite series representation of the Green's function, where each term is non-negative—has broader applicability. One such application is asset allocation with a stock index and a bond index in both discrete and continuous rebalancing settings.
Extending this approach to the 2-D Kou model introduces significant additional challenges, which could be addressed using a neural network-based approximation of the joint PDF of the jumps, along with a Gaussian activation function, as previously discussed. While we utilize the closed-form Fourier transform of the Green's function for the Merton model, iterative techniques for differential-integral operators, such as those discussed in \cite{garronigreenfunctionssecond92}, could be employed to extend this framework to more general jump-diffusion models.
Exploring such extensions and applying our framework to a wider range of financial models remains an exciting direction for future research.}}

{\apnum{\section*{Acknowledgments}
The authors are grateful to the two anonymous referees for their constructive comments and suggestions, which have significantly improved the quality of this work.}}


\section*{Appendices}
\appendix
\section{Proof of Lemma~\ref{lemma:bd_error}}
\label{app:bd_error}
We extend the methods from \cite{Cont2005}, originally developed for 1-D European options, to address 2-D variational inequalities \eqref{eq:VIs_log_full} and \eqref{eq:VIs_log}.
For simplicity, we denote by $d(\tau)$ the discounting factor.
The solution $v'(\x)$ to the full-domain variational inequality \eqref{eq:VIs_log_full} is simply
\[
v'(\x)=\sup_{\gamma\in[0,\tau]}\Ebb_{\tau}^{\myx,\myy}\l[d(\tau)\hat{v}({\zblue{X}^{'}_{\gamma}},{\zblue{Y}^{'}_{\gamma}})\r],
\]
{\zblue{which comes from \eqref{eq:vf_opt_stop} with a change of variables from $\l(X_{t},Y_{t}\r)$ to $(X'_t,Y'_t)=\l(\ln(X_{t}),\ln(Y_{t})\r)$ and from $t$ to $\tau$}.}
To obtain a probabilistic representation of the solution $v(\x)$ to the localized variational inequality \eqref{eq:VIs_log},
for fixed $\x = (x, y, \tau)$, we define the random variables
$ M_{\tau}^{x}=\sup_{\varsigma\in [0,\tau]}|{\zblue{X}^{'}_{\varsigma}}+x|$ and $M_{\tau}^{y}=\sup_{\varsigma\in [0,\tau]}|{\zblue{Y}^{'}_{\varsigma}}+y|$ to respectively represent the maximum deviation
of processes $\{{\zblue{X}^{'}_{\varsigma}}\}$ and $\{{\zblue{Y}^{'}_{\varsigma}}\}$ from $x$ and $y$ over the interval $[0, \tau]$.
We also define the random variable  $\theta(x)=\inf\{\varsigma\geq0,|{\zblue{X}^{'}_{\varsigma}}+x|\geq A\}$ as the first exit time of the process $ \{{\zblue{X}^{'}_{\varsigma}}+x\}$  from $[-A,A]$. Similarly, the random variable $\theta(y)$ is defined
for the process $\{{\zblue{Y}^{'}_{\varsigma}}+y\}$. Using these random variables, $v(\x)$ can be expressed as
\EQA
&&v(\x)=\sup_{\gamma\in[0,\tau]}\Ebb_{\tau}^{\myx,\myy}\l[d(\tau)\l(\hat{v}({\zblue{X}^{'}_{\gamma}},{\zblue{Y}^{'}_{\gamma}}) \mathbb{I}_{\l\{\{M_{\tau}^{x}<A\}\cap\{M_{\tau}^{y}<A\}\r\}}+ \hat{p}({\zblue{X}^{'}_{\gamma}},{\zblue{Y}^{'}_{\theta(y)}}) \mathbb{I}_{\l\{\{M_{\tau}^{x}< A\}\cap\{M_{\tau}^{y}\geq A\}\r\}}\r.\r.
\nonumber\\
&&\qquad \qquad \l.\l.+  \hat{p}({\zblue{X}^{'}_{\theta(x)}},{\zblue{Y}^{'}_{\gamma}}) \mathbb{I}_{\l\{\{M_{\tau}^{x}\geq A\}\cap\{M_{\tau}^{y}< A\}\r\}} + \hat{p}({\zblue{X}^{'}_{\theta(x)}},{\zblue{Y}^{'}_{\theta(y)}}) \mathbb{I}_{\l\{\{M_{\tau}^{x}\geq A\}\cap\{M_{\tau}^{y}\geq A\}\r\}} \r) \r].
\nonumber
\ENA
Subtracting $v(\cdot)$ from $v'(\cdot)$ gives $\l|v'(\x)-v(\x)\r|\le \ldots$
\begin{align*}
\ldots &\le \sup_{\gamma\in[0,\tau]}\l|\Ebb_{\tau}^{\myx,\myy}\l[d(\tau)\l(\hat{v}({\zblue{X}^{'}_{\gamma}},{\zblue{Y}^{'}_{\gamma}}) \mathbb{I}_{\l\{\{M_{\tau}^{x}\geq A\}\cup\{M_{\tau}^{y}\geq A\}\r\}}
-  \hat{p}({\zblue{X}^{'}_{\gamma}},{\zblue{Y}^{'}_{\theta(y)}}) \mathbb{I}_{\l\{\{M_{\tau}^{x}< A\}\cap\{M_{\tau}^{y}\geq A\}\r\}}\r.\r.\r.
\nonumber
\\
&\qquad \qquad \l.\l.\l.-  \hat{p}({\zblue{X}^{'}_{\theta(x)}},{\zblue{Y}^{'}_{\gamma}}) \mathbb{I}_{\l\{\{M_{\tau}^{x}\geq A\}\cap\{M_{\tau}^{y}< A\}\r\}}
- \hat{p}({\zblue{X}^{'}_{\theta(x)}},{\zblue{Y}^{'}_{\theta(y)}}) \mathbb{I}_{\l\{\{M_{\tau}^{x}\geq A\}\cap\{M_{\tau}^{y}\geq A\}\r\}}\r)\r]\r|,
\nonumber
\\
\leq &\sup_{\gamma\in[0,\tau]}d(\tau)
\l[\Ebb_{\tau}^{\myx,\myy}\l|\hat{v}({\zblue{X}^{'}_{\gamma}},{\zblue{Y}^{'}_{\gamma}}) \mathbb{I}_{\l\{\{M_{\tau}^{x}\geq A\}\cup\{M_{\tau}^{y}\geq A\}\r\}}\r|
+
\Ebb_{\tau}^{\myx,\myy}\l|\hat{p}({\zblue{X}^{'}_{\gamma}},{\zblue{Y}^{'}_{\theta(y)}}) \mathbb{I}_{\l\{\{M_{\tau}^{x}< A\}\cap\{M_{\tau}^{y}\geq A\}\r\}}\r|\r.
\nonumber
\\
& \qquad \qquad \l.
+  \Ebb_{\tau}^{\myx,\myy}\l|\hat{p}({\zblue{X}^{'}_{\theta(x)}},{\zblue{Y}^{'}_{\gamma}}) \mathbb{I}_{\l\{\{M_{\tau}^{x}\geq A\}\cap\{M_{\tau}^{y}< A\}\r\}}\r|
+ \Ebb_{\tau}^{\myx,\myy}\l|\hat{p}({\zblue{X}^{'}_{\theta(x)}},{\zblue{Y}^{'}_{\theta(y)}}) \mathbb{I}_{\l\{\{M_{\tau}^{x}\geq A\}\cap\{M_{\tau}^{y}\geq A\}\r\}}\r|\r],
\nonumber
\\
\leq &\sup_{\gamma\in[0,\tau]}d(\tau)\l[\l\|\hat{v}\r\|_{\infty} \mathfrak{Q}\l(\{M_{\tau}^{x}\geq A\}\cup\{M_{\tau}^{y}\geq A\}\r)+ \l\|\hat{p}\r\|_{\infty} \mathfrak{Q}\l(\{M_{\tau}^{x}\geq A\}\cup\{M_{\tau}^{y}\geq A\}\r)\r],
\nonumber
\\
\leq&\sup_{\gamma\in[0,\tau]}d(\tau)\l[\l(\l\|\hat{v}\r\|_{\infty}+\l\|\hat{p}\r\|_{\infty} \r)\l(\mathfrak{Q}\l(M_{\tau}^{x}\geq A\r)+\mathfrak{Q}\l(M_{\tau}^{y}\geq A\r)\r)\r],
\nonumber
\\
\leq&\sup_{\gamma\in[0,\tau]}d(\tau)\l[\l(\l\|\hat{v}\r\|_{\infty}+\l\|\hat{p}\r\|_{\infty} \r)\l(\mathfrak{Q}\l(M_{\tau}^{0}\geq A-|x|\r)+\mathfrak{Q}\l(M_{\tau}^{0}\geq A-|y|\r)\r)\r],
\nonumber\\
{\buildrel (\text{i}) \over \leq}&\sup_{\gamma\in[0,\tau]}d(\tau)\l[\l(\l\|\hat{v}\r\|_{\infty}+\l\|\hat{p}\r\|_{\infty} \r)C'(\tau)\l({\zblue{e^{-(A-|x|)}+e^{-(A-|y|)}}}\r)\r],
\nonumber
\\
=&~C(\tau)\l(\l\|\hat{v}\r\|_{\infty}+\l\|\hat{p}\r\|_{\infty} \r)\l(e^{-(A-|x|)}+e^{-(A-|y|)}\r).
\end{align*}
Here, (i) is due to Theorem 25.18 of \cite{ken1999levy} and {\zblue{Markov}}'s inequality; $C(\tau)$ is a positive bounded constant that does not depend on $x_{\mymin}$, $x_{\mymax}$, $y_{\mymin}$, and $y_{\mymax}$.  This concludes the proof.

\section{Proof of Lemma~\ref{lemma:series_g}}
\label{app:series_g}
By the inverse Fourier transform $\mathfrak{F}^{-1}[\cdot]$ in \eqref{eq:ft_pair}
and the closed-form expression for $G(\bmeta, \Delta \tau)$ in \eqref{eq:vmf_G_closed}, we have
\begin{align}
\label{eq:g_ift}
g(\bfz, \Delta \tau)~=~&
\f{1}{(2\pi)^2} \, \int_{\Rbb^2}
e^{i\bmeta\cdot\bfz}e^{\Psi(\bmeta)\Delta \tau}  ~\md \bmeta
~=~
\f{1}{(2\pi)^2} \, \int_{\Rbb^2}
e^{- \frac{1}{2} \bmeta^{\top} \bfC\bmeta +
i\l(\bmbeta +\bfz\r) \cdot\bmeta + \theta} \, e^{\lambda \Gamma\l( \bmeta \r) \Delta \tau} ~ \md \bmeta,\nonumber\\
{\text{where }} & \bfC = \Delta \tau \,\tilde{\bfC},
\quad
\bmbeta =  \Delta \tau\,\tilde{\bmbeta} ,
\quad
\theta = -(r+\lambda) \Delta \tau.
\end{align}
Following the approach developed in \cite{dangJacksonSues2016, zhang2023monotone, berthe2019shannon}, we expand the term $e^{\lambda \Gamma(\bmeta) \Delta \tau}$ in \eqref{eq:g_ift} in a Taylor series,  noting that

\begin{align}
\l(\Gamma(\bmeta)\r)^k &= \l(\int_{\Rbb^2} f(\bfs) \exp(i \bmeta\cdot \bfs) \md \bfs\r)^k\nonumber\\
&=\l(\int_{\Rbb^2} f(\bfs_1) \exp(i \bmeta\cdot \bfs_1) \md \bfs_1\r)\l(\int_{\Rbb^2} f(\bfs_2) \exp(i \bmeta\cdot \bfs_2) \md \bfs_2\r)\ldots\l(\int_{\Rbb^2} f(\bfs_k) \exp(i \bmeta\cdot \bfs_k) \md \bfs_k\r)
\nonumber\\
&= \int_{\Rbb^2} \ldots \int_{\Rbb^2}
             f(\bfs_{1})f(\bfs_{2})\ldots f(\bfs_{k}) \exp\l(i\bmeta \cdot \bfs_1\r)\exp\l(i\bmeta \cdot \bfs_2\r)\ldots \exp\l(i\bmeta \cdot \bfs_k\r)
            \md \bfs_1 \md \bfs_2 \ldots \md \bfs_k,
\nonumber\\
&= \int_{\Rbb^2} \ldots \int_{\Rbb^2}
            \prod_{\ell = 1}^{k} f(\bfs_{\ell}) \exp\l(i\bmeta \cdot \bmS_k\r)
            \md \bfs_1 \md \bfs_2 \ldots \md \bfs_k.
\end{align}
Here, $\bfs_{\ell}=[s_1,s_2]_{\ell}$ is the $\ell$-th column vector, and each pair of $\bfs_{i}$ and $\bfs_{j}$ being independent and identically distributed (i.i.d) for $i\neq j$, $\bmS_k = \sum_{\ell=1}^{k} \bfs_{\ell}=\sum_{\ell=1}^{k}[s_1, s_2]_{\ell}$, with ${\bmS}_0 = [0,0]$, and
for $k = 0$, $\l(\Gamma(\bmeta)\r)^0 = 1$. Then, we have the Taylor series for $e^{\lambda\Gamma(\bmeta)\Delta \tau}$ as follows
\EQA
\label{eq:egamma_Taylor}
e^{\lambda\Gamma(\bmeta)\Delta \tau} = \sum_{k=0}^{\infty} \f{\l(\lambda\Delta \tau\r)^k}{k!}(\Gamma(\bmeta))^{k}=\sum_{k=0}^{\infty} \f{\l(\lambda\Delta \tau\r)^k}{k!}\int_{\Rbb^2} \ldots \int_{\Rbb^2}
            \prod_{\ell = 1}^{k} f(\bfs_{\ell}) \exp\l(i\bmeta \cdot \bmS_k\r)
            \md \bfs_1 \md \bfs_2 \ldots \md \bfs_k.
\ENA
We now substitute equation \eqref{eq:egamma_Taylor} into the Green's function $g(\bfz,\Delta \tau)$ in \eqref{eq:g_ift}, which is expressed through substitutions as

\begin{align}
    &g(\bfz, \Delta \tau) =
\f{1}{(2\pi)^2}
\sum_{k=0}^{\infty} \f{\l(\lambda\Delta \tau\r)^k}{k!}
\int_{\Rbb^2}
e^{- \frac{1}{2} \bmeta^{\top} \mathbf{C}\bmeta +
i\l(\bmbeta +\bfz\r) \cdot\bmeta  + \theta} \,
\int_{\Rbb^2} \ldots \int_{\Rbb^2}
            \prod_{\ell = 1}^{k} f(\bfs_{\ell}) \exp\l(i\bmeta \cdot \bmS_k\r)
            \md \bfs_1 \md \bfs_2 \ldots \md \bfs_k ~ \md \bmeta \nonumber
\\
&{\buildrel (\text{i}) \over =}~ \f{1}{(2\pi)^2}
\sum_{k=0}^{\infty} \f{\l(\lambda\Delta \tau\r)^k}{k!}\,
\int_{\Rbb^2} \ldots \int_{\Rbb^2}
            \prod_{\ell = 1}^{k} f(\bfs_{\ell})\int_{\Rbb^2}
e^{- \frac{1}{2} \bmeta^{\top} \mathbf{C}\bmeta +
i\l(\bmbeta +\bfz+\bmS_k\r) \cdot\bmeta  + \theta} ~ \md \bmeta\,
            \md \bfs_1 \md \bfs_2 \ldots \md \bfs_k  {\buildrel (\text{ii}) \over =} \ldots \nonumber
\\
\nonumber
&\f{1}{ 2\pi\s{\det(\bfC)} }
\sum_{k=0}^{\infty} \f{\l(\lambda \Delta \tau\r)^{k}}{k!}
\int_{\Rbb^2} \ldots \int_{\Rbb^2}
            	\exp\l(\theta  -
	\f{\l(\bmbeta + \bfz +  \bmS_k\r)^{\top}\bfC^{-1}(\bmbeta + \bfz +  \bmS_k)}{2}\r)
    \l(\prod_{\ell = 1}^{k} f(\bfs_\ell)\r) \md \bfs_1 \ldots \md \bfs_k .
\end{align}
Here, (i) is due to the Fubini’s theorem, in (ii), we apply the result for the multidimensional Gaussian-type integral, i.e $\int_{\Rbb^{n}}\exp(-\frac{1}{2}\bfx^{\top}\bfA\bfx+\boldsymbol{b}^{\top}\bfx+c)~\md\bfx=\s{\det(2\pi\bfA^{-1})}e^{\frac{1}{2}\boldsymbol{b}^{\top}\bfA^{-1}\boldsymbol{b}+c}$, and the determinant, $\det(\bfC)=\det(\Delta \tau\,\tilde{\bfC})=(\Delta \tau)^2\det(\tilde{\bfC})=(\Delta \tau)^2\sigma_{\myx}^2\sigma_{\myy}^2(1-\rho^2)$.

\section{Proof of  Corollary \ref{cor:twodis}}
\label{app:merton}
   Recalling \eqref{eq:g_proof_sum}, we have $g(\bfz;\Delta t)=\f{\exp\l(\theta -
	\f{\l(\bmbeta + \bfz\r)^{\top}\bfC^{-1}(\bmbeta + \bfz)}{2}\r)}
{2\pi\s{ \det(\bfC) }}\ldots$
    \begin{align}
        \label{eq:reform_g}
        \ldots&+\f{e^{\theta}}{2 \pi\s{ \det(\bfC)}}
\sum_{k=1}^{\infty} \f{\l(\lambda \Delta \tau\r)^{k}}{k!}
\underbrace{\int_{\Rbb^2} \ldots \int_{\Rbb^2}
            	\exp\l( -
	\f{\l(\bmbeta + \bfz +  \bmS_k\r)^{\top}\bfC^{-1}(\bmbeta + \bfz +  \bmS_k)}{2}\r)
      \l(\prod_{\ell = 1}^{k} f(\bfs_\ell)\r) \md \bfs_1 \ldots \md \bfs_k}_{E_k}
\nonumber\\
&=
\f{\exp\l(\theta -
	\f{\l(\bmbeta + \bfz\r)^{\top}\bfC^{-1}(\bmbeta + \bfz)}{2}\r)}
{2\pi\s{ \det(\bfC) }} +
\f{e^{\theta}}{2 \pi\s{ \det(\bfC)}}
\sum_{k=1}^{\infty} \f{\l(\lambda \Delta \tau\r)^{k}}{k!}~E_k.
    \end{align}
    Here, the term $E_k$ in \eqref{eq:reform_g} is clearly non-negative and can be computed as
    \EQA
    \label{eq:E_k_1}
    E_k=\int_{\Rbb^2}\exp\l( -
	\f{\l(\bmbeta + \bfz +  \bfs\r)^{\top}\bfC^{-1}(\bmbeta + \bfz +  \bfs)}{2}\r)f_{\hat{\bmxi}_k}(\bfs)~\md\bfs,
    \ENA
    where $f_{\hat{\bmxi}_k}(\bfs)$ is the PDF of the random variable $\hat{\bmxi}_k=\sum_{\ell=1}^{k}\ln(\bmxi)_{\ell}=\sum_{\ell=1}^{k}[\ln(\xi_{\myx}),\ln(\xi_{\myy})]_{\ell}$ which is the sum of i.i.d random variables for fixed $k$.
 For the  Merton case, we have $\hat{\bmxi}_k\sim \text{Normal}\l(k\bmtmu, k\bfC_{M}\r)$ with the PDF
    \EQA
    \label{eq:PDF_sum}
    f_{\hat{\bmxi}_k}(\bfs) = \f{\exp\l(-
	\f{\l(\bfs - k\bmtmu\r)^{\top}(k\bfC_{\mathcal{M}})^{-1}(\bfs - k\bmtmu)}{2}\r)}
{2\pi\s{ \det(k\bfC_{\mathcal{M}}) }}.
    \ENA
   By substituting the equation \eqref{eq:PDF_sum} into \eqref{eq:E_k_1}, we have
       \begin{align}
    \label{eq:E_k_2}
    E_k&=\int_{\Rbb^2}\exp\l( -
	\f{\l(\bmbeta + \bfz +  \bfs\r)^{\top}\bfC^{-1}(\bmbeta + \bfz +  \bfs)}{2}\r)\f{\exp\l(-
	\f{\l(\bfs - k\bmtmu\r)^{\top}(k\bfC_{\mathcal{M}})^{-1}(\bfs - k\bmtmu)}{2}\r)}
{2\pi\s{ \det(k\bfC_{\mathcal{M}}) }}~\md\bfs \nonumber\\
&=\frac{1}{2\pi\s{ \det(k\bfC_{\mathcal{M}})}}\int_{\Rbb^2}\exp\l( -
	\f{\l(\bmbeta + \bfz +  \bfs\r)^{\top}\bfC^{-1}(\bmbeta + \bfz +  \bfs)+
	\l(\bfs - k\bmtmu\r)^{\top}(k\bfC_{\mathcal{M}})^{-1}(\bfs - k\bmtmu)}{2}\r) ~\md\bfs\nonumber\\
 &{\buildrel (\text{i}) \over =}\frac{1}{2\pi\s{ \det(k\bfC_{\mathcal{M}})}}\int_{\Rbb^2}\exp\l(-\frac{\bfs^{\top}\l(\bfCI+(k\bfC_{\mathcal{M}})^{-1}\r)\bfs}{2}+\l((k\bmtmu)^{\top}(k\bfC_{\mathcal{M}})^{-1}-(\bfbz)^{\top}\bfCI\r)\bfs\ldots\r.\nonumber\\
  &\qquad\l. \ldots -\f{(\bfbz)^{\top}\bfCI(\bfbz)+(k\bmtmu)^{\top}(k\bfC_{\mathcal{M}})^{-1}(k\bmtmu)}{2}\r)~\md\bfs
\end{align}
    Here, in (i), we use matrix multiplication distributive and associative properties. For simplicity, we adopt the following notational convention: $\bfA=\bfCI+(k\bfC_{\mathcal{M}})^{-1}$, which is positive semi-definite and symmetric, and $\bfal=\bfbz$. Then, equation \eqref{eq:E_k_2} becomes
           \begin{align}
    \label{eq:E_k_3}
    E_k
 &=\frac{1}{2\pi\s{ \det(k\bfC_{\mathcal{M}})}}\int_{\Rbb^2}\exp\l(-\frac{\bfs^{\top}\bfA\bfs}{2}+\l((k\bmtmu)^{\top}(k\bfC_{\mathcal{M}})^{-1}-\bfal^{\top}\bfCI\r)\bfs\ldots\r.\nonumber\\
  &\qquad\l. \ldots -\f{\bfal^{\top}\bfCI\bfal+(k\bmtmu)^{\top}(k\bfC_{\mathcal{M}})^{-1}(k\bmtmu)}{2}\r)~\md\bfs\nonumber\\
  &{\buildrel (\text{i}) \over =}\frac{\s{\det\l(\bfA^{-1}\r)}}{\s{ \det(k\bfC_{\mathcal{M}})}}\exp\l(\frac{\l((k\bmtmu)^{\top}(k\bfC_{\mathcal{M}})^{-1}-\bfal^{\top}\bfCI\r)\bfA^{-1}\l((k\bmtmu)^{\top}(k\bfC_{\mathcal{M}})^{-1}-\bfal^{\top}\bfCI\r)^{\top}}{2}\ldots\r.\nonumber\\
    &\qquad\l.\ldots -\f{\bfal^{\top}\bfCI\bfal+(k\bmtmu)^{\top}(k\bfC_{\mathcal{M}})^{-1}(k\bmtmu)}{2}\r)\nonumber\\
  &=\frac{\s{\det\l(\bfA^{-1}\r)}}{\s{ \det(k\bfC_{\mathcal{M}})}}\exp\l(\frac{\l((k\bmtmu)^{\top}(\bfA-\bfCI)-\bfal^{\top}\l(\bfA-(k\bfC_{\mathcal{M}})^{-1}\r)\r)\bfA^{-1}\l((k\bmtmu)^{\top}(k\bfC_{\mathcal{M}})^{-1}-\bfal^{\top}\bfCI\r)^{\top}}{2}\ldots\r.\nonumber\\
    &\qquad\l.\ldots -\f{\bfal^{\top}\bfCI\bfal+(k\bmtmu)^{\top}(k\bfC_{\mathcal{M}})^{-1}(k\bmtmu)}{2}\r)\nonumber\\
  &{\buildrel (\text{ii}) \over =}\frac{\s{\det\l(\bfA^{-1}\r)}}{\s{ \det(k\bfC_{\mathcal{M}})}}\exp\l(\frac{1}{2}\l((k\bmtmu)^{\top}\bfA\bfA^{-1}(k\bfC_{\mathcal{M}})^{-1}(k\bmtmu)-(k\bmtmu)^{\top}\bfCI\bfA^{-1}(k\bfC_{\mathcal{M}})^{-1}(k\bmtmu)\ldots\r.\r.\nonumber\\
  &\qquad\l.\l.\ldots +\bfal^{\top}\bfA\bfA^{-1}\bfCI\bfal-\bfal^{\top}(k\bfC_{\mathcal{M}})^{-1}\bfA^{-1}\bfCI\bfal\r) -\f{\bfal^{\top}\bfCI\bfal+(k\bmtmu)^{\top}(k\bfC_{\mathcal{M}})^{-1}(k\bmtmu)}{2}\r)\nonumber\\
  &{\buildrel (\text{iii}) \over =}\frac{\s{\det\l(\bfC(\bfC+k\bfC_{\mathcal{M}})^{-1}(k\bfC_{\mathcal{M}})\r)}}{\s{ \det(k\bfC_{\mathcal{M}})}}\exp\l(-\frac{(\bfal+k\bmtmu)^{\top}(\bfC+k\bfC_{\mathcal{M}})^{-1}(\bfal+k\bmtmu)}{2}\r)\nonumber\\
  &{\buildrel (\text{iv}) \over =}\frac{\s{\det(\bfC)}\exp\l(-\frac{(\bfbz+k\bmtmu)^{\top}(\bfC+k\bfC_{\mathcal{M}})^{-1}(\bfbz+k\bmtmu)}{2}\r)}{\s{\det(\bfC+k\bfC_{\mathcal{M}})}}.
\end{align}
Here, in (i), we apply the result $\int_{\Rbb^{n}}\exp(-\frac{1}{2}\bfx^{\top}\bfA\bfx+\boldsymbol{b}^{\top}\bfx+c)~\md\bfx=\s{\det(2\pi\bfA^{-1})}e^{\frac{1}{2}\boldsymbol{b}^{\top}\bfA^{-1}\boldsymbol{b}+c}$; (ii) is due to matrix multiplication distributive and associative properties; in (iii), we use the equality for inverse matrix: $(\bfA^{-1}+\boldsymbol{B}^{-1})^{-1}=\bfA(\bfA+\boldsymbol{B})^{-1}\boldsymbol{B}$, and (iv) is due to the determinant of a product of matrices, i.e $\det(\bfA\boldsymbol{B})=\det(\bfA)\det(\boldsymbol{B})$. Using \eqref{eq:reform_g} and \eqref{eq:E_k_3} together with further simplifications gives us the desired result.

{\apnum{
\section{Proof of Lemma~\ref{lemma:bounded_g_scaled}}
\label{app:bounded_g_scaled}
We inspect each term in the full series definition of 
$ \widetilde{g}(x_{n-l},y_{j-d}, \Delta \tau)
=
\Delta x\,\Delta y\,  \odot  g(x_{n-l},y_{j-d}, \Delta \tau)$.
Recall that $g(x_{n-l},y_{j-d}, \Delta \tau)$,
is given by the infinite series representation in Corollary~\ref{cor:twodis}.
Thus, with the notation $\mathbf{z}_{n-l,j-d} = [x_{n-l}, y_{j-d}]= [(n-l) \Delta x,(j-d)\Delta y]$, folding $\Delta x\,\Delta y$ into the denominator of $g(x_{n-l},y_{j-d}, \Delta \tau)$ yields: $\widetilde{g}(x_{n-l},y_{j-d}, \Delta \tau) =
\sum_{k=0}^{K} \widetilde{g}^{(k)}(x_{n-l},y_{j-d}, \Delta\tau) =  \ldots$
\[
\ldots
\;=\;
\sum_{k=0}^{\infty}
  \frac{(\lambda\,\Delta\tau)^{k}}{k!}
  \cdot
  \underbrace{\frac{\Delta x\Delta y}{2\pi \,\sqrt{\det(\mathbf{C}+k\,\mathbf{C}_{\mathcal{M}})}}}_{A_k}
  \exp\!\Bigl(\theta
    - \tfrac12 \bigl(\boldsymbol{\beta}+\mathbf{z}_{n-l,j-d}+k\,\boldsymbol{\tilde\mu}\bigr)^{\top}
                         \bigl(\mathbf{C}+k\mathbf{C}_{\mathcal{M}}\bigr)^{-1}
                         \bigl(\boldsymbol{\beta}+\mathbf{z}_{n-l,j-d}+k\,\boldsymbol{\tilde\mu}\bigr)
     \Bigr)
  .
\]
For the rest of the proof, let $\hat{C}$  be a generic non-negative bounded constant, which may take different values from line to line.
\begin{itemize}
\item Case $k = 0$: we have $\mathbf{C}+0\,\mathbf{C}_{\mathcal{M}} = \mathbf{C} = (C_3\,h)\,\tilde{\mathbf{C}}$. Its determinant $\det(\mathbf{C}) = (C_3\,h)^2
    \det(\tilde{\mathbf{C}})$, with $\det(\tilde{\mathbf{C}})=\sigma_{\myx}^2\sigma_{\myy}^2(1-\rho^2)$ as given in  \eqref{eq:cov_m}. Thus, $\det(\mathbf{C}) \sim h^2$, so $\sqrt{\det(\mathbf{C})}\sim h$.
    Hence, multiplying by $\Delta x\,\Delta y \sim h^2$ and dividing by $\sqrt{\det(\mathbf{C})}\sim h$ yields the factor $A_0 \sim h$.

    Next, we consider the exponential term. Here, note that $\bigl(\mathbf{C}\bigr)^{-1} = \mathcal{O}(1/h)$ and $\boldsymbol{\beta} = \mathcal{O}(h)$.
    Regarding $\mathbf{z}_{n-l,j-d}$,
    under the Assumption~\ref{as:dis_parameter}, particularly,
    $P_x^{\dagger} = C'_1/h$ and $P_y^{\dagger} = C'_2/h$, we have $[n-l, j-d]$ ranging from  $\mathcal{O}(1)$  to $\mathcal{O}(1/h^2)$. Thus, the components of  $\mathbf{z}_{n-l,j-d}$ vary between $\mathcal{O}(h)$ to $\mathcal{O}(1/h)$.

    If $\mathbf{z}_{n-l,j-d} =  \mathcal{O}(h)$,
    the product $(\boldsymbol{\beta} + \mathbf{z}_{n-l,j-d})^{\top}\bigl(\mathbf{C}\bigr)^{-1}(\boldsymbol{\beta} + \mathbf{z}_{n-l,j-d}) \sim \hat{C}h$, noting $\mathbf{C}$ is a positive semi-definite matrix.
    Recalling $\theta \sim -h$, it follows that the exponential term
    is of order $\mathcal{O}(e^{-h})$.

    If $\mathbf{z}_{n-l,j-d} = \mathcal{O}(1/h)$, the  product $(\ldots)^{\top} \mathbf{C}^{-1}(\ldots) \sim \hat{C}/h^3$,
    resulting in an overall exponential term of $\mathcal{O}(e^{-1/h^3})$.

    Therefore, for the case $k=0$, the term
    $\widetilde{g}^{(0)}(x_{n-l},y_{j-d}, \Delta\tau)$ varies between  $\mathcal{O}(he^{-1/h^3})$
    and $\mathcal{O}(he^{-h})$.

\item Case $k \ge 1$: in $\mathbf{C}+k\mathbf{C}_{\mathcal{M}}$, $\mathbf{C} = \Delta \tau \tilde{\mathbf{C}}$, so $\mathbf{C} = \mathcal{O}(h)$;
    In addition, $k\mathbf{C}_{\mathcal{M}}$ is nonsingular for each $k\ge 1$
    and $k\mathbf{C}_{\mathcal{M}} = \mathcal{O}(1)$. Hence $\sqrt{\det(\mathbf{C}+k\,\mathbf{C}_{\mathcal{M}})}=\mathcal{O}(1)$, leading to the factor
    $A_k \sim h^2$.

    In the exponent, $\bigl(\mathbf{C}+k\mathbf{C}_{\mathcal{M}}\bigr)^{-1} = \mathcal{O}(1)$;
    also $\boldsymbol{\beta} = \mathcal{O}(h)$
    and $k\,\boldsymbol{\tilde{\mu}}$ is a constant vector for each fixed $k$.

    If $\mathbf{z}_{n-l,j-d} =  \mathcal{O}(h)$,
    $(\boldsymbol{\beta} + \mathbf{z}_{n-l,j-d})^{\top}\bigl(\mathbf{C}+k\mathbf{C}_{\mathcal{M}}\bigr)^{-1}(\boldsymbol{\beta} + \mathbf{z}_{n-l,j-d}) \sim \hat{C}h^2$,
    resulting in an overall exponential term of order $\mathcal{O}(e^{-h})$,
    noting $\theta\sim -h$.

    If $\mathbf{z}_{n-l,j-d} \sim \mathcal{O}(1/h)$, $(\ldots)^{\top}\bigl(\mathbf{C}+k\mathbf{C}_{\mathcal{M}}\bigr)^{-1}(\ldots) \sim \hat{C}/h^2$,
    so the exponential term is of order $\mathcal{O}(e^{-1/h^2})$.

    Meanwhile, $(\lambda\,\Delta\tau)^{k}\sim h^k$.
    So for each $k\ge1$, the term
    $\widetilde{g}^{(k)}(x_{n-l},y_{j-d}, \Delta\tau)$ varies between $\mathcal{O}\!\bigl(h^{2+k}e^{-1/h^2}\bigr)$ and $\mathcal{O}\!\bigl(h^{2+k}e^{-h}\bigr)$.
    Thus, the dominating term of $\sum_{k=1}^{\infty}\mathcal{O}\!\bigl(h^{2+k}\bigr)$ is $\mathcal{O}(h^{3}e^{-1/h^2})$   and  $\mathcal{O}(h^{3}e^{-h})$, respectively.
\end{itemize}
Combining the results yields $\widetilde{g}_{n-l,j-d}(\Delta\tau) = \mathcal{O}(he^{-h})$,
which simplifies to $\mathcal{O}(h)$. In conclusion, in any case, as $h\to0$, $\widetilde{g}(x_{n-l},y_{j-d}, \Delta\tau)\to0$, hence, it stays bounded.
It is trivial that $\widetilde{g}_{n-l, j-d}(\Delta\tau, K)$ is also bounded.
}}

\end{document}